\newtheorem{theorem}{Theorem}[section]
\newtheorem{lemma}{Lemma}[section]
\newtheorem{proposition}{Proposition}[section]
\newtheorem{definition}{Definition}[section]
\newtheorem{corollary}{Corollary}[section]
\newtheorem{example}{Example}[section]
\newcommand{\be}{\begin{equation}}
\newcommand{\ee}{\end{equation}}
\newcommand{\req}[1]{(\ref{#1})}
\newcommand{\im}{{\rm im~  }}
\begin{document}

\title{Flows and Decompositions of Games: \\ Harmonic and Potential Games}

\author{Ozan Candogan, Ishai Menache, Asuman Ozdaglar and Pablo
  A. Parrilo\footnote{All authors are with the Laboratory for
    Information and Decision Systems (LIDS), Massachusetts Institute
    of Technology. E-mails: \texttt{\{candogan, ishai, asuman,
      parrilo\}@mit.edu}.  This research is supported in part by the
    National Science Foundation grants DMI-0545910 and ECCS-0621922,
    MURI AFOSR grant FA9550-06-1-0303, NSF FRG 0757207, 
    by the DARPA ITMANET program, and by a Marie Curie International
    Fellowship within the 7th European Community Framework
    Programme.}}

\date{}

\maketitle

\thispagestyle{empty}

\begin{abstract}
 
In this paper we introduce a novel flow representation for finite
games in strategic form. This representation allows us to develop a
canonical direct sum decomposition of an arbitrary game into three
components, which we refer to as the \emph{potential}, \emph{harmonic}
and \emph{nonstrategic} components. We analyze natural classes of
games that are induced by this decomposition, and
 in particular, focus on games with no harmonic component and games with no potential
component. We show that the first class corresponds to the well-known
\emph{potential games}.
We refer to the second class of games as \emph{harmonic games}, and study the structural and equilibrium properties of this new class of games.

Intuitively, the   potential component of a game captures interactions that can equivalently be represented as a common interest game, while the
harmonic part represents the conflicts between the interests of the players. 
We
make this intuition precise, by studying the properties of these two
classes, and show that indeed they have quite distinct and remarkable
characteristics. For instance, while finite potential games always
have pure Nash equilibria, harmonic games generically never
do. Moreover, we show that the nonstrategic component does not affect
the equilibria of a game, but plays a fundamental role in their
efficiency properties, thus decoupling the location of equilibria and
their payoff-related properties. Exploiting the properties of the
decomposition framework, we obtain explicit expressions for the
projections of games onto the subspaces of potential and harmonic
games.  This enables an extension of the properties of potential and
harmonic games to ``nearby'' games. We exemplify this point by showing
that the set of approximate equilibria of an arbitrary game can be
characterized through the equilibria of its projection onto the set of
potential games.

\end{abstract}

\vskip 1pc

\textbf{Keywords:} decomposition of games, potential games, harmonic
games, strategic equivalence.

\newpage

\setcounter{page}{1}

\section{Introduction} \label{se:intro}

Potential games play an important role in game-theoretic analysis due
to their desirable static properties (e.g., existence of a pure
strategy Nash equilibrium) and tractable dynamics (e.g.,
convergence of simple user dynamics to a Nash equilibrium); see
\cite{monderer1996pg, monderer1996fpp, Neyman:1997p1698}. However,
many multi-agent strategic interactions in economics and engineering
cannot be modeled as a potential game.

This paper provides a novel flow representation of the preference
structure in strategic-form finite games, which allows for  delineating
the fundamental characteristics in preferences that lead to potential
games. This representation enables us to develop a canonical orthogonal
decomposition of an arbitrary game into a potential component, a
harmonic component, and a nonstrategic component, each with its
distinct properties. The decomposition can be used to define the
``distance" of an arbitrary game to the set of potential games. We use this fact
to describe the approximate equilibria of the original game in terms
of the equilibria of the closest potential game.
 
The starting point is to associate to a given finite game a \emph{game
graph}, where the set of nodes corresponds to the strategy profiles and
the edges represent the ``comparable strategy profiles'' i.e.,
strategy profiles that differ in the strategy of a single player. The
utility differences for the deviating players along the edges define a
flow on the game graph. Although this graph contains strictly less
information than the original description of the game in terms of
utility functions, all relevant strategic aspects (e.g., equilibria)
are captured.

Our first result provides a \emph{canonical decomposition} of an
arbitrary game using tools from the study of flows on graphs (which
can be viewed as combinatorial analogues of the study of vector
fields). In particular, we use the \emph{Helmholtz decomposition
theorem} (e.g., \cite{jiang2008lrc}), which enables the decomposition
of a flow on a graph into three components: globally consistent,
locally consistent (but globally inconsistent), and locally
inconsistent component (see Theorem~\ref{thm:hodge}). The globally
consistent component represents a gradient flow while the locally
consistent flow corresponds to flows around global cycles.  The
locally inconsistent component represents local cycles (or
circulations) around 3-cliques of the graph.  

Our game decomposition has three components: \emph{nonstrategic},
\emph{potential} and \emph{harmonic}.  The first component represents
the ``nonstrategic interactions'' in a game. Consider two games in
which, given the strategies of the other players, each player's
utility function differs by an additive constant. These two games have
the same utility differences, and therefore they have the same flow
representation. Moreover, since equilibria are defined in terms of
utility differences, the two games have the same equilibrium set. We
refer to such games as \emph{strategically equivalent}. We normalize
the utilities, and refer to the utility differences between a game and
its normalization as the nonstrategic component of the game.  Our next
step is to remove the nonstrategic component and apply the Helmholtz
decomposition to the remainder. The flow representation of a game
defined in terms of utility functions (as opposed to preferences) does
not exhibit local cycles, therefore the Helmholtz decomposition yields
the two remaining components of a game: the \emph{potential component}
(gradient flow) and the \emph{harmonic component} (global cycles). The
decomposition result is particularly insightful for bimatrix games
(i.e., finite games with two players, see Section~\ref{se:bimatrix}),
where the potential component represents the ``team part'' of the
utilities (suitably perturbed to capture the utility matrix
differences), and the harmonic component corresponds to a zero-sum
game.

The canonical decomposition we introduce  is illustrated in the
following example.

\begin{example}[Road-sharing game]  
\label{ex:drivergame}
Consider a three-player game, where each player has to choose one of the
two roads $\{0,1\}$. We denote the players by $d_1$, $d_2$ and $s$.
The player $s$ tries to avoid sharing the road with other players: its
payoff decreases by $2$ with each player $d_1$ and $d_2$ who shares
the same road with it.  The player $d_1$ receives a payoff $-1$, if
$d_2$ shares the road with it and $0$ otherwise. The payoff of $d_2$
is equal to negative of the payoff of $d_1$, i.e.,
$u^{d_1}+u^{d_2}=0$.  Intuitively, player $d_1$ tries to avoid player
$d_2$, whereas player $d_2$ wants to use the same road with $d_1$.

In Figure~\ref{fig:drunk}a we present the flow representation for this
game (described in detail in Section~\ref{subsec:gamesNflows}), where
the nonstrategic component has been removed. Figures~\ref{fig:drunk}b
and \ref{fig:drunk}c show the decomposition of this flow into its
potential and harmonic components. In the figure, each tuple $(a,b,c)$
denotes a strategy profile, where player $s$ uses strategy $a$ and
players $d_1$ and $d_2$ use strategies $b$ and $c$ respectively.

\begin{figure}
\begin{center}
\subfloat[{Flow representation of the road-sharing game.}]{
\label{fig:3DGameA}
\scalebox{0.8}{
\begin{tikzpicture}[-,>=stealth',shorten >=1pt,auto,node distance=2.5cm]
  \tikzstyle{every state}=[fill=none,draw=black,text=black]
  \node                (000)                     {$(0,0,0)$};
  \node                (100) [above of=000]       {$(1,0,0)$};
  \node                (110) [above right of=100,node distance=2cm]       {$(1,1,0)$};
   \node                (010) [below of=110]       {$(0,1,0)$};
   \node                (111) [right of=110,node distance=5cm]       {$(1,1,1)$};
  \node                (101) [right of=100,node distance=5cm]       {$(1,0,1)$};
  \node                (011) [right of=010,node distance=5cm]       {$(0,1,1)$};
  \node                (001) [right of=000,node distance=5cm]       {$(0,0,1)$};
  
  \path
 (000) edge [-> ]            node {4} (100)
 (000) edge [-> ]            node {1} (010)
 (100) edge [-> ]            node {1} (110)
 (110) edge [-> ]            node {1} (111)
 (111) edge [-> ]            node {1} (101)
 (101) edge [-> ]            node {1} (100)
 
  (111) edge [-> ]            node {4} (011)
  (011) edge [-> ]            node {1} (001)
  (001) edge [-> ]            node {1} (000)
  (010) edge [-> ]            node {1} (011) ;
  \end{tikzpicture}
  }
  }
\\
\subfloat[{Potential Component.}]{
\label{fig:3DGameB}
\scalebox{0.8}{
\begin{tikzpicture}[scale=.8,-,>=stealth',shorten >=1pt,auto,node distance=2.5cm]
  \tikzstyle{every state}=[fill=none,draw=black,text=black]
  \node                (000)                     {$(0,0,0)$};
  \node                (100) [above of=000]       {$(1,0,0)$};
  \node                (110) [above right of=100,node distance=2cm]       {$(1,1,0)$};
   \node                (010) [below of=110]       {$(0,1,0)$};
   \node                (111) [right of=110,node distance=5cm]       {$(1,1,1)$};
  \node                (101) [right of=100,node distance=5cm]       {$(1,0,1)$};
  \node                (011) [right of=010,node distance=5cm]       {$(0,1,1)$};
  \node                (001) [right of=000,node distance=5cm]       {$(0,0,1)$};
  
  \path
 (000) edge [-> ]            node {2} (100)
 (000) edge [->]            node {1} (010)
  (000) edge [->]            node {1} (001)
   (111) edge [-> ]            node {1} (110)
 (111) edge [->]            node {1} (101)
  (111) edge [->]            node {2} (011)
  
   (101) edge [-> ]            node {1} (100)
 (110) edge [-> ]            node {1} (100)
 (001) edge [-> ]            node {1} (011)
  (010) edge [->]            node {1} (011)
  ;
  \end{tikzpicture}
  }  
  }
  \qquad
\subfloat[{Harmonic Component.}]{
  \scalebox{0.8}{
\label{fig:3DGameC}
\begin{tikzpicture}[scale=.8,-,>=stealth',shorten >=1pt,auto,node distance=2.5cm]
  \tikzstyle{every state}=[fill=none,draw=black,text=black]
  \node                (000)                     {$(0,0,0)$};
  \node                (100) [above of=000]       {$(1,0,0)$};
  \node                (110) [above right of=100,node distance=2cm]       {$(1,1,0)$};
   \node                (010) [below of=110]       {$(0,1,0)$};
   \node                (111) [right of=110,node distance=5cm]       {$(1,1,1)$};
  \node                (101) [right of=100,node distance=5cm]       {$(1,0,1)$};
  \node                (011) [right of=010,node distance=5cm]       {$(0,1,1)$};
  \node                (001) [right of=000,node distance=5cm]       {$(0,0,1)$};
  
  \path
 (000) edge [-> ]            node {2} (100)
  (100) edge [-> ]            node {2} (110)
  (110) edge [-> ]            node {2} (111)
    (111) edge [-> ]            node {2} (011)  
  (011) edge [-> ]            node {2} (001)
    (001) edge [-> ]            node {2} (000)  
 
  ;
  \end{tikzpicture}
  }
  }
\end{center}
\caption{Potential-harmonic decomposition of the road-sharing game. 
An arrow between two strategy profiles, indicates the improvement direction in the payoff of the player who changes its strategy,   and the associated number quantifies the improvement in its payoff. 
}
\label{fig:drunk}
\end{figure}
\end{example}

These components
induce a direct sum decomposition of the space of games into three respective
subspaces, which we refer to as the \emph{nonstrategic},
\emph{potential} and \emph{harmonic} subspaces, denoted by $\cal N$,
$\cal P$, and $\cal H$, respectively. 
We use these subspaces to define classes of games with distinct equilibrium properties.
We establish that the set of
potential games coincides with the direct sum of the subspaces $\cal
P$ and $\cal N$, i.e., potential games are those with no harmonic
component. 
 Similarly, we define a new class of  games in
which the potential component vanishes as \emph{harmonic
games}. 
The classical rock-paper-scissors and  matching pennies
games are examples of harmonic games.
The decomposition then has the following structure:
\[
\mathcal{P} \quad \: \oplus \: \quad \overbrace{ 
\mathcal{N} \makebox[0pt][r]{$\underbrace{\phantom{\mathcal{P} \quad \: \oplus \: \quad \mathcal{N}}}_{\text{Potential games}}$} 
\quad \: \oplus \: \quad \mathcal{H}}^{\text{Harmonic games}}.
\]  

Our second set of results establishes   properties of potential and harmonic games and examines how the nonstrategic component of a game affects the efficiency of equilibria.
Harmonic games can be characterized by the existence of
improvement cycles, i.e., cycles in the game graph, where at each step
the player that changes its action improves its payoffs.
We show that harmonic games generically do not have pure Nash
equilibria. Interestingly, for the special case when the number of
strategies of each player is the same, a harmonic game satisfies a
``multi-player zero-sum property'' (i.e., the sum of utilities of all
players is equal to zero at all strategy profiles).  
 We also study the mixed Nash and
correlated equilibria of harmonic games. We show that the uniformly
mixed strategy profile (see Definition \ref{def:unifMixed})  is always a mixed Nash equilibrium and if there are  two
players in the game, the set of mixed Nash equilibria generically coincides with
the set of correlated equilibria. We finally focus on the nonstrategic
component of a game. As discussed above, the nonstrategic component
does not affect the equilibrium set. Using this property, we show that
by changing the nonstrategic component of a game, it is possible to
make the set of Nash equilibria coincide with the set of Pareto
optimal strategy profiles in a game.

Our third set of results focuses on the \emph{projection of a game
  onto its respective components}. We first define a natural inner product
and show that under this inner product the components in our
decomposition are orthogonal. We further provide explicit expressions
for the closest potential and harmonic games to a game with respect to
the norm induced by the inner product. We use the distance of a game
to its closest potential game to characterize the approximate
equilibrium set in terms of the equilibria of the potential game.

 The decomposition framework in this paper leads to the identification of subspaces of games with distinct and tractable equilibrium properties. 
Understanding the structural properties of these subspaces and the classes of games they induce, provides new insights and tools   for analyzing the static and dynamical properties of general noncooperative games; further implications are outlined in Section \ref{se:conclusions}.

\paragraph{Related literature}
Besides the works already mentioned, our paper is also related to
several papers in the cooperative and noncooperative game theory
literature:
\begin{itemize}

\item The idea of decomposing a game
  (using different approaches) into simpler games which admit more
  tractable equilibrium analysis  appeared even in the early works in the cooperative
  game theory literature. In  \cite{VonNeumann:1947p3802},
  the authors propose to decompose games with large number of players
  into games with fewer players. In
  \cite{Marinacci:1996p1157,Gilboa:1995p5309,Shapley:1997p5242},   a
  different approach is followed: the authors identify cooperative games
  through the games' value functions (see \cite{VonNeumann:1947p3802})
  and obtain decompositions of the value function into simpler
  functions. By defining the component games using the simpler value
  functions, they obtain decompositions of games. In this approach,
  the set of players is not made smaller or larger by the
  decomposition but the component games have    simpler structure.
  Another method for decomposing the space of cooperative games
  appeared in
  \cite{Kleinberg:1985p1179,Kleinberg:1986p1188,KleinbergJeffrey:1985p4021}.
  In these papers, the algebraic properties of the space of games and
  the properties of the nullspace of the Shapley value operator (see
  \cite{Shapley:1997p5242}) and its orthogonal complement are
  exploited to decompose games. This approach does not necessarily
  simplify the analysis of games but it leads to an alternative
  expression for the Shapley value
  \cite{KleinbergJeffrey:1985p4021}. Our work is on decomposition of
  noncooperative games, and different from the above references since we
  explicitly exploit the properties of noncooperative games in our
  framework.


\item In the context of noncooperative game theory, a decomposition
  for games in normal form appeared in \cite{sandholm2008decompositions}.
In this paper, the author  defines a component game for
 each subset of players and obtains a decomposition of normal form
 games with $M$ players to $2^M$ component games.  This method does
 not provide any insights about the properties of the component games,
 but  yields alternative tests to check whether a game is a
 potential game or not.  We note that our decomposition approach is
 different than this work in the properties of the component
 games. In particular, using the global preference structure in games,
 our approach yields decomposition of games to three components with
 distinct equilibrium properties, and these properties can be
 exploited to gain insights about the static and dynamic features of
 the original game.

\item Related ideas of representing finite strategic form games as
  graphs previously appeared in the literature to study different
  solution concepts in normal form games \cite{Goemans:2005p4142,
    christodoulou2006caa}. In these references, the authors focus on
  the restriction of the game graph to best-reply
  paths and analyze   the outcomes of  games using this subgraph.

\item In our work, the graph representation of games and the flows
  defined on this graph lead to a natural equivalence relation.
  Related notions of strategic equivalence are employed in the game
  theory literature to generalize the desirable  static and dynamic properties of games to
  their equivalence classes
  \cite{Moulin:1978p2968,Rosenthal:1974p3448,
    Morris:2004p2154,voorneveld2000brp,Germano:2006p3405,Hammond:2005p3676,Hofbauer:2005p4406,Kannan:2010p4099,Mertens:2004p4202,
    Anonymous:2001p3670}.  In \cite{Moulin:1978p2968}, the authors refer
  to games which have the same better-response correspondence as
  equivalent games and study the equilibrium properties of games which
  are equivalent to zero-sum games.  In
  \cite{Hammond:2005p3676,Hofbauer:2005p4406}, the dynamic and static
  properties of certain classes of bimatrix games are generalized to
  their equivalence classes.  Using the best-response correspondence
  instead of the better-response correspondence, the papers
  \cite{Rosenthal:1974p3448, Morris:2004p2154,voorneveld2000brp}
  define different equivalence classes of games.  We note that the
  notion of strategic equivalence used in our paper  implies 
  some of the equivalence notions mentioned above.  However, unlike
  these papers, our notion of strategic equivalence leads to a
  canonical decomposition of the space of games, which is then used to extend 
   the desirable
  properties of potential games to ``close'' games that are not strategically
  equivalent.

\item Despite the fact that harmonic games were not defined in the
  literature before (and thus, the term ``harmonic'' does not appear
  explicitly as such), specific instances of harmonic games were
  studied in different contexts. In \cite{Hofbauer:2000p2212}, the authors
  study dynamics in ``cyclic games'' and obtain results about a class
  of harmonic games which generalize the matching pennies game.  A
  parametrized version of Dawkins' battle of the sexes game, which is
  a harmonic game under certain conditions, is studied in
  \cite{MAYNARD:2010p2640}. Other examples of harmonic games have also
  appeared in the buyer/seller game of \cite{Friedman:1991p2328} and
the  crime deterrence game of \cite{Cressman:1998p2273}.

\end{itemize}


\paragraph{Structure of the paper} 
The remainder of this paper is organized as follows. In
Section~\ref{se:PotGames}, we present the relevant game theoretic
background and provide a representation of games in terms of graph
flows. In Section~\ref{se:Hodge}, we state the Helmholtz decomposition
theorem which provides the means of decomposing a flow into orthogonal
components. In Section~\ref{se:canonicalRep}, we use this machinery to
obtain a canonical decomposition of the space of games. We introduce  in Section~\ref{se:decProp}
natural classes of games, namely potential and harmonic games, which are
induced by this decomposition and describe
the equilibrium properties thereof. In
Section~\ref{se:projection}, we define an inner product for the space
of games, under which the components of  games turn out to be
orthogonal. Using this inner product and our decomposition framework
we propose a method for projecting a given game to the spaces of
potential and harmonic games. We then apply the projection to study the
equilibrium properties of ``near-potential'' games. We close in
Section~\ref{se:conclusions} with concluding remarks and directions
for future work.

\section{Game-Theoretic Background} 
\label{se:PotGames}

In this section, we describe the required game-theoretic
background. Notation and basic definitions are given in
Section~\ref{subsec:prelim}. In Section~\ref{subsec:gamesNflows}, we
provide an alternative representation of games in terms of flows on
graphs. This representation is used in the rest of the paper to
analyze finite games.

\subsection{Preliminaries} 
\label{subsec:prelim}

A (noncooperative) \emph{strategic-form finite game} consists of:

\begin{itemize}

\item A finite set of players, denoted ${\cal M}=\{1, \ldots, M\}$.

\item {Strategy spaces:} A finite set of strategies (or actions)
  $E^m$, for every $m\in {\cal M}$. The joint strategy space is
  denoted by $E=\prod_{m\in{\cal M}} E^m$.
\item {Utility functions:} $u^m:E\rightarrow {\mathbb R}$, $m\in {\cal M}$.
\end{itemize}
A (strategic-form) game instance is accordingly given by the tuple
$\langle {\cal M},\{E^m\}_{m\in{\cal M}},\{u^m \}_{m\in{\cal M}}
\rangle$, which for notational convenience  will often be abbreviated to
$\langle {\cal M},\{E^m\},\{u^m \}\rangle$.

We use the notation ${\bf p}^m\in E^m$ for a strategy of player $m$.
A collection of players' strategies is given by ${\bf p}=\{{\bf
  p}^m\}_{m\in{\cal M}}$ and is referred to as a {strategy profile}. A
collection of strategies for all players but the $m$-th one is denoted
by ${\bf p}^{-m} \in E^{-m}$.  We use $h_m=|E^m|$ for the cardinality
of the strategy space of player $m$, and $|E|=\prod_{m=1}^M h_m$ for
the overall cardinality of the strategy space. As an alternative
representation, we shall sometimes enumerate the actions of the
players, so that $E^m = \{1,\ldots, h_m\}$.

The basic solution concept in a noncooperative game is that of a
\emph{Nash Equilibrium} (NE).  A (pure) Nash equilibrium is a strategy
profile from which no player can unilaterally deviate and improve its
payoff. Formally, a strategy profile ${ \bf p} \triangleq \{ { \bf
  p}^1,\dots,{ \bf p }^M \}$ is a Nash equilibrium if
\begin{equation} 
\label{eq:nash_basic}
u^m({\bf p }^m,{ \bf p}^{-m}) \geq u^m({\bf q }^m,{ \bf p}^{-m}),
\quad \mbox{for every ${{ \bf q }^m \in E^m} $ and } m \in
\mathcal{M}.
\end{equation}

To address strategy profiles that are approximately a Nash
equilibrium, we introduce the concept of $\epsilon$-equilibrium. A
strategy profile ${ \bf p} \triangleq \{{ { \bf p}}^1,\dots,{{ \bf p}
}^M\}$ is an $\epsilon$-equilibrium if
\begin{equation} 
\label{eq:epsNash_basic}
u^m({\bf p }^m,{ \bf p}^{-m})\geq u^m({ \bf q}^m,{ { \bf
    p}}^{-m})-\epsilon \quad \mbox{for every ${ \bf q}^m \in E^m$ and
} m \in \mathcal{M}.
\end{equation}
Note that a Nash equilibrium is an $\epsilon$-equilibrium with $\epsilon=0$.

The next lemma shows that the $\epsilon$-equilibria of two games can
be related in terms of the differences in utilities.
\begin{lemma} 
\label{lemma:epsEqPre}
Consider two games ${\cal G}$ and $\hat{\cal G}$, which differ only in
their utility functions, i.e., ${\cal G}=\langle {\cal M},\{E^m\},\{u^m
\}\rangle$ and $\hat{\cal G}=\langle {\cal
  M},\{E^m\},\{\hat{u}^m\}\rangle$.  Assume that $|u^m({\bf
  p})-\hat{u}^m({\bf p})| \leq \epsilon_0$ for every $m\in {\cal M}$
and ${\bf p}\in E$.  Then, every $\epsilon_1$-equilibrium of
$\hat{\cal G}$ is an $\epsilon$-equilibrium of $\cal G$ for some
$\epsilon \leq 2\epsilon_0 +\epsilon_1$ (and viceversa).
\end{lemma}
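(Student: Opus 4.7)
The plan is a direct three-line chain of inequalities exploiting the triangle-inequality-style bound on the utility differences. Specifically, I would start from the definition of an $\epsilon_1$-equilibrium $\mathbf{p}$ in $\hat{\mathcal{G}}$, namely
\[
\hat{u}^m(\mathbf{p}^m,\mathbf{p}^{-m}) \geq \hat{u}^m(\mathbf{q}^m,\mathbf{p}^{-m}) - \epsilon_1
\]
for every player $m$ and deviation $\mathbf{q}^m \in E^m$, and then replace the $\hat{u}^m$ terms by $u^m$ terms using the pointwise bound $|u^m(\mathbf{p})-\hat{u}^m(\mathbf{p})|\leq \epsilon_0$.

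The key step is to note that the hypothesis gives two useful one-sided inequalities: $u^m(\mathbf{p}^m,\mathbf{p}^{-m}) \geq \hat{u}^m(\mathbf{p}^m,\mathbf{p}^{-m}) - \epsilon_0$ on the left-hand side, and $\hat{u}^m(\mathbf{q}^m,\mathbf{p}^{-m}) \geq u^m(\mathbf{q}^m,\mathbf{p}^{-m}) - \epsilon_0$ on the right-hand side. Chaining these with the $\epsilon_1$-equilibrium condition yields
\[
u^m(\mathbf{p}^m,\mathbf{p}^{-m}) \;\geq\; \hat{u}^m(\mathbf{p}^m,\mathbf{p}^{-m}) - \epsilon_0 \;\geq\; \hat{u}^m(\mathbf{q}^m,\mathbf{p}^{-m}) - \epsilon_0 - \epsilon_1 \;\geq\; u^m(\mathbf{q}^m,\mathbf{p}^{-m}) - 2\epsilon_0 - \epsilon_1,
\]
which is precisely the defining inequality for an $\epsilon$-equilibrium of $\mathcal{G}$ with $\epsilon = 2\epsilon_0+\epsilon_1$.

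The converse direction (``and viceversa'') is obtained by swapping the roles of $\mathcal{G}$ and $\hat{\mathcal{G}}$; the hypothesis on $|u^m-\hat{u}^m|$ is symmetric in the two games, so exactly the same chain of inequalities applies.

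There is essentially no obstacle here; the statement is a routine perturbation bound and the only mild care required is to apply the $\epsilon_0$ bound in the correct direction on each side of the inequality (once with $u^m\geq \hat{u}^m-\epsilon_0$ on the payoff at $\mathbf{p}^m$, and once with $\hat{u}^m\geq u^m-\epsilon_0$ on the payoff at the deviation $\mathbf{q}^m$), so that the two $\epsilon_0$ errors add rather than cancel. No further game-theoretic machinery, and in particular none of the flow/decomposition apparatus developed later, is needed for this lemma.
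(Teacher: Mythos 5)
Your proof is correct and follows essentially the same argument as the paper: the paper writes the bound as a single chain $u^k(\mathbf{q})-u^k(\mathbf{p}) \leq u^k(\mathbf{q})-u^k(\mathbf{p})-(\hat{u}^k(\mathbf{q})-\hat{u}^k(\mathbf{p}))+\epsilon_1 \leq 2\epsilon_0+\epsilon_1$, which is just a rearrangement of your three-step inequality. Nothing is missing.
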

\begin{proof}
Let ${\bf p}$ be an $\epsilon_1$-equilibrium of $\hat{\cal G}$ and let
${\bf q} \in E$ be a strategy profile with ${\bf q}^k \neq {\bf p}^k$
for some $k \in {\cal{M}}$, and ${\bf q}^m = {\bf p}^m$ for every $m
\in {\cal{M}} \setminus \{k\} $.  Then,
$${u}^{k}({\bf q}) -{u}^{k}({\bf p}) \leq {u}^{k}({\bf q})
-{u}^{k}({\bf p}) - (\hat{u}^{k}({\bf q}) - \hat{u}^{k}({\bf p}) ) +
\epsilon_1 \leq 2 \epsilon_0 + \epsilon_1,$$ where the first
inequality follows since $\bf p$ is an $\epsilon_1$-equilibrium of $
\hat{\cal G}$, hence $\hat{u}^{k}({\bf p})-\hat{u}^{k}({\bf q})\geq
-\epsilon_1$, and the second inequality follows by the lemma's
assumption.
\end{proof}

We turn now to describe a particular class of games that is central in
this paper, the class of potential games \cite{monderer1996pg}.
\begin{definition}[Potential Game] 
\label{def:ExactPot}
A potential game is a noncooperative game for which there exists a
function $\phi:E \rightarrow \mathbb{R}$ satisfying
\begin{equation} \label{eq:condForExact}
\phi({ \bf p}^m,{ \bf p}^{-m})-  \phi({ \bf q}^m,{ \bf p}^{-m})=
u^m({ \bf p}^m,{ \bf p}^{-m})-u^m({ \bf q}^m,{ \bf p}^{-m}),
\end{equation}
for every $m\in{\cal M}$, ${ \bf p}^m, { \bf q}^m \in E^m $, ${ \bf
  p}^{-m} \in E^{-m}$. The function $\phi$ is referred to as a
\emph{potential} function of the game.
\end{definition}

Potential games can be regarded as games in which the interests of the
players are aligned with a global potential function $\phi$.  Games
that obey condition \eqref{eq:condForExact} are also known in the
literature as \emph{exact} potential games, to distinguish them from
other classes of games that relate to a potential function (in a
different manner).  For simplicity of exposition, we will often write
`potential games' when referring to exact potential games.  Potential
games have desirable equilibrium and dynamic properties as summarized
in Section \ref{se:dynamicsNpotential}.

\subsection{Games and Flows on Graphs} 
\label{subsec:gamesNflows}

In noncooperative games, the utility functions capture the preferences
of agents at each strategy profile.  Specifically, the payoff
difference $[u^m({\bf p}^m,{\bf p}^{-m})-u^m({\bf q}^m, {\bf
    p}^{-m})]$ quantifies by how much player $m$ prefers strategy
${\bf p}^m$ over strategy ${\bf q}^m$ (given that others play ${\bf
  p}^{-m}$). Note that a Nash equilibrium is defined in terms of
payoff differences, suggesting that actual payoffs in the game are not
required for the identification of equilibria, as long as the payoff
differences are well defined.

A pair of strategy profiles that differ only in the strategy of a
single player will be henceforth referred to as \emph{comparable
  strategy profiles}. We denote the set (of pairs) of comparable
strategy profiles by $A \subset E\times E$, i.e., ${\bf p}, {\bf q}$
are comparable if and only if $({\bf p},{\bf q})\in A$.  A pair of
strategy profiles that differ only in the strategy of player $m$ is
called a pair of \emph{$m$-comparable strategy profiles}. The set of
pairs of $m$-comparable strategies is denoted by $A^m \subset E\times
E$. Clearly, $\cup_m A^m = A$, where $A^m \cap A^k= \emptyset$ for any
two different players $m$ and $k$.

For any given $m$-comparable strategy profiles $\bf p$ and $\bf q$,
the difference $[u^m({\bf p})-u^m({\bf q})]$ would be henceforth
identified as their \emph{pairwise comparison}.  For any game, we
define the \emph{pairwise comparison} function $X: E\times E
\rightarrow \mathbb{R}$ as follows
\begin{equation} 
X({\bf p},{\bf q})= \left\{
\begin{aligned}
u^m({\bf q})-u^m({\bf p}) & \quad\quad \mbox{  if $({\bf p},{\bf q})$ are
$m$-comparable for some $m\in {\cal M}$ } \\
0 & \quad\quad\mbox{ otherwise.}
\end{aligned} \right.
\label{eq:pairwiseCompsGame}
\end{equation}
In view of Definition~\ref{def:ExactPot}, a game is an exact potential
game if and only if there exists a function $\phi:E \rightarrow \mathbb{R}$
such that $\phi({\bf q})-\phi({\bf p})=X({\bf p},{\bf q})$ for any
comparable strategy profiles ${\bf p}$ and ${\bf q}$. Note that the
pairwise comparisons are uniquely defined for any given game. However,
the converse is not true in the sense that there are infinitely many
games that correspond to given pairwise comparisons. We exemplify this
below.

\begin{example}  
\label{ex:strVsNonStr}
Consider the payoff matrices of the two-player games in Tables
\ref{tab:matchingPenniesL} and \ref{tab:matchingPenniesR}. For a given
row and column, the first number denotes the payoff of the row player,
and the second number denotes the payoff of the column player.  The
game in Table~\ref{tab:matchingPenniesL} is the 
``battle of the sexes" game, and the game in
\ref{tab:matchingPenniesR} is a variation in which the payoff of the
row player is increased by $1$ if the column player plays $O$.
\begin{table}[H]
\centering
\subfloat[Battle of the sexes]{ \label{tab:matchingPenniesL}
\begin{tabular}{ | c | c | c |}
\hline
 & O & F  \\ \hline
O & 3, 2 &  0, 0  \\ \hline
F &  0, 0 & 2, 3 \\ \hline
\end{tabular}
}
\quad\quad\quad
\subfloat[Modified battle of the sexes]{ \label{tab:matchingPenniesR}
\begin{tabular}{ | c | c | c |}
\hline
 & O & F  \\ \hline
O & 4, 2 &  0, 0  \\ \hline
F &  1, 0 & 2, 3 \\ \hline
\end{tabular}
}
\label{tab:MatchingPennies}
\end{table}
It is easy to see that these two games have the same pairwise
comparisons, which will lead to identical equilibria for the two
games: $(O,O)$ and $(F,F)$.  It is only the actual equilibrium payoffs
that would differ. In particular, in the equilibrium $(O,O)$, the
payoff of the row player is increased by $1$.
\end{example}
The usual solution concepts in games (e.g., Nash, mixed Nash,
correlated equilibria) are defined in terms of pairwise comparisons
only.  Games with identical pairwise comparisons share the same
equilibrium sets. Thus, we refer to games with identical pairwise
comparisons as \emph{strategically equivalent games}.

By employing the notion of pairwise comparisons, we can concisely
represent any strategic-form game in terms of a \emph{flow} in a
graph. We recall this notion next. Let $G=(N,L)$ be an undirected
graph, with set of nodes $N$ and set of links $L$.  An \emph{edge
  flow} (or just \emph{flow}) on this graph is a function $Y:N \times
N \rightarrow \mathbb{R}$ such that $Y({\bf p},{ \bf q})=-Y({\bf q},{
  \bf p})$ and $Y({\bf p},{ \bf q})=0$ for $({\bf p},{ \bf q})\notin
L$ \cite{jiang2008lrc,bertsimas1998ilo}.  Note that the flow conservation
equations are not enforced under this general definition.

Given a game $\mathcal{G}$, we define a graph where each node
corresponds to a strategy profile, and each edge connects two
comparable strategy profiles.  This undirected graph is referred to as
the \emph{game graph} and is denoted by $G(\mathcal{G}) \triangleq (E,A)$,
where $E$ and $A$ are the strategy profiles and pairs of comparable
strategy profiles defined above, respectively. Notice that, by
definition, the graph $G(\mathcal{G})$ has the structure of a direct
product of $M$ cliques (one per player), with clique $m$ having $h_m$
vertices. The pairwise comparison function $X:E \times E \rightarrow
\mathbb{R}$ defines a {flow} on $G(\mathcal{G})$, as it satisfies
$X({\bf p},{ \bf q})=-X({\bf q},{ \bf p})$ and $X({\bf p},{ \bf q})=0$
for $({\bf p},{ \bf q})\notin A$. This flow may thus serve as an
equivalent representation of any game (up to a ``non-strategic"
component). It follows directly from the statements above that two
games are strategically equivalent if and only if they have the same
flow representation and game graph.

Two examples of game graph representations are given below.
\begin{example}
\label{ex:ex1gamegraph}
Consider again the ``battle of the sexes'' game from
Example~\ref{ex:strVsNonStr}. The game graph has four
vertices, corresponding to the direct product of two 2-cliques, and is
presented in Figure~\ref{fig:bosex}.

\begin{figure}[H]
\begin{center}
\begin{tikzpicture}[-,>=stealth',shorten >=1pt,auto,node distance=2cm]
  \tikzstyle{every state}=[fill=none,draw=black,text=black]
  \node                (11)                     {$(O,O)$};
  \node                (12) [right of=11]       {$(O,F)$};
  \node                (21) [below of=11]       {$(F,O)$};
  \node                (22) [right of=21]       {$(F,F)$};
  \path
 (11) edge [<- ]            node {3} (21)
 (12) edge [-> ]            node {2} (22)
 (11) edge [<-]                node {2} (12)
 (21) edge [->]                node {3} (22);
\end{tikzpicture}
\caption{Flows on the game graph corresponding to ``battle of the
  sexes'' (Example~\ref{ex:ex1gamegraph}). }
\label{fig:bosex}
\end{center}
\end{figure}
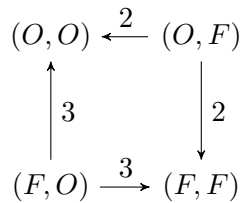
\end{example}

\begin{example}  
\label{ex:graph3game}
Consider a three-player game, where each player can choose between two
strategies $\{a,b\}$. We represent the strategic interactions among
the players by the directed graph in~Figure \ref{fig:3playGameA},
where the payoff of player $i$ is $-1$ if its strategy is identical to
the strategy of its successor (indexed $[i\mod 3 +1]$), and $1$
otherwise. Figure~\ref{fig:3playGameB} depicts the associated game
graph and pairwise comparisons of this game, where the arrow direction
corresponds to an increase in the utility by the deviating player.
The numerical values of the flow are omitted from the figure, and are
all equal to $2$; thus notice that flow conservation does not
hold. The highlighted cycle will play an important role later, after
we discuss potential games.

\begin{figure}[h]
\begin{center}
\subfloat[Player Interaction Graph]{
\label{fig:3playGameA}
\begin{tikzpicture}[->,>=stealth',shorten >=.8pt,auto,node distance=1.8cm,                    semithick]
 \tikzstyle{every state}=[fill=none,draw=black,text=black]
 \node[state]         (A)                    {$ 1$};
 \node[state]         (B) [below left of=A] {$2$};
 \node[state]         (C) [below right of=A] {$3$};
 \path
(A) edge              node {} (B)
(B) edge               node {} (C)
(C) edge              node {} (A)
;
\end{tikzpicture}
}
\quad\quad\quad
\subfloat[Flows on the game graph]{
\label{fig:3playGameB}
\scalebox{0.8}{
\begin{tikzpicture}[scale=.8,-,>=stealth',shorten >=1pt,auto,node distance=2.5cm]
  \tikzstyle{every state}=[fill=none,draw=black,text=black]
  \node                (000)                     {$(a,a,a)$};
  \node                (100) [above of=000]       {$(b,a,a)$};
  \node                (110) [above right of=100,node distance=2cm]       {$(b,b,a)$};
   \node                (010) [below of=110]       {$(a,b,a)$};
   \node                (111) [right of=110,node distance=5cm]       {$(b,b,b)$};
  \node                (101) [right of=100,node distance=5cm]       {$(b,a,b)$};
  \node                (011) [right of=010,node distance=5cm]       {$(a,b,b)$};
  \node                (001) [right of=000,node distance=5cm]       {$(a,a,b)$};
  
  \path
 (000) edge [-> ]              (100)
 (000) edge [-> ]            (010)
  (000) edge [->]              (001)
   (111) edge [-> ]            (110)
 (111) edge [->]             (101)
  (111) edge [->]             (011)
  
   (101) edge [->,line width=2pt, color=red ]              (100)
 (100) edge [->,line width=2pt, color=red  ]             (110)
 (011) edge [->,line width=2pt, color=red  ]             (001)
 (010) edge [->, line width=2pt, color=red ]             (011)
 (110) edge [-> ,line width=2pt, color=red ]             (010)
  (001) edge [->,line width=2pt, color=red  ]             (101)
  ;
  \end{tikzpicture}
  } 
}
\end{center}
\caption{A three-player game, and associated flow on its game graph. Each arrow designates an improvement in the payoff of the agent who unilaterally modifies its strategy.
The highlighted cycle implies that in this game, there can be an infinitely long sequence of profitable unilateral  deviations.}
\label{fig:3playGame}
\end{figure}
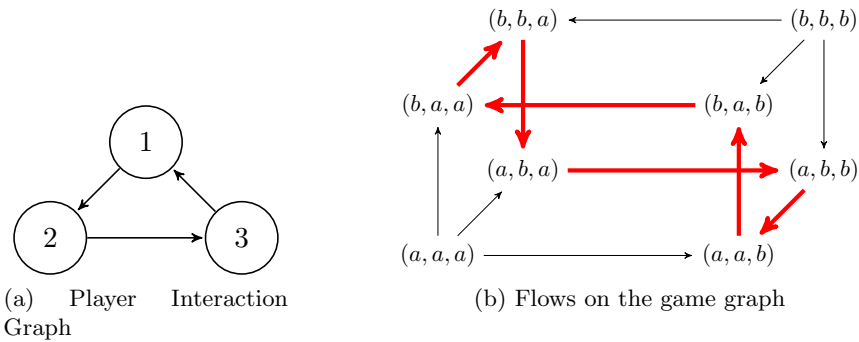
\end{example}

The representation of a game as a flow in a graph is natural and
useful for the understanding of its strategic interactions, as it
abstracts away the absolute utility values and allows for more direct
equilibrium-related interpretation. In more mathematical terms, it
considers the quotient of the utilities modulo 
the subspace of games that are ``equivalent''  to the trivial game (the game where all players receive zero payoff at all strategy profiles), and allows for the identification of ``equivalent'' games
as the same object, a point explored in more detail in later
sections. The game graph also contains much structural
information. For example, the highlighted sequence of arrows in
Figure~\ref{fig:3playGameB} forms a directed cycle, indicating that no
strategy profile within that cycle could be a pure Nash
equilibrium. Our goal in this paper is to use tools from the theory of
graph flows to decompose a game into components, each of which admits
tractable equilibrium characterization. The next section provides an
overview of the tools that are required for this objective.

\section{Flows and Helmholtz Decomposition} 
\label{se:Hodge}

The objective of this section is to provide a brief overview of the
notation and tools required for the analysis of flows on graphs.  The
basic high-level idea is that under certain conditions (e.g., for
graphs arising from games), it is possible to consider graphs as
natural topological spaces with nontrivial homological
properties. These topological features (e.g., the presence of
``holes'', due to the presence of different players) in turn enable
the possibility of interesting flow decompositions. In what follows,
we make these ideas precise. For simplicity and accessibility to a
wider audience, we describe the methods in relatively elementary
linear algebraic language, limiting the usage of algebraic topology
notions whenever possible. The main technical tool we use is the
Helmholtz decomposition theorem, a classical result from algebraic
topology with many applications in applied mathematics, including
among others electromagnetism, computational geometry and data
visualization; see e.g. \cite{PolthierPreuss,tong2003discrete}. In
particular, we mention the very interesting recent work by Jiang et
al. \cite{jiang2008lrc}, where the Helmholtz/Hodge decomposition is
applied to the problem of statistical ranking for sets of incomplete
data.

Consider an undirected graph $G=(E,A)$, where $E$ is the set of the
nodes, and $A$ is the set of edges of the graph\footnote{The results
discussed in this section apply to arbitrary graphs. We use the
notation introduced in Section~\ref{se:PotGames} since in the rest of
the paper we focus on the game graph introduced there.}.  Since the
graph is undirected $({\bf p},{\bf q})\in A$ if and only if $({\bf
q},{\bf p})\in A$. We denote the set of \emph{$3$-cliques} of the
graph by $T=\{({\bf p},{\bf q},{\bf r})|({\bf p},{\bf q}),({\bf
q},{\bf r}),({\bf p},{\bf r}) \in A \}$.
 
We denote by $C_0=\{f|~ f:E\rightarrow \mathbb{R}\}$ the set of
real-valued functions on the set of nodes.  Recall that the \emph{edge
  flows} $X:E\times E \rightarrow \mathbb{R}$ are functions which
satisfy
\begin{equation} 
\label{eq:pairwiseComps}
X({\bf p},{\bf q})= \left\{
\begin{aligned}
-X({\bf q}, {\bf p}) & \quad\quad \mbox{  if $({\bf p},{\bf q})\in A$ } \\
0 & \quad\quad\mbox{ otherwise.}
\end{aligned} \right.
\end{equation}
Similarly the \emph{triangular flows} $\Psi:E\times E \times E
\rightarrow \mathbb{R}$ are functions for which
\begin{equation}
\Psi({\bf p},{\bf q},{\bf r})=\Psi({\bf q},{\bf r},{\bf p})=\Psi({\bf r},{\bf p},{\bf q})=-\Psi({\bf q},{\bf p},{\bf r})=-\Psi({\bf p},{\bf r},{\bf q})=-\Psi({\bf r},{\bf q},{\bf p}),
\end{equation}
and $\Psi({\bf p},{\bf q},{\bf r})=0$ if $({\bf p},{\bf q},{\bf
  r})\notin T$.  Given a graph $G$, we denote the set of all possible
edge flows by $C_1$ and the set of triangular flows by $C_2$.  Notice
that both $C_1$ and $C_2$ are alternating functions of their
arguments. It follows from \eqref{eq:pairwiseComps} that $X({\bf p},
{\bf p})=0$ for all $X\in C_1$.

The sets $C_0$, $C_1$ and $C_2$ have a natural structure of vector
spaces, with the obvious operations of addition and scalar
multiplication.  In this paper, we use the following inner products:
\begin{equation}
\label{eq:innerProdonC1}
\begin{aligned}
\langle \phi_1, \phi_2 \rangle_0 &= \sum_{{\bf p}\in E} \phi_1({
\bf p}) \phi_2({\bf p}). \\
\langle X, Y \rangle_1 &= \frac{1}{2} \sum_{({\bf p},{\bf q})\in A}  X({\bf p},{\bf q})Y({\bf p},{\bf q}) \\
\langle \Psi_1, \Psi_2 \rangle_2 &= \sum_{({\bf p},{\bf q},{\bf r})\in T}  \Psi_1({\bf p},{\bf q},{\bf r})\Psi_2({\bf p},{\bf q},{\bf r}).
\end{aligned}
\end{equation}
We shall frequently drop the subscript in the inner product notation, as
the respective space will  often be clear from the context.

We next define linear operators that relate the above defined objects.
To that end, let $W:E\times E \rightarrow \mathbb{R}$ be an indicator
function for the edges of the graph, namely
\begin{equation} 
\label{eq:defWeights}
W({\bf p},{\bf q})= \left\{
\begin{aligned}
1 & \quad\quad \mbox{  if $({\bf p},{\bf q})\in A$ } \\
0 & \quad\quad\mbox{ otherwise}.
\end{aligned} \right.
\end{equation}
Notice that $W({\bf p},{\bf q})$ can be simply interpreted as the
adjacency matrix of the graph $G$.
 
The first operator of interest is the \emph{combinatorial gradient
  operator} $\delta_0:C_0\rightarrow C_1$, given by
\begin{equation} 
\label{eq:defD0}
(\delta_0 \phi)({\bf p},{\bf q}) = W({\bf p}, {\bf q}) ( \phi({\bf q})-\phi({\bf p})), \quad {\bf p},{\bf q}\in E,
\end{equation}
for $\phi \in C_0$.  An operator which is used in the characterization
of ``circulations'' in edge flows is the \emph{curl operator}
$\delta_1: C_1\rightarrow C_2$, which is defined for all $X \in C_1$
and ${\bf p},{\bf q},{\bf r}\in E$ as
\begin{equation} 
\label{eq:curlExplicit}
(\delta_1 X)({\bf p},{\bf q},{\bf r})=\left\{
    \begin{aligned}
& X({\bf p},{\bf q})+X( {\bf q},{\bf r})+X({\bf r},{\bf p})  \quad &\mbox{ if $({\bf p},{\bf q},{\bf r})\in T $,}\\
&0 \quad\quad\quad &\mbox{otherwise.}
 \end{aligned}
 \right.
\end{equation}

We denote the adjoints of the operators $\delta_0$ and $\delta_1$ by
$\delta_0^*$ and $\delta_1^*$ respectively.  Recall that given
inner products $\langle \cdot, \cdot \rangle_k$ on $C_k$, the adjoint
of $\delta_k$, namely $\delta_k^* : C_{k+1} \rightarrow C_k$, is the
unique linear operator satisfying
\begin{equation} \label{eq:adjoint}
\langle \delta_k f_k, g_{k+1} \rangle_{k+1}= \langle  f_k, \delta_k^* g_{k+1} \rangle_{k},
\end{equation}
for all $f_k \in C_k$, $g_{k+1} \in C_{k+1}$.

Using the definitions in \req{eq:adjoint}, \eqref{eq:defD0} and
\req{eq:innerProdonC1}, it can be readily seen that the adjoint $\delta_0^*:C_1 \rightarrow C_0$ of the
combinatorial gradient $\delta_0$ satisfies
\begin{equation}\label{eq:delta0Explicit}
(\delta_0^* X)({\bf p})=-\sum_{{\bf q}|({\bf p},{\bf q})\in A} X({\bf p},{\bf q})=-\sum_{{\bf q}\in E} W({\bf p},{\bf q}) X({\bf p},{\bf q}).
\end{equation}
Note that $-(\delta_0^* X)({\bf p})$ represents the total flow ``leaving" $\bf p$. 
We shall sometimes refer to the operator $-\delta_0^*$ as the \emph{divergence} operator, due to its similarity to the divergence operator in Calculus.

The domains and codomains of the operators  $\delta_0, \delta_1,\delta_0^*, \delta_1^*$  are summarized below.
\begin{equation}
\label{eq:sumOp}
\begin{aligned}
C_0& \xrightarrow{\delta_0} C_1 \xrightarrow{\delta_1} C_2 \\
C_0& \xleftarrow{\delta_0^*} C_1 \xleftarrow{\delta_1^*} C_2.
\end{aligned}
\end{equation}

We next define the Laplacian operator, $\Delta_0: C_0 \rightarrow C_0$, given by
\begin{equation}
\Delta_0 \triangleq \delta_0^* \circ \delta_0,
\end{equation}
where $\circ$ represents operator composition.  To simplify the
notation, we henceforth omit $\circ$ and write $\Delta_0=\delta_0^*
\delta_0$.  Note that functions in $C_0$ can be represented by vectors
of length $|E|$ by indexing all nodes of the graph and constructing a
vector whose $i$th entry is the function evaluated at the $i$th
node. This allows us to easily represent these operators in terms of
matrices. In particular, the Laplacian can be expressed as a square
matrix of size $|E| \times |E|$; using the definitions for $\delta_0$
and $\delta_0^*$, it follows that
\begin{equation} \label{eq:LapExplicit}
[\Delta_0]_{{\bf p},{\bf q}}= \left\{
\begin{aligned}
\sum_{{\bf r}\in E} W({\bf p}, {\bf r})   & \quad\quad \mbox{  if ${\bf p}= {\bf q}$} \\
- 1 \quad  & \quad\quad \mbox{  if ${\bf p}\neq {\bf q}$ and $({\bf p}, {\bf q})\in A$} \\
0 \quad & \quad\quad \mbox{  otherwise,}
\end{aligned}
\right.
\end{equation}
where, with some abuse of the notation, $[\Delta_0]_{{\bf p},{\bf q}}$
denotes the entry of the matrix $\Delta_0$, with rows and columns
indexed by the nodes ${\bf p}$ and ${\bf q}$. The above matrix
naturally coincides with the definition of a Laplacian of an
undirected graph \cite{chung1997sgt}.

Since the entry of $\Delta_0 \phi$ corresponding to ${\bf p}$ equals
$\sum_{{\bf q}} W({\bf p}, {\bf q})\big (\phi({\bf p})-\phi({\bf
  q})\big )$, the Laplacian operator gives a measure of the aggregate
``value'' of a node over all its neighbors. A related operator is
\begin{equation}
\Delta_1 \triangleq \delta_1^* \circ \delta_1+\delta_0 \circ \delta_0^*,
\end{equation}
known in the literature as the vector Laplacian \cite{jiang2008lrc}.

We next provide additional flow-related terminology which will be used
in association with the above defined operators, and highlight some of
their basic properties. In analogy to the well-known identity in
vector calculus, $\mathrm{curl} \circ \mathrm{grad}=0$, we have that
$\delta_0$ is a closed form, i.e., $\delta_{1} \circ \delta_0=0$. An
edge flow $X \in C_1$ is said to be \emph{globally consistent} if $X$
corresponds to the combinatorial gradient of some $f \in C_0$, i.e.,
$X = \delta_0 f$; the function $f$ is referred to as the
\emph{potential function} corresponding to $X$. Equivalently, the set
of globally consistent edge flows can be represented as the image of
the gradient operator, namely $\im (\delta_0)$. By the closedness of
$\delta_0$, observe that $\delta_1 X=0$ for every globally consistent
edge flow $X$. We define \emph{locally consistent} edge flows as those
satisfying $(\delta_1 X)({\bf p}, {\bf q} ,{\bf r}) = X({\bf p}, {\bf
q})+X({\bf q}, {\bf r})+X({\bf r}, {\bf p}) = 0$ for all $({\bf p},
{\bf q}, {\bf r}) \in T$. Note that the kernel of the curl operator
$\ker (\delta_1)$ is the set of locally consistent edge flows. The
latter subset is generally not equivalent to $\im(\delta_0)$, as there
may exist edge flows that are globally inconsistent but locally
consistent (in fact, this will happen whenever the graph has a
nontrivial topology). We refer to such flows as \emph{harmonic flows}.
Note that the operators $\delta_0$, $\delta_1$ are linear operators,
thus their image spaces are orthogonal to the kernels of their
adjoints, i.e., $\im (\delta_0) \perp \ker (\delta_0^*)$ and $\im
(\delta_1) \perp \ker (\delta_1^*)$ [similarly, $\im (\delta_0^*)
\perp \ker (\delta_0)$ and $\im (\delta_1^*) \perp \ker (\delta_1)$ as
can be easily verified using \eqref{eq:adjoint}].

We state below a basic flow-decomposition theorem, known as the
Helmholtz Decomposition\footnote{The Helmholtz Decomposition can be
generalized to higher dimensions through the Hodge Decomposition
theorem (see \cite{jiang2008lrc}), however this generalization is not
required for our purposes.}, which will be used in our context of
noncooperative games.  The theorem implies that any graph flow can be
decomposed into three orthogonal flows.

\begin{theorem}[Helmholtz Decomposition] 
\label{thm:hodge}
The vector space of edge flows $C_1$ admits an orthogonal
decomposition
\begin{equation}
C_1=\im (\delta_{0}) \oplus \ker (\Delta_1) \oplus \im (\delta_1^*),
\end{equation}
where $\ker (\Delta_1)= \ker (\delta_1)\cap \ker (\delta^*_{0})$.
\end{theorem}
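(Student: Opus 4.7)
The plan is to reduce the decomposition to two familiar ingredients: the general orthogonal decomposition $V = \mathrm{im}(T^*) \oplus \ker(T)$ for a linear map on a finite-dimensional inner product space, and the ``cochain complex'' identity $\delta_1 \circ \delta_0 = 0$ that the excerpt has already highlighted (together with its adjoint $\delta_0^* \circ \delta_1^* = 0$). Since $C_0, C_1, C_2$ are all finite dimensional with the inner products \eqref{eq:innerProdonC1}, these tools apply directly.

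First I would verify the characterization $\ker(\Delta_1) = \ker(\delta_1) \cap \ker(\delta_0^*)$. The inclusion $\ker(\delta_1) \cap \ker(\delta_0^*) \subseteq \ker(\Delta_1)$ is immediate from the definition $\Delta_1 = \delta_1^*\delta_1 + \delta_0\delta_0^*$. For the reverse, take $X \in \ker(\Delta_1)$ and pair against $X$:
\begin{equation*}
0 = \langle \Delta_1 X, X\rangle_1 = \langle \delta_1 X, \delta_1 X\rangle_2 + \langle \delta_0^* X, \delta_0^* X\rangle_0 = \|\delta_1 X\|^2 + \|\delta_0^* X\|^2,
\end{equation*}
which forces $\delta_1 X = 0$ and $\delta_0^* X = 0$.

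Next I would establish pairwise orthogonality of the three summands. The orthogonality $\mathrm{im}(\delta_0) \perp \mathrm{im}(\delta_1^*)$ follows from $\langle \delta_0 f, \delta_1^* g\rangle_1 = \langle \delta_1 \delta_0 f, g\rangle_2 = 0$, using the closedness identity $\delta_1 \circ \delta_0 = 0$. The orthogonality of $\ker(\Delta_1)$ with each of the other two summands follows from the general facts $\mathrm{im}(\delta_0) \perp \ker(\delta_0^*)$ and $\mathrm{im}(\delta_1^*) \perp \ker(\delta_1)$ (already noted in the excerpt), together with the inclusions $\ker(\Delta_1) \subseteq \ker(\delta_0^*)$ and $\ker(\Delta_1) \subseteq \ker(\delta_1)$ from Step 1.

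It remains to show that the three subspaces span $C_1$, and this is the only step with any real content. The strategy is to apply the general identity $C_1 = \mathrm{im}(\delta_0) \oplus \ker(\delta_0^*)$ and then refine the second summand. The key observation is that the adjoint relation $\delta_0^* \delta_1^* = (\delta_1 \delta_0)^* = 0$ gives $\mathrm{im}(\delta_1^*) \subseteq \ker(\delta_0^*)$, so inside $\ker(\delta_0^*)$ I can further split off $\mathrm{im}(\delta_1^*)$ orthogonally. The orthogonal complement of $\mathrm{im}(\delta_1^*)$ in $C_1$ is $\ker(\delta_1)$, so the piece left over is $\ker(\delta_0^*) \cap \ker(\delta_1) = \ker(\Delta_1)$. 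Combining,
\begin{equation*}
C_1 = \mathrm{im}(\delta_0) \oplus \bigl(\mathrm{im}(\delta_1^*) \oplus \ker(\Delta_1)\bigr) = \mathrm{im}(\delta_0) \oplus \ker(\Delta_1) \oplus \mathrm{im}(\delta_1^*),
\end{equation*}
and orthogonality from Step 2 upgrades this direct sum to an orthogonal direct sum. The only mildly delicate point, and the one I would be careful to justify explicitly, is the chain of equalities $\mathrm{im}(\delta_1^*)^\perp \cap \ker(\delta_0^*) = \ker(\delta_1) \cap \ker(\delta_0^*) = \ker(\Delta_1)$; everything else is straightforward linear algebra once the complex identity $\delta_1 \circ \delta_0 = 0$ is in hand.
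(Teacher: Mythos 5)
Your proof is correct and complete. Note, however, that the paper does not actually prove this theorem: it states the Helmholtz decomposition as a classical result and defers to the literature (e.g.\ the reference of Jiang et al.\ on Hodge-theoretic ranking), so there is no ``paper's approach'' to compare against. Your argument supplies exactly the standard linear-algebra derivation that the citation stands in for: the energy identity $\langle \Delta_1 X, X\rangle = \|\delta_1 X\|^2 + \|\delta_0^*X\|^2$ for the characterization of $\ker(\Delta_1)$, the complex identity $\delta_1\delta_0 = 0$ (which the paper does record, in its discussion of closedness) for the orthogonality $\mathrm{im}(\delta_0) \perp \mathrm{im}(\delta_1^*)$, and the two-stage refinement $C_1 = \mathrm{im}(\delta_0) \oplus \ker(\delta_0^*)$ followed by splitting $\mathrm{im}(\delta_1^*)$ off inside $\ker(\delta_0^*)$. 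The one step you flag as delicate --- that $\mathrm{im}(\delta_1^*) \subseteq \ker(\delta_0^*)$ so that the inner splitting is legitimate, and that $\mathrm{im}(\delta_1^*)^\perp \cap \ker(\delta_0^*) = \ker(\delta_1)\cap\ker(\delta_0^*) = \ker(\Delta_1)$ --- is indeed the crux, and you justify it correctly via $\delta_0^*\delta_1^* = (\delta_1\delta_0)^* = 0$ and the identity $\mathrm{im}(T^*)^\perp = \ker(T)$. Everything here is sound for finite-dimensional inner product spaces, which is the setting of the paper.
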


\begin{figure}[ht]
\centering
\includegraphics[width=7cm,height=10cm,angle=-90]{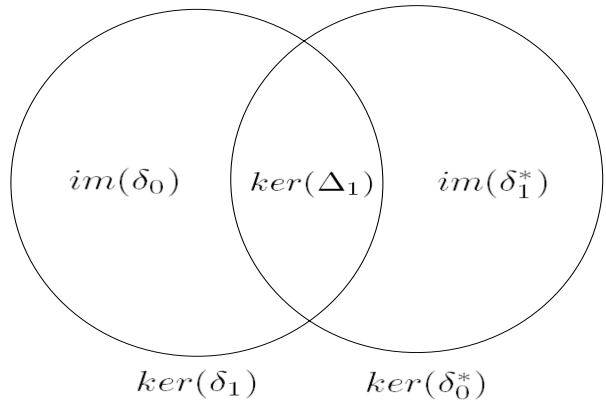}
\vskip -.8cm
\caption{Helmholtz decomposition of $C_1$}
\label{fig:Helmholtz}
\end{figure}

Below we summarize
the interpretation of each of
the components in the Helmholtz decomposition (see also
Figure~\ref{fig:Helmholtz}):
\begin{itemize}
\item $\im (\delta_{0})$ -- globally consistent flows.
\item $\ker (\Delta_1)= \ker (\delta_1)\cap \ker (\delta^*_{0})$ --
  harmonic flows, which are globally inconsistent but locally
  consistent.  Observe that $\ker (\delta_1)$ consists of locally
  consistent flows (that may or may not be globally consistent), while
  $\ker (\delta^*_{0})$ consists of globally inconsistent flows (that
  may or may not be locally consistent).
\item $\im (\delta_1^*)$ (or equivalently, the orthogonal complement
  of  $\ker (\delta_1)$ ) -- locally inconsistent flows.
\end{itemize}

We conclude this section with a brief remark on the decomposition and the flow conservation. For   $X \in C_1$, if  $\delta_0^* X=0$, i.e., if for every node,  the total flow leaving the node is zero, then  we say that $X$ satisfies the flow conservation condition.
Theorem \ref{thm:hodge} implies that $X$ satisfies this condition only when  $X\in \ker(\delta_0^*) =\im(\delta_0)^\perp =\ker(\Delta_1) \oplus \im(\delta_1^*)$. Thus, the  flow conservation condition is satisfied for harmonic flows and locally inconsistent flows but not for globally consistent flows.

\section{Canonical Decomposition of Games} 
\label{se:canonicalRep}

In this section we obtain a canonical decomposition of an arbitrary
game into basic components, by combining the game graph representation
introduced in Section~\ref{subsec:gamesNflows} with the Helmholtz
decomposition discussed above.

Section~\ref{se:prelim_operators_game} introduces the relevant
operators that are required for formulating the results. In
Section~\ref{se:orthDecomp} we provide the basic decomposition
theorem, which states that the space of games can be decomposed as a
direct sum of three subspaces, referred to as the \emph{potential},
\emph{harmonic} and \emph{nonstrategic} subspaces.  In
Section~\ref{se:bimatrix}, we focus on bimatrix games, and provide
explicit expressions for the decomposition.

\subsection{Preliminaries} 
\label{se:prelim_operators_game}

We consider a game $\mathcal{G}$ with set of players $\cal M$,
strategy profiles $E\triangleq E^1 \times \dots \times E^M$, and game
graph $G(\mathcal{G})=(E,A)$.  Using the notation of the previous
section, the utility functions of each player can be viewed as
elements of $C_0$, i.e., $u^m \in C_0$ for all $m\in {\cal M}$. For
given $\cal M$ and $E$, every game is uniquely defined by its set of
utility functions. Hence, the space of games with players $\cal M$ and
strategy profiles $E$ can be identified as ${\cal G}_{{\cal M},E}
\cong C_0^M$. In the rest of the paper we use the notations
$\{u^m\}_{m\in {\cal M}}$ and ${\cal G}=\langle {\cal
M},\{E^m\},\{u^m \} \rangle$
interchangeably when referring to games.

The pairwise comparison function $X(\mathbf{p},\mathbf{q})$ of a game,
defined in \req{eq:pairwiseCompsGame}, corresponds to a flow on the game graph,
and hence it belongs to $C_1$.  In general, the flows representing
games have some special structure. For example, the pairwise
comparison between any two comparable strategy profiles is associated
with the payoff of exactly a single player. It is therefore required
to introduce \emph{player-specific} operators and highlight some
important identities between them, as we elaborate below.

Let $W^m:E\times E \rightarrow \mathbb{R}$ be the indicator function
for $m$-comparable strategy profiles, namely
\begin{equation*}
W^m({\bf p},{\bf q})= \left\{
\begin{aligned}
1 & \quad\quad \mbox{  if ${\bf p},{\bf q}$ are $m$-comparable } \\
0 & \quad\quad\mbox{ otherwise.}
\end{aligned} \right.
\end{equation*}
Recalling that any pair of strategy profiles cannot be comparable by more than a single user, we have
\begin{equation} \label{eq:WmDisW}
W^m({\bf p},{\bf q}) W^k({\bf p},{\bf q})=0, \quad \mbox{for all $k\neq m$ and $\mathbf{p,q}\in E$},
\end{equation}
and
\begin{equation} \label{eq:WmVsW}
W=\sum_{m\in{\cal M}}W^m,
\end{equation}
where $W$ is the indicator function of comparable strategy profiles
(edges of the game graph) defined in \eqref{eq:defWeights}. Note that
this can be interpreted as a decomposition of the adjacency matrix of
$G(\mathcal{G})$, where the different components correspond to the edges
associated with different players.

Given $\phi \in C_0$, we define  $D_m:C_0\rightarrow C_1$ such that
\begin{equation} \label{eq:defDm}
(D_m \phi)({\bf p},{\bf q}) =W^m({\bf p},{\bf q}) \left( \phi({\bf q})-\phi({\bf p}) \right).
\end{equation}
This operator quantifies the change in $\phi$ between strategy
profiles that are $m$-comparable.  Using this operator, we can
represent the pairwise differences $X$ of a game with payoffs
$\{u^m\}_{m\in{\cal M}}$ as follows:
\begin{equation} \label{eq:gameFlow}
X=\sum_{m\in{\cal M}} D_m u^m.
\end{equation}
We  define a relevant operator $D: C_0^M \rightarrow C_1$, such that  $D = [D_1\ldots,D_M]$. As can be seen from \eqref{eq:gameFlow}, for a game with  collection of utilities $u=[u^1; u^2 \dots ; u^M] \in C_0^M$, the pairwise differences can alternatively be represented by  $D u$.

Let
$\Lambda_m:C_1\rightarrow C_1$ be a scaling operator so that
\begin{equation*} 
(\Lambda_m X)({\bf p}, {\bf q})= W^m({\bf p}, {\bf q}) X({\bf p}, {\bf q})
\end{equation*}
for every $X\in C_1$, ${\bf p}, {\bf q} \in E$. 
From \eqref{eq:WmVsW}, it can be seen that for any $X\in C_1$, $\sum_m \Lambda_m X= X$. 
The definition of $\Lambda_m$ and 
\eqref{eq:WmDisW} imply that $\Lambda_m \Lambda_k =0$ for $k\neq m$.
Additionally, the definition of the inner product in $C_1$ implies that for $X,Y\in C_1$, it follows that $\langle \Lambda_m X, Y \rangle= \langle  X, \Lambda_m Y \rangle$, i.e., $\Lambda_m$ is self-adjoint.

This operator provides a convenient description for the operator $D_m$. From the definitions of $D_m$ and $\Lambda_m$, it immediately follows that
$D_m = \Lambda_m \delta_0$, and since $\sum_m \Lambda_m X=X$ for all $X\in C_1$, 
\begin{equation*} 
\delta_0= \sum_m \Lambda_m \delta_0= \sum_m D_m.
\end{equation*}
Since $\Lambda_m$ is self-adjoint, the adjoint of  $D_m$, which  is denoted by $D_m^*: C_1 \rightarrow C_0$, is given by:
\begin{equation*}
D_m^* =  \delta_0^* \Lambda_m.
\end{equation*}
Using \eqref{eq:delta0Explicit} and the above definitions, it follows that
\begin{equation} \label{eq:DmExplicit}
(D_m^* X) ({\bf p})= -  \sum_{ {\bf q}\in E  } W^m({\bf p}, {\bf q}) X({\bf p}, {\bf q}), \quad \mbox{ for all $X\in C_1$},
\end{equation}
and
\begin{equation} \label{eq:delta0Adj_DmAdj}
\delta_0^*=\sum_{m\in{\cal M}} D_m^*.
\end{equation}

Observe that $D_k^*D_m= \delta_0^* \Lambda_k \Lambda_m \delta_0=0$  for $k\neq
m$. This immediately implies that the image spaces of $\{D_m\}_{m\in
{\cal M}}$ are orthogonal, i.e., $D_k^* D_m =0$. 
Let  $D_m^\dagger$ denote the (Moore-Penrose) pseudoinverse of
$D_m$, with respect to the inner products introduced in Section \ref{se:Hodge}.
By the properties of the pseudoinverse, we have $\ker D_m^\dagger= (\im  D_m)^\perp$. Thus, orthogonality of the image spaces of $D_k$ operators imply that $D_k^\dagger D_m=0$ for $k\neq m$.

The orthogonality
leads to the following expression for the Laplacian operator,
\begin{equation} 
\label{eq:LaplacianVsLm}
\begin{aligned}
\Delta_0 &= \delta_0^* \delta_0 = 
\sum_{k\in{\cal M}} D_k^* \sum_{m\in{\cal M}} D_m =
\sum_{m\in{\cal M}} D_m^*D_m.
\end{aligned}
\end{equation}

In view of \req{eq:defDm} and \req{eq:DmExplicit}, $D_m$ and $-D_m^*$
are the gradient and divergence operators on the graph of
$m$-comparable strategy profiles $(E, A^m)$.  Therefore, the operator
$\Delta_{0,m} \triangleq D_m^* D_m$ is the Laplacian of the graph induced
by $m$-comparable strategies, and is referred to as the
\emph{Laplacian operator of the $m$-comparable strategy profiles}.  It
follows from \eqref{eq:LaplacianVsLm} that
\begin{equation*} 
\Delta_0=\sum_{m\in{\cal M}}\Delta_{0,m}.
\end{equation*}
The relation between the Laplacian operators $\Delta_0$ and $\Delta_{0,m}$ is illustrated in Figure \ref{fig:graphLap}.

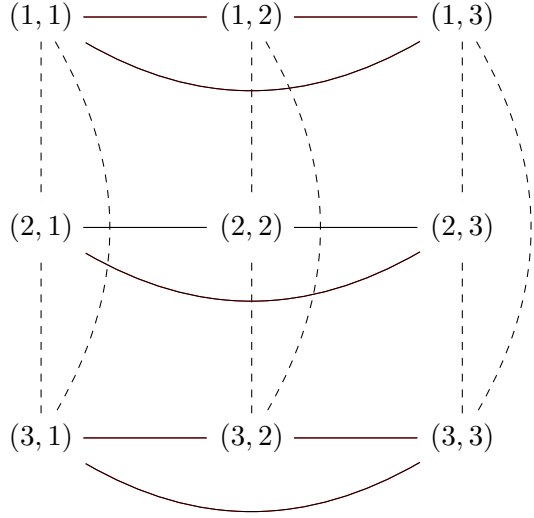
\begin{figure}
\begin{center}
\begin{tikzpicture}[-,>=stealth',shorten >=1pt,auto,node distance=2.8cm]
  \tikzstyle{every state}=[fill=none,draw=black,text=black]

  \node        (11)                    {$ (1,1)$};
  \node          (12) [right of=11]       {$(1,2)$};
  \node          (13) [ right of=12]      {$(1,3)$};
  \node          (21) [below of =11]          {$(2,1)$};
  \node         (22) [right of=21]       {$(2,2)$};
  \node         (23) [right of=22]       {$(2,3)$};
  \node         (31) [below of=21]       {$(3,1)$};
  \node         (32) [ right of=31]       {$(3,2)$};
  \node         (33) [ right of=32]       {$(3,3)$};

  \path
 (11) edge  [dashed]            node {} (21)
 (11) edge [dashed,bend left]             node {} (31)
 (31) edge [dashed]              node {} (21)

 (12) edge  [dashed]            node {} (22)
 (12) edge [dashed,bend left]             node {} (32)
 (32) edge  [dashed]            node {} (22)

 (13) edge [dashed]              node {} (23)
 (13) edge [dashed,bend left]             node {} (33)
 (33) edge [dashed]             node {} (23)


 (11) edge  [red]            node {} (12)
 (12) edge  [red]            node {} (13)
 (13) edge  [red,bend left]             node {} (11)

 (21) edge  [red]            node {} (22)
 (22) edge  [red]            node {} (23)
 (23) edge  [red,bend left]            node {} (21)

 (31) edge  [red]            node {} (32)
 (32) edge  [red]            node {} (33)
 (33) edge  [red, bend left]            node {} (31)
 (11) edge  []            node {} (12)
 (12) edge  []            node {} (13)
 (13) edge  [bend left]             node {} (11)

 (21) edge  []            node {} (22)
 (22) edge  []            node {} (23)
 (23) edge  [bend left]            node {} (21)

 (31) edge  []            node {} (32)
 (32) edge  []            node {} (33)
 (33) edge  [ bend left]            node {} (31)

;
\end{tikzpicture}
\caption{A game with two players, each of which has three
strategies. A node $(i, j)$ represents a strategy profile in which
player 1 and player 2 use strategies $i$ and $j$, respectively. The
Laplacian $\Delta_{0,1}$ ($\Delta_{0,2}$) is defined on the graph
whose edges are represented by dashed (solid) lines. The Laplacian
$\Delta_0$ is defined on the graph that includes all edges.}
\label{fig:graphLap}
\end{center}
\end{figure}

 Similarly, $\delta_1 \Lambda_m$ is the curl operator associated with the subgraph $(E, A^m)$. From the closedness of the curl ($\delta_1 \Lambda_m$) and gradient ($\Lambda_m \delta_0$) operators defined on this subgraph, we obtain $\delta_1 \Lambda_m^2 \delta_0=0$. Observing that $\Lambda_m^2 \delta_0= \Lambda_m \delta_0=D_m$, it follows that
\begin{equation} \label{eq:noCurl}
\delta_1 D_m=0.
\end{equation}
This result also implies that $\delta_1 D=0$, i.e., the pairwise comparisons of games belong to $\ker \delta_1$.
Thus, it follows from Theorem \ref{thm:hodge}  that the pairwise comparisons  do not have a locally inconsistent component.

Lastly, we introduce projection operators that will be useful in the subsequent analysis. Consider the operator, 
 \begin{equation*} \label{eq:projOp}
\Pi_m=D_m^\dagger D_m.
\end{equation*}
For any linear operator $L$, $L^\dagger L$ is a 
projection operator on  the orthogonal complement of the kernel of $L$ (see
\cite{golub1996mc}).  Since $D_m$ is a linear operator, $\Pi_m$ is a
projection operator to the orthogonal complement of the kernel of
$D_m$.
Using these operators, we define $\Pi: C_0^M \rightarrow C_0^M$ such that $\Pi= diag(\Pi_1, \dots , \Pi_M)$, i.e., for  $u=\{u_m\}_{m\in{\cal M}}\in C_0^M$,  we have $\Pi u= [\Pi_1 u^1; \dots \Pi_M u^M ] \in C_0^M$.
We extend the inner product in $C_0$ to $C_0^M$ (by defining the inner product as the sum of the inner products in all $C_0$ components),  and denote  by $D^\dagger$ the pseudoinverse of $D$ according to this inner product.
In Lemma \ref{lem:identity}, we will show that $\Pi$ is equivalent to the projection operator to the orthogonal complement of the kernel of $D$, i.e., $\Pi= D^\dagger D$.

For easy reference, Table~\ref{tab:SumNot} provides a summary of
notation. We next state some  basic facts about the operators we
introduced, which will be used in the subsequent analysis. The proofs
of these results can be found in Appendix~\ref{se:proofOfLapProj}.
\begin{table}[h]
\centering
\begin{tabular}{|c|| p{14cm} |}
\hline ${\cal G}$ & A game instance $\langle {\cal M},\{E^m\}_{m\in{\cal M}},\{u^m \}_{m\in{\cal M}} \rangle$.\\
\hline  ${\cal M}$ & Set of players, $\{1,\dots, M\}$.\\
\hline  $E^m$ & Set of actions for player $m$, $E^m=\{1,\dots, h_m\}$. \\
\hline $E$ & Joint action space $\prod_{m\in{\cal M}} E^m$. \\
\hline  $u^m$ & Utility function of player $m$. We have $u^m \in C_0$.\\
\hline $W^m$ & Indicator function for $m$-comparable strategy profiles,
 $W^m: E\times E \rightarrow \{0,1\}$. \\
\hline $W$ & A function indicating whether strategy profiles are comparable, $W: E\times E \rightarrow \{0,1\}$. \\
\hline $C_0$ & Space of utilities, $C_0=\{u^m| u^m:E\rightarrow \mathbb{R} \}$. Note that $C_0 \cong \mathbb{R}^{|E|}$. \\
\hline $C_1$ & Space of pairwise comparison functions from $E\times E$ to $\mathbb{R}$. \\
\hline $\delta_0$ &  Gradient operator, $\delta_0:C_0\rightarrow C_1$, satisfying  $(\delta_0 \phi)({\bf p},{\bf q}) =W({\bf p}, {\bf q}) \left( \phi({\bf q})-\phi({\bf p}) \right)$.\\
\hline $D_m$ & $D_m:C_0\rightarrow C_1$, such that $(D_m \phi)({\bf p},{\bf q}) =W^m({\bf p},{\bf q}) \left( \phi({\bf q})-\phi({\bf p}) \right)$. \\
\hline $D$ & $D:C_0^M\rightarrow C_1$, such that $D(u^1;\ldots;u^M)=\sum_m D_m u^m$. \\
\hline $\delta_0^*$, $D_m^*$ &  $\delta_0^*, D_m^*:C_1\rightarrow C_0$ are the adjoints of the operators $\delta_0$ and $D_m$, respectively.\\
\hline $\Delta_0$ &  Laplacian for the game graph. $\Delta_0:C_0\rightarrow C_0$; satisfies $\Delta_0=\delta_0^*  \delta_0 = \sum_m \Delta_{0,m}.$ \\
\hline $\Delta_{0,m}$ &  Laplacian for the graph of $m$-comparable strategies, $\Delta_{0,m}:C_0\rightarrow C_0$; satisfies $\Delta_{0,m}=D_m^* D_m=D_m^* \delta_0$. \\
\hline $\Pi_m$ & Projection operator onto the orthogonal complement of kernel of $D_m$, $\Pi_{m}:C_0\rightarrow C_0$; satisfies $\Pi_{m}=D_m^\dagger D_m$.  \\
\hline
\end{tabular}
\caption{Notation summary}
\label{tab:SumNot}
\end{table}

\begin{lemma} 
\label{theo:projDiff}
The Laplacian of the graph induced by $m$-comparable strategies and
the projection operator $\Pi_m$ are related by $\Delta_{0,m}=h_m
\Pi_m$, where $h_m=|E^m|$ denotes the number of strategies of player
$m$.
\end{lemma}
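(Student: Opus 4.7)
The plan is to exploit the product structure of the game graph restricted to $m$-comparable strategy profiles. Observe that the subgraph $(E, A^m)$ is the disjoint union, over all $\mathbf{p}^{-m} \in E^{-m}$, of complete graphs on the vertex set $\{(\mathbf{r}, \mathbf{p}^{-m}) : \mathbf{r} \in E^m\}$; each component is a clique $K_{h_m}$. Since $\Delta_{0,m} = D_m^* D_m$ is a graph Laplacian by \eqref{eq:LaplacianVsLm} and the discussion following it, it decomposes as a block-diagonal operator with one block per clique, and each block is the Laplacian of $K_{h_m}$.

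I would then invoke the well-known fact that the Laplacian of a complete graph $K_n$, viewed as an operator on functions over its $n$ vertices, equals $nI - J$, where $J$ is the all-ones (rank-one) matrix. Factoring gives $nI - J = n(I - \tfrac{1}{n}J)$, and the operator $\tfrac{1}{n}J$ is precisely the orthogonal projection onto the one-dimensional subspace of constant functions. Hence the Laplacian of $K_n$ equals $n$ times the orthogonal projection onto the orthogonal complement of the constants.

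Applying this blockwise to $\Delta_{0,m}$, I conclude that $\Delta_{0,m}$ equals $h_m$ times the orthogonal projection onto the subspace of functions $\phi \in C_0$ whose restriction to each clique sums to zero, i.e., onto the orthogonal complement of the space of functions that are constant on each $m$-comparable clique. It remains to identify this subspace with $(\ker D_m)^\perp$. By definition \eqref{eq:defDm}, $D_m \phi = 0$ iff $\phi(\mathbf{q}) = \phi(\mathbf{p})$ whenever $\mathbf{p},\mathbf{q}$ are $m$-comparable, which is exactly the condition that $\phi$ depends only on $\mathbf{p}^{-m}$, i.e., is constant on each clique. Hence $\ker D_m$ is the space of clique-constant functions, and its orthogonal complement (under $\langle \cdot, \cdot\rangle_0$) is the space of functions that sum to zero on each clique. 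Since $\Pi_m = D_m^\dagger D_m$ is by construction the orthogonal projection onto $(\ker D_m)^\perp$, the identification is complete and $\Delta_{0,m} = h_m \Pi_m$.

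I do not expect any serious obstacle: the argument is essentially the spectral structure of the Laplacian of a complete graph, lifted to a disjoint union via the block decomposition of $(E, A^m)$. The only point requiring minor care is the bookkeeping between the inner product on $C_0$ and its restriction to each clique, but since the inner product in \eqref{eq:innerProdonC1} is just the standard Euclidean inner product on vertex functions, the block structure is preserved and both the Laplacian factorization and the projection $\Pi_m$ act independently on each block.
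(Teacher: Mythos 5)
Your proof is correct and follows essentially the same route as the paper's: both decompose the graph $(E,A^m)$ into $|E^{-m}|$ disjoint copies of $K_{h_m}$ and use the fact that the Laplacian of a complete graph on $n$ vertices is $n$ times the orthogonal projection onto the complement of the constants. The only (minor) difference is that you identify $\ker D_m$ directly as the clique-constant functions, whereas the paper reaches the same conclusion by a dimension count comparing $\ker \Delta_{0,m}$ with the span of the $\nu_{\mathbf{q}^{-m}}$; your version is slightly more direct but not a genuinely different argument.
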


\begin{lemma} \label{theo:projDiff2}
The kernels of operators $D_m$, $\Pi_m$ and $\Delta_{0,m}$ coincide,
 namely $\ker (D_m) = \ker (\Pi_m) = \ker
 (\Delta_{0,m})$. Furthermore, a basis for these kernels is given by a
 collection $\{\nu_{{\bf q}^{-m}} \}_{{\bf q}^{-m} \in E^{-m}} \in
 C_0$ such that
\begin{equation} \label{eq:defineNuQm}
\nu_{{\bf q}^{-m}}({\bf p})=\left\{
\begin{aligned}
& 1 \quad\quad \mbox{if ${\bf p}^{-m}={\bf q}^{-m}$} \\
& 0 \quad\quad \mbox{ otherwise}
\end{aligned}
\right.
\end{equation}

\end{lemma}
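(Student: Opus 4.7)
The plan is to first establish the three kernel equalities and then exhibit the basis. Since the three operators are closely related via $\Delta_{0,m} = D_m^* D_m$ and Lemma~\ref{theo:projDiff} ($\Delta_{0,m} = h_m \Pi_m$), the kernel equalities should follow from elementary linear-algebraic identities.

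For $\ker(\Delta_{0,m}) = \ker(D_m)$, I would invoke the standard fact that for any linear operator $L$ between inner product spaces, $\ker(L^*L) = \ker(L)$: one direction is immediate, and the other follows by noting that $L^*Lx = 0$ implies $\langle L^*Lx, x\rangle = \|Lx\|^2 = 0$. Applying this with $L = D_m$ gives the desired equality, using that $\Delta_{0,m} = D_m^* D_m$. For $\ker(\Pi_m) = \ker(\Delta_{0,m})$, I would cite Lemma~\ref{theo:projDiff}: since $\Delta_{0,m} = h_m \Pi_m$ and $h_m = |E^m| > 0$, the two operators have identical kernels.

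For the basis characterization, I would unpack the definition of $D_m$. By \eqref{eq:defDm}, $\phi \in \ker(D_m)$ iff
\begin{equation*}
W^m({\bf p}, {\bf q})\bigl(\phi({\bf q}) - \phi({\bf p})\bigr) = 0 \quad \text{for all } {\bf p}, {\bf q} \in E,
\end{equation*}
which is equivalent to $\phi({\bf p}) = \phi({\bf q})$ for every pair of $m$-comparable strategy profiles. Since ${\bf p}$ and ${\bf q}$ are $m$-comparable precisely when ${\bf p}^{-m} = {\bf q}^{-m}$, this condition says that $\phi$ is constant on each equivalence class determined by the strategies of the players other than $m$. The indicator functions $\{\nu_{{\bf q}^{-m}}\}_{{\bf q}^{-m} \in E^{-m}}$ defined in \eqref{eq:defineNuQm} are precisely the characteristic functions of these equivalence classes.

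It is then straightforward to verify that the $\nu_{{\bf q}^{-m}}$ span $\ker(D_m)$ (any $\phi$ constant on the classes equals $\sum_{{\bf q}^{-m}} \phi_{{\bf q}^{-m}} \nu_{{\bf q}^{-m}}$, where $\phi_{{\bf q}^{-m}}$ is the common value of $\phi$ on the class) and are linearly independent (their supports are disjoint nonempty subsets of $E$). I do not anticipate a serious obstacle: the argument is essentially bookkeeping once the $m$-comparability structure is made explicit, and the operator identities needed are either standard or are supplied by the preceding lemma.
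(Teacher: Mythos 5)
Your proof is correct, but it takes a genuinely different and more elementary route than the paper's. The paper proves Lemmas~\ref{theo:projDiff} and~\ref{theo:projDiff2} jointly by a spectral/dimension-counting argument: it observes that the graph of $m$-comparable profiles is a disjoint union of $|E^{-m}|$ complete graphs on $h_m$ nodes, invokes the known spectrum of the complete-graph Laplacian to conclude $\dim\ker(\Delta_{0,m})=\prod_{k\neq m}h_k$, exhibits the $|E^{-m}|$ mutually orthogonal functions $\nu_{{\bf q}^{-m}}$ inside $\ker(D_m)\subseteq\ker(\Delta_{0,m})$, and deduces equality of the kernels (and then the identity $\Delta_{0,m}=h_m\Pi_m$) by matching dimensions. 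You instead get $\ker(D_m^*D_m)=\ker(D_m)$ from the standard identity $\langle L^*Lx,x\rangle=\|Lx\|^2$, and you characterize $\ker(D_m)$ directly as the functions constant on each $m$-comparability class, from which the $\nu_{{\bf q}^{-m}}$ are visibly a basis (disjoint supports give independence, constancy on classes gives spanning). Your argument avoids any appeal to the spectrum of the complete graph and does not need the dimension count at all; the paper's argument, by contrast, yields the eigenvalue information that it then reuses to prove $\Delta_{0,m}=h_m\Pi_m$. One small remark: for $\ker(\Pi_m)=\ker(\Delta_{0,m})$ you cite Lemma~\ref{theo:projDiff}, which the paper only establishes in the course of the same joint proof; to keep your argument self-contained you can instead note that $\Pi_m=D_m^\dagger D_m$ is by construction the orthogonal projector onto $(\ker D_m)^\perp$, so $\ker(\Pi_m)=\ker(D_m)$ holds with no reference to Lemma~\ref{theo:projDiff}. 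This is a cosmetic adjustment, not a gap.
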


\begin{lemma} \label{lem:SpectraOfDelta0}
The Laplacian $\Delta_0$ of the game graph (the graph of comparable
strategy profiles) always has eigenvalue $0$ with multiplicity $1$,
corresponding to the constant eigenfunction (i.e., $f\in{C_0}$ such
that $f({\bf p})=1$ for all ${\bf p}\in E$).
\end{lemma}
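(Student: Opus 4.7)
The plan is to combine two classical facts: (i) for a connected undirected graph, the graph Laplacian has a one-dimensional kernel spanned by the constant function, and (ii) the game graph $G(\mathcal{G})=(E,A)$ is connected because any two strategy profiles can be joined by a path that switches a single player's action at each step.

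First I would verify that the constant function $f\equiv 1$ lies in $\ker \Delta_0$. This is immediate from the explicit form \eqref{eq:LapExplicit}: each row of $\Delta_0$ sums to zero, so $\Delta_0 f = 0$ whenever $f$ is constant. Equivalently, $\Delta_0 = \delta_0^* \delta_0$ and $\delta_0 f = 0$ whenever $f$ is constant, by \eqref{eq:defD0}.

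For the multiplicity-one claim I would use the standard self-adjoint trick: if $\Delta_0 f = 0$, then
\begin{equation*}
0 = \langle f, \Delta_0 f \rangle_0 = \langle f, \delta_0^* \delta_0 f \rangle_0 = \langle \delta_0 f, \delta_0 f \rangle_1 = \|\delta_0 f\|^2,
\end{equation*}
so $\delta_0 f = 0$. By definition of $\delta_0$, this forces $f(\mathbf{p}) = f(\mathbf{q})$ for every pair of comparable strategy profiles $(\mathbf{p},\mathbf{q}) \in A$. Now it suffices to show that the game graph $(E,A)$ is connected, for then $f$ must be constant on all of $E$, so $\ker \Delta_0$ is spanned by the constant function.

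Connectedness is the only nontrivial step, and it is straightforward. Given two strategy profiles $\mathbf{p} = (\mathbf{p}^1,\ldots,\mathbf{p}^M)$ and $\mathbf{q} = (\mathbf{q}^1,\ldots,\mathbf{q}^M)$, the sequence of profiles obtained by changing the coordinates one at a time, i.e.
\begin{equation*}
(\mathbf{p}^1,\mathbf{p}^2,\ldots,\mathbf{p}^M) \to (\mathbf{q}^1,\mathbf{p}^2,\ldots,\mathbf{p}^M) \to (\mathbf{q}^1,\mathbf{q}^2,\mathbf{p}^3,\ldots,\mathbf{p}^M) \to \cdots \to (\mathbf{q}^1,\ldots,\mathbf{q}^M),
\end{equation*}
visits consecutive $m$-comparable profiles for $m=1,2,\ldots,M$, hence is a path in $G(\mathcal{G})$. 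Therefore $(E,A)$ is connected, completing the argument. I do not anticipate any real obstacle here; the result is essentially the standard Laplacian kernel characterization specialized to a Cartesian product of cliques, which is automatically connected.
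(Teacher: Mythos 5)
Your proof is correct and follows essentially the same route as the paper: both arguments reduce the claim to connectedness of the game graph plus the standard fact that a connected graph's Laplacian has a one-dimensional kernel spanned by the constant function. The only difference is that the paper asserts connectedness and cites the spectral graph theory literature for the kernel characterization, whereas you supply the short self-adjointness argument and the explicit coordinate-switching path; these are exactly the details the paper leaves implicit.
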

\begin{lemma} \label{lem:identity}
The pseudoinverses of  operators $D_m$ and $D$ satisfy the following identities:
(i) $D_m^\dagger = \frac{1}{h_m} D_m^*$, (ii) $(\sum_i D_i)^\dagger  D_j =  (\sum_i D_i^* D_i)^\dagger D_j^* D_j$, (iii) $D^\dag = [D_1^\dag ; \ldots ; D_M^\dag]$, (iv) $\Pi=  D^\dagger D$  (v)  $D D^\dag \delta_0=\delta_0$.
\end{lemma}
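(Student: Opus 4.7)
The plan is to verify the five identities in order, leveraging two facts already established in this section: the off-diagonal orthogonality $D_k^* D_m = 0$ for $k\neq m$ (which follows from $D_m = \Lambda_m \delta_0$ together with $\Lambda_k \Lambda_m = 0$), and the relation $\Delta_{0,m} = h_m \Pi_m$ with $\Pi_m = D_m^\dagger D_m$ from the preceding lemma. I will also use the standard pseudoinverse identity $L^\dagger = (L^*L)^\dagger L^*$, which is immediate from the singular value decomposition of $L$.

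For (i), rewrite $\Delta_{0,m} = D_m^* D_m = h_m \Pi_m = h_m D_m^\dagger D_m$, which shows that $D_m^*$ and $h_m D_m^\dagger$ agree on $\im (D_m)$. Both operators also vanish on $(\im D_m)^\perp = \ker (D_m^*)$ — the former by definition of adjoint/range orthogonality, the latter as a general property of the Moore--Penrose pseudoinverse. Hence the two operators coincide on all of $C_1$. For (ii), apply the pseudoinverse identity to $L = \sum_i D_i$: using $(\sum_i D_i)^* = \sum_i D_i^*$ together with $D_i^* D_k = 0$ for $i\neq k$, the products $(\sum_i D_i)^*(\sum_i D_i)$ and $(\sum_i D_i)^* D_j$ collapse to $\sum_i D_i^* D_i$ and $D_j^* D_j$, respectively, yielding (ii).

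For (iii), endow $C_0^M$ with the product inner product, so that $D^* = [D_1^*;\ldots;D_M^*]$ as a block column. The off-diagonal orthogonality then makes $D^* D$ block diagonal with blocks $D_m^* D_m$, and applying $D^\dagger = (D^* D)^\dagger D^*$ block by block together with $(D_m^* D_m)^\dagger D_m^* = D_m^\dagger$ produces the claimed stacked form for $D^\dagger$. Identity (iv) then follows from (iii) and (i): the $k$-th component of $D^\dagger D u$ is
\[
D_k^\dagger\Bigl(\sum_m D_m u^m\Bigr) \;=\; \tfrac{1}{h_k}\sum_m D_k^* D_m\, u^m \;=\; \Pi_k u^k,
\]
matching $\Pi u$ componentwise.

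Finally, (v) is a simple range-inclusion argument: because $\delta_0 \phi = \sum_m D_m \phi = D(\phi;\ldots;\phi)$, we have $\im (\delta_0) \subseteq \im (D)$, and since $D D^\dagger$ is the orthogonal projection onto $\im (D)$, it restricts to the identity on $\im (\delta_0)$, giving $D D^\dagger \delta_0 = \delta_0$. No genuine obstacle is anticipated; the main point requiring care is to set up the product inner product on $C_0^M$ consistently so that the block-diagonal identification underlying (iii) is valid, after which every identity reduces to the off-diagonal vanishing $D_k^* D_m = 0$ and the Laplacian--projector relation $\Delta_{0,m} = h_m \Pi_m$.
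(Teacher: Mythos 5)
Your proof is correct, and it rests on the same three ingredients as the paper's: the identity $L^\dagger=(L^*L)^\dagger L^*$, the off-diagonal orthogonality $D_k^* D_m=0$, and the relation $\Delta_{0,m}=h_m\Pi_m$. Part (ii) is argued identically, and (iv) is the same computation phrased componentwise. The differences are in how the remaining parts are verified. For (i) the paper takes adjoints of the pseudoinverse identity and uses $D_m\Pi_m=D_m$, whereas you decompose $C_1=\im(D_m)\oplus\ker(D_m^*)$ and check that $D_m^*$ and $h_m D_m^\dagger$ agree on each summand; both are short and valid. For (iii) the paper verifies the four Moore--Penrose axioms for the candidate $[D_1^\dagger;\ldots;D_M^\dagger]$ directly, while you compute $D^\dagger=(D^*D)^\dagger D^*$ using the block-diagonality of $D^*D$; your route is slicker but leans on the (standard) facts that the pseudoinverse of a block-diagonal operator is the block-diagonal of the pseudoinverses and that $D^*=[D_1^*;\ldots;D_M^*]$ under the product inner product --- the one point genuinely requiring care, which you correctly flag. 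For (v) the paper expands $DD^\dagger\delta_0=\sum_m D_mD_m^\dagger D_m$ and simplifies term by term, while your observation that $\im(\delta_0)\subseteq\im(D)$ and that $DD^\dagger$ is the orthogonal projector onto $\im(D)$ is more conceptual and arguably cleaner. None of these variations introduces a gap.
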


\subsection{Decomposition of Games} \label{se:orthDecomp}
In this subsection we prove that the space of games ${\cal G}_{{\cal
M},E}$ is a direct sum of three subspaces -- potential, harmonic and
nonstrategic, each  with distinguishing properties.

We start our discussion by formalizing the notion of nonstrategic
information. Consider two games $\cal G$, $\hat{\cal G} \in {\cal
G}_{{\cal M}, E}$ with utilities $\{ u^m \}_{m\in {\cal M}}$ and $\{
\hat{u}^m \}_{m\in {\cal M}}$ respectively. Assume that the utility
functions $\{{u}^m\}_{m\in{\cal M}}$ satisfy $u^m({\bf p}^m, {\bf
p}^{-m})=\hat{u}^m({\bf p}^m, {\bf p}^{-m}) + \alpha({{\bf p}^{-m}})$
where $\alpha$ is an arbitrary function.  It can be readily seen that
these two games have exactly the same pairwise comparison functions,
hence they are strategically equivalent. To express the same idea in
words, whenever we add to the utility of one player an arbitrary
function of the actions of the others, this does not directly affect
the incentives of the player to choose among his/her possible
actions. Thus, pairwise comparisons of utilities (or equivalently, the game graph
representation) uniquely identify equivalent classes of games that
have identical properties in terms of, for instance, sets of
equilibria\footnote{We note, however, that payoff-specific information
such as efficiency notions are not necessarily preserved; see
Section~\ref{se:nonStrComp}.}. To fix a representative for each
strategically equivalent game, we introduce below a notion of games
where the nonstrategic information has been removed.

\begin{definition}[Normalized games] 
We say that a game with utility functions $\{u^m\}_{m\in{\cal M}}$ is
\emph{normalized} or does not contain {nonstrategic information} if
\be
\label{eq:nonstrategic_def} \sum_{{\bf p}^m} u^m({\bf p}^m, {\bf
p}^{-m})=0 \ee for all ${\bf p}^{-m} \in E^{-m}$ and all $m \in {\cal
M}$.
\end{definition}
Note that removing the nonstrategic information amounts to normalizing
the sum of the payoffs in the game. Normalization can be made with an
arbitrary constant. However, in order to simplify the subsequent
analysis we normalize the sum of the payoffs to zero.
Intuitively, this  suggests that given the strategies of  an agent's opponents, the average payoff its strategies yield, is  equal to zero.
The following lemma characterizes the set of normalized games in terms of the operators introduced in the previous section.
\begin{lemma} \label{lem:normalize}
Given  a game $\cal G$ with utilities $u=\{u^m\}_{m\in {\cal M}}$, the following are equivalent:
(i) $\cal G$ is normalized, (ii)  $ \Pi_m u^m = u^m $ for all $m $,  (iii)  $\Pi u =u$, (iv) $u \in (\ker D)^\perp$.
\end{lemma}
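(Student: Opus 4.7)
The plan is to prove the chain (i)~$\Leftrightarrow$~(ii)~$\Leftrightarrow$~(iii)~$\Leftrightarrow$~(iv), relying on the explicit description of $\ker D_m$ from Lemma~\ref{theo:projDiff2} together with the pseudoinverse identities from Lemma~\ref{lem:identity}. The conceptual core is that normalization is exactly orthogonality of each $u^m$ to the basis vectors $\nu_{\mathbf{q}^{-m}}$ spanning $\ker D_m$; the other equivalences are then cosmetic.

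For (i)~$\Leftrightarrow$~(ii), I would first recall that $\Pi_m = D_m^\dagger D_m$ is an orthogonal projection onto $(\ker D_m)^\perp$, so $\Pi_m u^m = u^m$ holds iff $u^m \in (\ker D_m)^\perp$. By Lemma~\ref{theo:projDiff2}, $\ker D_m$ has the basis $\{\nu_{\mathbf{q}^{-m}}\}_{\mathbf{q}^{-m} \in E^{-m}}$, so $u^m \in (\ker D_m)^\perp$ iff $\langle u^m, \nu_{\mathbf{q}^{-m}} \rangle_0 = 0$ for every $\mathbf{q}^{-m}$. Using the definition \eqref{eq:defineNuQm} of $\nu_{\mathbf{q}^{-m}}$ together with the inner product on $C_0$, a direct computation gives
\[
\langle u^m, \nu_{\mathbf{q}^{-m}} \rangle_0 \;=\; \sum_{\mathbf{p} \in E} u^m(\mathbf{p})\,\nu_{\mathbf{q}^{-m}}(\mathbf{p}) \;=\; \sum_{\mathbf{p}^m \in E^m} u^m(\mathbf{p}^m, \mathbf{q}^{-m}),
\]
so the vanishing of these inner products for all $\mathbf{q}^{-m}$ and all $m$ is exactly the normalization condition \eqref{eq:nonstrategic_def}. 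This is the only step that uses the specific combinatorial structure of $\ker D_m$, and is the main (albeit short) substantive content of the lemma.

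For (ii)~$\Leftrightarrow$~(iii), I would simply observe that by construction $\Pi = \mathrm{diag}(\Pi_1,\ldots,\Pi_M)$, so $\Pi u = u$ in $C_0^M$ is equivalent, component by component, to $\Pi_m u^m = u^m$ for every $m$.

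Finally, for (iii)~$\Leftrightarrow$~(iv), I would invoke Lemma~\ref{lem:identity}(iv), which gives $\Pi = D^\dagger D$. Since for any linear operator $L$ between inner product spaces the composition $L^\dagger L$ is the orthogonal projection onto $(\ker L)^\perp$, we conclude $\Pi u = u$ iff $u \in (\ker D)^\perp$, completing the cycle of equivalences. No step should pose any real obstacle; the only thing to be careful about is making the identification $\langle u^m, \nu_{\mathbf{q}^{-m}} \rangle_0 = \sum_{\mathbf{p}^m} u^m(\mathbf{p}^m,\mathbf{q}^{-m})$ explicit, since it is here that the otherwise abstract chain of projections is translated into the concrete normalization equation.
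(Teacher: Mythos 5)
Your proposal is correct and follows essentially the same route as the paper's own proof: both reduce (iii)$\Leftrightarrow$(iv) and (ii)$\Leftrightarrow$(iii) to the projection identities from Lemma~\ref{lem:identity} and the block-diagonal definition of $\Pi$, and both establish (i)$\Leftrightarrow$(ii) by identifying the normalization condition with orthogonality to the basis $\{\nu_{\mathbf{q}^{-m}}\}$ of $\ker D_m$ from Lemma~\ref{theo:projDiff2}. The only difference is that you spell out the computation $\langle u^m, \nu_{\mathbf{q}^{-m}}\rangle_0 = \sum_{\mathbf{p}^m} u^m(\mathbf{p}^m,\mathbf{q}^{-m})$ explicitly, which the paper leaves as an observation.
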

\begin{proof}
The equivalence of (iii) and (iv) is immediate since by  Lemma \ref{lem:identity}, $\Pi=D^\dagger D$ is a projection operator to the orthogonal complement of the kernel of $D$. The equivalence of (ii) and (iii) follows from the definition of $\Pi= diag(\Pi_1, \dots , \Pi_M)$.
 To complete the proof we prove (i) and (ii) are equivalent.

Observe that \eqref{eq:nonstrategic_def} holds if and only if $\langle u^m, \nu_{{\bf q}^{-m}}({\bf p})\rangle =0$ for all ${{\bf q}^{-m}} \in E^{-m}$, where $\nu_{{\bf q}^{-m}}$ is as defined in \eqref{eq:defineNuQm}.  Lemma \ref{theo:projDiff2} implies that  $\{\nu_{{\bf q}^{-m}}\}$ are basis vectors of $\ker D_m$. Thus, it follows that \eqref{eq:nonstrategic_def} holds if and only if $u^m$ is orthogonal to all of the basis vectors of $\ker D_m$, or equivalently when $u^m \in (\ker D_m) ^\perp$. Since $\Pi_m = D_m^\dagger D_m$ is a projection operator to $(\ker D_m)^\perp$, we have $u^m \in (\ker D_m) ^\perp$ if and only if $\Pi_m u^m=u^m$, and the claim follows.
\end{proof}

Using Lemma \ref{lem:normalize}, we next show below that for each game $\cal G$ there exists a unique
strategically equivalent game which is normalized (contains no
nonstrategic information).

\begin{lemma} \label{lem:uniqueNonstr}
 Let $\cal G$ be a game with utilities $\{u^m\}_{m\in {\cal M}}$. Then
 there exists a unique game~$\hat{\cal G}$ which (i) has the same
 pairwise comparison function as $\cal G$ and (ii) is
 normalized. Moreover the utilities $\hat{u}=\{\hat{u}^m\}_{m\in {\cal M}} $ of
 $\hat{\cal G}$ satisfy $\hat{u}^m= \Pi_m u^m$ for all $m$. 
\end{lemma}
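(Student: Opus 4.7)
The plan is to characterize strategic equivalence in terms of the operators $D_m$, and then observe that requiring normalization picks out a unique element from each equivalence class via orthogonal projection.

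First, I would translate both conditions into operator language. Two games with utilities $\{u^m\}$ and $\{\hat u^m\}$ have the same pairwise comparison function exactly when for each player $m$ and each $m$-comparable pair $({\bf p}, {\bf q})$, the differences $u^m({\bf q}) - u^m({\bf p})$ and $\hat u^m({\bf q}) - \hat u^m({\bf p})$ agree, which by the definition of $D_m$ in \eqref{eq:defDm} is precisely $D_m u^m = D_m \hat u^m$, i.e., $\hat u^m - u^m \in \ker D_m$ for every $m$. On the other hand, by the previous Lemma \ref{lem:normalize}, $\hat{\cal G}$ is normalized iff $\hat u^m \in (\ker D_m)^\perp$ for every $m$.

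Second, I would decouple the problem across players. Since the two requirements above are stated one player at a time, it is enough to prove, for each fixed $m$, that there is a unique $\hat u^m \in C_0$ with $\hat u^m - u^m \in \ker D_m$ and $\hat u^m \in (\ker D_m)^\perp$. This is the defining property of orthogonal projection onto $(\ker D_m)^\perp$: write $u^m = \Pi_m u^m + (I - \Pi_m) u^m$, where $\Pi_m u^m \in (\ker D_m)^\perp$ and $(I - \Pi_m)u^m \in \ker D_m$ (using that $\Pi_m = D_m^\dagger D_m$ is the orthogonal projection onto $(\ker D_m)^\perp$, as recorded in the notation table and used in the proof of Lemma \ref{lem:normalize}). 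Setting $\hat u^m := \Pi_m u^m$ evidently satisfies both conditions, and uniqueness follows because any other candidate would differ from $\Pi_m u^m$ by a vector lying in both $\ker D_m$ and $(\ker D_m)^\perp$, hence equal to zero.

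There is essentially no obstacle here; the only thing to be careful about is making sure that the two characterizations (strategic equivalence $\Leftrightarrow \hat u - u \in \ker D$, and normalization $\Leftrightarrow \hat u \in (\ker D)^\perp$) are cleanly invoked from prior results so that the proof reduces to the standard existence/uniqueness of orthogonal decompositions. I would conclude by assembling the player-wise projections into $\hat u = \Pi u$, yielding the stated closed-form $\hat u^m = \Pi_m u^m$.
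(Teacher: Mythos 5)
Your proof is correct and follows essentially the same route as the paper: both arguments reduce the claim to the fact that $\Pi$ (equivalently, each $\Pi_m$) is the orthogonal projection onto the orthogonal complement of $\ker D$, with uniqueness coming from $\ker D \cap (\ker D)^\perp = \{0\}$, and both invoke Lemma \ref{lem:normalize} to identify normalization with membership in $(\ker D)^\perp$. The only cosmetic difference is that you decouple player-by-player with $D_m$ and $\Pi_m$ from the outset, whereas the paper works with the aggregate operators $D$ and $\Pi = D^\dagger D$ and only passes to the player-wise form $\hat u^m = \Pi_m u^m$ at the end via Lemma \ref{lem:identity}.
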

\begin{proof}
To prove the claim, we show that given $u=\{u^m\}_{m\in {\cal M}}$, the game with the collection of utilities $D^\dagger D u = \Pi u$, is a normalized game with the same pairwise comparisons, and moreover there cannot be another normalized game which has the same pairwise comparisons.

Since $\Pi$ is a projection operator, it follows that $\Pi \Pi u = \Pi u$, and hence, Lemma \ref{lem:normalize} implies that $\Pi u$ is normalized. Additionally, using properties of the pseudoinverse we have $D \Pi u = D D^\dagger D u= D u$, thus $\Pi u$ and $u$ have the same pairwise comparison.

Let $v\in C_0^M$ denote the collection of payoff functions of a game which is normalized and has the same pairwise comparison as $u$.
It follows that $ D v = Du= D \Pi u$, and hence 
$v-\Pi u \in \ker D$. On the other hand, since both $v$ and $\Pi u$ are normalized, by Lemma \ref{lem:normalize}, we have $v, \Pi u \in   (\ker D)^\perp$, and thus $v - \Pi u \in   (\ker D)^\perp$. Therefore, it follows that $v - \Pi u =0$, hence $\Pi u$ is the collection of utility functions of the unique normalized game, which has the same pairwise comparison function as $\cal G$.
By Lemma \ref{lem:identity}, $\Pi u= \{\Pi_m u^m\}$, hence the claim follows.
\end{proof}

We are now ready to define the subspaces of games that will appear in
our decomposition result.
\begin{definition} 
\label{def:subspaces}
The potential subspace $\cal P$, the harmonic subspace $\cal H$ and
the nonstrategic subspace $\cal N$ are defined as:
\begin{equation}
\label{eq:subspaceDef}
\begin{aligned}
{\cal P} & \triangleq \big \{ u \in C_0^M  ~| ~  u= \Pi u  ~ \mbox{and} ~ Du\in \im \delta_0 \big \} 
\\
{\cal H} & \triangleq\big \{ u\in C_0^M  ~|~  u= \Pi u   ~ \mbox{and} ~ Du\in \ker \delta_0^* \big \}\\
{\cal N} & \triangleq\big \{ u\in C_0^M  ~|~  u\in \ker D  \big \}. \\
\end{aligned}
\end{equation}
\end{definition}
Since the operators involved in the above definitions are linear, it follows that the sets ${\cal P}$, ${\cal H}$ and ${\cal N}$ are indeed subspaces.

Lemma \ref{lem:normalize} implies that the games in $\cal P$ and $\cal H$ are normalized (contain no
nonstrategic information).  The flows generated by the games
in these two subspaces are related to the flows induced by the
Helmholtz decomposition.  It follows from the definitions that the flows generated by a game
in $\cal P$ are in the image space of $\delta_0$ 
and the flows generated by a game in $\cal H$ are in the kernel of
$\delta_0^*$. Thus, $\cal P$ corresponds to the set of normalized games, which have globally consistent pairwise comparisons. 
Due to \eqref{eq:noCurl}, the pairwise comparisons of games do not have locally inconsistent components, thus  Theorem \ref{thm:hodge} implies that $\cal H$ corresponds to the set of normalized games, which have globally inconsistent but locally consistent pairwise comparisons. 
Hence,
from the perspective of the Helmholtz decomposition, the flows
generated by games in $\cal P$ and $\cal H$ are gradient 
and harmonic flows respectively.  
On the other hand the
flows generated by games in $\cal N$ are always zero,
since $Du=0$ in such games.

As discussed in the previous section the image spaces of $D_m$ are orthogonal.  Thus, since by definition $Du=\sum_{m\in{\cal M}} D_m u^m$, it follows that $u=\{u^m\}_{m\in{\cal M}} \in \ker D$ if and only if $u^m\in \ker D_m$ for all $m\in {\cal M}$.
Using these facts together with Lemma \ref{lem:normalize}, we obtain the following alternative description of the subspaces of games:
\begin{equation}
\label{eq:subspaceDef2}
\begin{aligned}
{\cal P} &=\big \{ \{u^m\}_{m\in{\cal M}}  ~|~ D_m u^m=D_m \phi \mbox{ and  $\Pi_m u^m = u^m$}\mbox{ for all $m\in {\cal M}$ and some $\phi \in C_0$}   \big \} \\
{\cal H} &=\big \{ \{u^m\}_{m\in{\cal M}}  ~|~ \delta_0^* \sum_{m\in{\cal M}} D_m u^m=0  \mbox{ and  $\Pi_m u^m = u^m$  for all $m\in {\cal M}$ }   \big \} \\
{\cal N} &=\big \{ \{u^m\}_{m\in{\cal M}}  ~|~ D_m u^m=0   \mbox{ for all $m\in {\cal M}$ }   \big \}. \\
\end{aligned}
\end{equation}

The main result of this section shows that not only these subspaces
have distinct properties in terms of the flows they generate, but in
fact they form a direct sum decomposition of the space of games.  We
exploit the Helmholtz decomposition (Theorem~\ref{thm:hodge}) for the
proof.
\begin{theorem} \label{theo:decompTheoSpace}
The space of games ${\cal G}_{{\cal M},E}$ is a direct sum of the
potential, harmonic and nonstrategic subspaces, i.e., ${\cal G}_{{\cal
M},E} = {\cal P} \oplus {\cal H} \oplus {\cal N}$. In particular,
given a game with utilities $u=\{u^m\}_{m\in{\cal M}}$, it can be
uniquely decomposed in three components:
\begin{itemize}
\item \noindent{\bf Potential Component:} $u_P  \triangleq D^\dag \delta_0 \delta_0^\dag D u$
\item \noindent{\bf Harmonic Component:}  $u_H  \triangleq D^\dag (I-\delta_0 \delta_0^\dag) D u$
\item \noindent{\bf Nonstrategic Component:} $u_N  \triangleq (I-D^\dag D) u$
\end{itemize}
where $u_P + u_H + u_N=u$, and $u_P \in \mathcal{P}$, $u_H \in
\mathcal{H}$, $u_N \in \mathcal{N}$. The potential function associated
with $u_P$ is $\phi \triangleq \delta_0^\dag D u$.
\end{theorem}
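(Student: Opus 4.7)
My plan is to verify the explicit formulas first (which automatically gives existence), then check subspace membership using pseudoinverse identities, and finally establish uniqueness from the orthogonality in the Helmholtz decomposition combined with the fact that $D$ is injective on normalized games.

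\textbf{Existence and additivity.} The identity $u_P+u_H+u_N=u$ is immediate: the first two components sum to $D^\dag \delta_0\delta_0^\dag Du + D^\dag(I-\delta_0\delta_0^\dag)Du = D^\dag Du=\Pi u$, and adding $u_N=(I-\Pi)u$ recovers $u$. Verifying membership in the correct subspace uses only the pseudoinverse identities $DD^\dag D=D$, $D^\dag DD^\dag=D^\dag$, and Lemma \ref{lem:identity}(v). For $u_P$: $\Pi u_P = D^\dag D D^\dag \delta_0\delta_0^\dag Du = D^\dag \delta_0\delta_0^\dag Du = u_P$, and $Du_P = DD^\dag \delta_0\,(\delta_0^\dag Du) = \delta_0\,(\delta_0^\dag Du) \in \im\delta_0$; this also shows that $\phi\triangleq \delta_0^\dag Du$ satisfies $Du_P=\delta_0\phi$, so it is the potential function. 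For $u_H$: $\Pi u_H=u_H$ by the same pseudoinverse manipulation, while $Du_H = DD^\dag Du - DD^\dag\delta_0\delta_0^\dag Du = Du - \delta_0\delta_0^\dag Du = (I-\delta_0\delta_0^\dag)Du$, which lies in $(\im\delta_0)^\perp=\ker\delta_0^*$ since $\delta_0\delta_0^\dag$ is the orthogonal projector onto $\im\delta_0$. For $u_N$: $Du_N=Du-DD^\dag Du=0$.

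\textbf{Uniqueness (directness of the sum).} Suppose $v_P+v_H+v_N=0$ with $v_P\in\mathcal{P}$, $v_H\in\mathcal{H}$, $v_N\in\mathcal{N}$. Applying $D$ kills $v_N$ and yields $Dv_P+Dv_H=0$. By the subspace definitions, $Dv_P\in\im\delta_0$ and $Dv_H\in\ker\delta_0^*=(\im\delta_0)^\perp$, so these two flows are orthogonal in $C_1$, forcing $Dv_P=Dv_H=0$. Now the defining property $v_P=\Pi v_P=D^\dag D v_P$ (from Lemma \ref{lem:normalize}) gives $v_P=D^\dag\cdot 0=0$, and similarly $v_H=0$; then $v_N=0$. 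This shows that the three subspaces intersect only at the origin pairwise and jointly, so $\mathcal{G}_{\mathcal{M},E}=\mathcal{P}\oplus\mathcal{H}\oplus\mathcal{N}$.

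\textbf{Anticipated obstacle.} The calculations themselves are routine once one trusts the identities assembled in Lemma \ref{lem:identity}. The delicate point is the interplay between the pseudoinverse $D^\dag$ (which involves the direct sum structure across players) and the scalar gradient $\delta_0$: the step $DD^\dag \delta_0=\delta_0$ is what makes the Helmholtz decomposition of $Du$ lift cleanly to a decomposition of $u$ itself, and it is essential both for showing $Du_P\in\im\delta_0$ with the explicit potential $\phi=\delta_0^\dag Du$, and for showing $u_P+u_H=\Pi u$. I would make sure to flag this identity as the technical heart of the argument; once it is in hand, the Helmholtz orthogonality $\im\delta_0\perp\ker\delta_0^*$ (together with the fact, implicit via \eqref{eq:noCurl}, that $Du$ has no locally inconsistent component, so nothing is lost by ignoring $\im\delta_1^*$) does the rest.
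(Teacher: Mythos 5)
Your proposal is correct and follows essentially the same route as the paper: both verify the explicit formulas via the pseudoinverse identities of Lemma \ref{lem:identity} (with $DD^\dag\delta_0=\delta_0$ doing the real work, as you rightly flag), and both derive directness from the orthogonality $\im\delta_0\perp\ker\delta_0^*$ in $C_1$ together with the fact that $\mathcal{N}=\ker D$ and normalized games are recovered from their flows via $D^\dag$. The only cosmetic difference is that the paper establishes uniqueness by applying the three defining operators to the zero sum and showing each recovers its component, whereas you apply $D$ first and invoke the orthogonality directly — the underlying argument is identical.
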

\begin{proof}
The decomposition of $\mathcal{G}_{\mathcal{M},E}$ described above
follows directly from pulling back the Helmholtz decomposition of
$C_1$ through the map $D$, and removing the kernel of $D$; see
Figure~\ref{fig:pullback}.

\begin{figure}[h]
\begin{center}
\begin{tikzpicture}
\matrix(a)[matrix of math nodes, row sep=3em, column sep=3em,
text height=2.5ex, text depth=0.35ex]
{{\cal G}_{{\cal M},E} \cong C_0^M 
\\ C_0 & C_1 & C_2\\};
\path[->](a-1-1) edge node[above] {$D$} (a-2-2);
\path[->](a-2-1) edge node[below] {$\delta_0$} (a-2-2);
\path[->](a-2-2) edge node[below] {$\delta_1$} (a-2-3);
\end{tikzpicture}
\end{center}
\caption{The Helmholz decomposition of the space of flows ($C_1$) can
be pulled back through $D$ to a direct sum decomposition of the space
of games (${\cal G}_{{\cal M},E}$).}
\label{fig:pullback}
\end{figure}

The components of the decomposition clearly satisfy $u_{P} +
u_{H} + u_{N} = u$.  We verify the inclusion
properties, according to \eqref{eq:subspaceDef}. Both
$u_{P}$ and $u_{H}$ are orthogonal to $\mathcal{N} =
\ker D$, since they are in 
 the range of $D^\dag$.
\begin{itemize}
\item For the potential component, let $\phi \in C_0$ be such that  $\phi = \delta_0^\dag D
u$. Then, we have $D u_P \in \im(\delta_0)$, since
\[
D u_{P} = D D^\dag \delta_0 \delta_0^\dag D u = \delta_0 \delta_0^\dag
D u = \delta_0 \phi,
\]
where we used the definition of $u_{P}$,
the property (v) in Lemma \ref{lem:identity}
and the definition of $\phi$, respectively. This equality also implies that $\phi$ is the potential function associated with $u_P$.

\item For the harmonic component $u_{H}$, we have $D u_H \in \ker \delta_0^*$: 
\[
\delta_0^* D u_{H} =  
\delta_0^* D D^\dag (I-\delta_0 \delta_0^\dag) D u = \delta_0^* (I- \delta_0 \delta_0^\dag) D u = 0,
\]
as follows from the definition of $u_H$, 
the property (v) in Lemma \ref{lem:identity},
and properties of the pseudoinverse.

\item To check that $u_{N} \in \mathcal{N}$, we have
\[
D u_{N} = D (I-D^\dag D) u = (D- D D^\dag D) u =  0.
\]
\end{itemize}
In order to prove that the  direct sum decomposition property holds, we assume that  there exists $\hat{u}_{P} \in {\cal P}$, $\hat{u}_H \in {\cal H}$ and $\hat{u}_N\in {\cal N}$ such that $\hat{u}_{P} + \hat{u}_{H} + \hat{u}_{N} = 0$. Observe that $I-D^\dagger D$ is a projection operator to the kernel of  $D$. Thus, from the definition of   the subspaces $\cal P$,  $\cal H$ and $\cal N$, it follows that $(I-D^\dagger D) \hat{u}_N=\hat{u}_N$  and $(I-D^\dagger D) \hat{u}_P=(I-D^\dagger D) \hat{u}_H=0$. Similarly,   $\delta_0 \delta_0^\dagger$ is a projection operator to the image of $\delta_0$. Since by definition $D \hat{u}_P\in \im \delta_0$, and $D \hat{u}_H \in \ker \delta_0^*= (\im \delta_0)^\perp$, it follows that $\delta_0 \delta_0^\dagger D \hat{u}_P=D \hat{u}_P$ and $\delta_0 \delta_0^\dagger D \hat{u}_H=0$.

Using these identities, it follows that
\begin{align*}
  (D^\dag \delta_0 \delta_0^\dag D) (\hat{u}_{P} + \hat{u}_{H} + \hat{u}_{N}) = \hat{u}_{P} \\
  D^\dag (I - \delta_0 \delta_0^\dag) D (\hat{u}_{P} + \hat{u}_{H} + \hat{u}_{N}) = \hat{u}_{H} \\
 (I-D^\dag D) ( \hat{u}_{P} + \hat{u}_{H} + \hat{u}_{N}) =\hat{u}_{N}, 
\end{align*}
Since,  $\hat{u}_{P} + \hat{u}_{H} + \hat{u}_{N} = 0$ by our assumption, it follows that 
$\hat{u}_{P} = \hat{u}_{H} = \hat{u}_{N} = 0$, and hence the direct sum decomposition property follows.
\end{proof}

The pseudoinverse  of a linear operator $L$, projects its argument to the image space  of $L$, and  then,  pulls the projection back to the the  domain of $L$.
Thus, intuitively,   the potential function $\phi=\delta_0^\dagger D u$, defined in the theorem, is such that the gradient flow associated with it
 ($\delta_0 \phi $) approximates the flow  in the original game  ($Du$),
in the best possible way.  The potential component of the game can be  identified by pulling back this gradient flow through $D$ to $C_0^M$. 
The harmonic component can similarly be obtained using the harmonic flow.

Since $\delta_0=\sum_m D_m$, it follows that $\phi=\delta_0^\dagger D u=(\sum_m D_m)^\dagger \sum_m D_m u^m$. Thus, Lemma \ref{lem:identity} (ii), and identities $\Delta_{0,m}=D_m^* D_m$ and $\Delta_0=\sum_m \Delta_{0,m}$ imply that 
$$\phi= \Delta_0^{\dagger}  \sum_{m\in{\cal M}}   \Delta_{0,m} u^m.$$
Additionally,  from  Lemma \ref{lem:identity} (iii) and (iv) it follows that $D^\dagger \delta_0= [D_1^\dagger D_1; \dots ; D_M^\dagger D_M]=[\Pi_1; \dots ; \Pi_M]$ and $D^\dagger D=\Pi=diag(\Pi_1, \dots , \Pi_M)$. Using these identities,  the utility functions of  components of a game
can  alternatively be expressed as follows:
\begin{itemize}
\item \noindent{\bf Potential Component:} 
${u}_P^m= \Pi_m \phi$, for all $m\in{\cal M}$,
\item \noindent{\bf Harmonic Component:} 
$  u^m_H = \Pi_m u^m - \Pi_m \phi$, for all $m\in{\cal M}$,
\item \noindent{\bf  Nonstrategic Component:} 
${u}_N^m=(I-\Pi_m){u}^m$, for all $m\in{\cal M}$.
\end{itemize}
It can be seen that the definitions of the subspaces do not rely on the inner product in $C_0^M$. Thus, the direct sum property implies  that the decomposition is canonical, i.e., it is independent of the inner product used in $C_0^M$. The above expressions provide  closed form solutions for the utility functions in the decomposition, without reference to this inner product. We show in Section \ref{se:projection} that our decomposition is indeed orthogonal with respect to a natural inner product in $C_0^M$.

Note that  $  \Delta_0 : C_0 \rightarrow C_0$, whereas $\delta_0: C_0 \rightarrow C_1$. Since $C_1$ and $C_0$ are associated with the edges and the nodes of the game graph respectively, in general $C_1$ is higher dimensional than $C_0$. Therefore, calculating $\Delta_0^\dagger$  is computationally more tractable than calculating $\delta_0^\dagger$. Hence, the alternative expressions for the components of a game and the potential function $\phi$, have computational benefits over using the results of Theorem \ref{theo:decompTheoSpace} directly.

We conclude this section by characterizing the dimensions of the
potential, harmonic and nonstrategic subspaces.
\begin{proposition} 
\label{theo:dimension}
The dimensions of the subspaces $\cal P$, $\cal H$ and $\cal N$ are:
\begin{enumerate}
\item \noindent{$\dim({\cal P})=\prod_{ m\in {\cal M}} h_m-1$},
\item \noindent{$\dim({\cal H})=(M-1)\prod_{m\in{\cal M}} h_m - \sum_{m\in{\cal M}} \prod_{k\neq m} h_k+1$}.
\item \noindent{$\dim({\cal N})=\sum_{m\in{\cal M}} \prod_{k\neq m} h_k$}.
\end{enumerate}
\end{proposition}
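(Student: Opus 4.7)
The plan is to compute the three dimensions in a convenient order, then use the direct sum $\mathcal{G}_{\mathcal{M},E} = \mathcal{P}\oplus\mathcal{H}\oplus\mathcal{N}$ established in Theorem~\ref{theo:decompTheoSpace} to obtain the last one by subtraction from $\dim(C_0^M) = M\prod_{m\in\mathcal{M}} h_m$.

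First I would compute $\dim(\mathcal{N})$. By the alternative description in \eqref{eq:subspaceDef2}, $u=\{u^m\}\in\mathcal{N}$ iff $u^m\in\ker D_m$ for every $m$, so $\mathcal{N}\cong\bigoplus_{m\in\mathcal{M}}\ker D_m$ and $\dim(\mathcal{N})=\sum_m\dim(\ker D_m)$. Lemma~\ref{theo:projDiff2} gives an explicit basis $\{\nu_{\mathbf{q}^{-m}}\}_{\mathbf{q}^{-m}\in E^{-m}}$ for $\ker D_m$, whose cardinality is $|E^{-m}|=\prod_{k\neq m} h_k$. Summing over $m$ yields claim (3).

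Next I would compute $\dim(\mathcal{P})$ by parameterizing $\mathcal{P}$ via potential functions. From Theorem~\ref{theo:decompTheoSpace} and the closed-form expressions derived right after it, every $u\in\mathcal{P}$ has the form $u^m=\Pi_m\phi$ for some $\phi\in C_0$, and conversely, for any $\phi\in C_0$ the tuple $\Pi\phi=(\Pi_1\phi,\dots,\Pi_M\phi)$ lies in $\mathcal{P}$ (normalization is immediate since $\Pi$ is idempotent, and $D\Pi\phi = \sum_m D_m\Pi_m\phi = \sum_m D_m\phi = \delta_0\phi\in\im\delta_0$). Thus the map $T\colon C_0\to\mathcal{P}$ defined by $T(\phi)=\Pi\phi$ is surjective. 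Its kernel consists of those $\phi$ with $\Pi_m\phi=0$ for every $m$, equivalently $\phi\in\bigcap_m\ker D_m$; by Lemma~\ref{theo:projDiff2} (or directly from the definition of $D_m$), $\phi\in\ker D_m$ means $\phi$ is invariant under changes of the $m$-th coordinate, so $\phi$ must be constant on $E$. Hence $\dim(\ker T)=1$ and $\dim(\mathcal{P})=\dim(C_0)-1=\prod_m h_m-1$, which is claim (1).

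Finally, claim (2) follows by subtraction:
\begin{equation*}
\dim(\mathcal{H}) = M\prod_{m\in\mathcal{M}} h_m - \Bigl(\prod_{m\in\mathcal{M}} h_m - 1\Bigr) - \sum_{m\in\mathcal{M}}\prod_{k\neq m} h_k = (M-1)\prod_{m\in\mathcal{M}} h_m - \sum_{m\in\mathcal{M}}\prod_{k\neq m} h_k + 1.
\end{equation*}
The main subtlety, and the only nontrivial step, is the surjectivity of the parameterization $\phi\mapsto\Pi\phi$ onto $\mathcal{P}$; this is essentially repackaging the ``potential component'' formula from Theorem~\ref{theo:decompTheoSpace}, so no new work beyond invoking that theorem together with Lemma~\ref{lem:identity} is needed. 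The remaining ingredients (dimension of $\ker D_m$ and the characterization of $\bigcap_m\ker D_m$ as the constants) are immediate from Lemma~\ref{theo:projDiff2}.
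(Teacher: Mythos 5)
Your proof is correct and follows essentially the same route as the paper: $\dim(\mathcal{N})$ from the explicit kernel basis of Lemma~\ref{theo:projDiff2}, $\dim(\mathcal{P})$ by a rank--nullity argument on the parameterization of $\mathcal{P}$ by potential functions $\phi$ (the paper routes this through a bijection with $\im(\delta_0)$ and Lemma~\ref{lem:SpectraOfDelta0}, whereas you apply rank--nullity directly to $\phi\mapsto\Pi\phi$, but both reduce to identifying $\bigcap_m\ker D_m$ with the constants), and $\dim(\mathcal{H})$ by subtraction using the direct sum of Theorem~\ref{theo:decompTheoSpace}.
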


\begin{proof}
Lemma \ref{theo:projDiff2} provides a basis for kernel of $D_m$ and
$\dim (\ker (D_m)) = |E^{-m}|$, i.e., the cardinality of the basis is
equal to $|E^{-m}|$. By definition ${\cal N}= \ker D = \prod_{m\in{\cal M}}
\ker (D_m) $, hence
\begin{equation}
\dim({\cal N})= \sum_{m\in{\cal M}} \dim(\ker (D_m) )= \sum_{m\in{\cal
        M}} |E^{-m}|= \sum_{m\in{\cal M}} \prod_{k\neq m} h_k.
\end{equation}

Next consider the subspace $\cal P$ of normalized potential games. By
definition, the games in this set generate globally consistent
flows. Moreover, by Lemma~\ref{lem:uniqueNonstr} it follows that there is a unique
game in $\cal P$, which generates a given gradient flow. Thirdly, note that any
globally consistent flow can be obtained as $\delta_0 \phi $ for some
$\phi \in C_0$, and the game $\{\Pi_m \phi \}_{m\in{\cal M}}\in{\cal
P}$ generates the same flows as $\delta_0 \phi$. These three
facts imply that there is a linear bijective mapping between the games
in $\cal P$ and the globally consistent flows, and hence the
dimension of $\cal P$ is equal to the dimension of the globally
consistent flows.

On the other hand, the dimension of the globally consistent flows is
equivalent to $\dim ( \im (\delta_0) )$.  Since $\Delta_0 = \delta_0^*
\delta_0$ it follows that $\ker (\delta_0) \subset \ker (\Delta_0)$.
By Lemma \ref{lem:SpectraOfDelta0} it follows that $\ker (\Delta_0) =
\{f\in C_0~|   f({\bf p})=c\in \mathbb{R}, \mbox{for  all ${\bf p} \in E$ } \} $. It
follows from the definition of $\delta_0$ that $\delta_0 f=0$ for all
$f\in \ker (\Delta_0)$.  These facts imply that $\ker (\delta_0)= \ker
(\Delta_0)$ and hence $\dim ( \ker (\delta_0))=1$. Since $\delta_0$ is
a linear operator it follows that $\dim (\im (\delta_0) ) = \dim
(C_0)- \dim (\ker (\delta_0))= |E|-1= \prod_{m\in{\cal M}} h_m-1$.

Finally observe that $\dim( {\cal G}_{{\cal M}, E})= \dim(C_0^M)= M
|E|= M \prod_{m\in{\cal M}}h_m$. Theorem \ref{theo:decompTheoSpace}
implies that $\dim( {\cal G}_{{\cal M}, E})=\dim( {\cal P}) +
\dim({\cal H})+ \dim({\cal N})$.  Therefore, it follows that
$\dim({\cal H})=(M-1)\prod_{m\in{\cal M}} h_m - \sum_{m\in{\cal M}}
\prod_{k\neq m} h_k+1$.
\end{proof}

\subsection{Bimatrix Games} 
\label{se:bimatrix}

We conclude this section by providing an explicit decomposition result
for bimatrix games, i.e., finite games with two players.  Consider a
bimatrix game, where the payoff matrix of the row player is given by
$A$, and that of the column player is given by $B$; that is, when the
row player plays $i$ and the column player plays $j$, the row player's
payoff is equal to $A_{ij}$ and the column player's payoff is equal
to $B_{ij}$.

Assume that both the row player and the column player have the same
number $h$ of strategies. It immediately follows from Proposition
\ref{theo:dimension} that $\dim {\cal P} = h^2-1$, $\dim {\cal H} =
(h-1)^2$ and $\dim {\cal N} = 2h$.  For simplicity, we further assume
that the payoffs are normalized\footnote{Lemma~\ref{theo:projDiff2} and Lemma \ref{lem:uniqueNonstr}  imply that if the payoffs are not normalized, the normalized payoffs can be obtained as $(A-\frac{1}{h} {\bf 1}{\bf 1}^T A, ~B-\frac{1}{h}B  {\bf 1}{\bf 1}^T)$
.}.
Thus, the definition of normalized games implies that
${\bf 1}^T A=B {\bf 1}=0$, where ${\bf 1}$
denotes the vector of ones.  Denote by ${A}_P$ (${B}_P$) and ${A}_H$
(${B}_H$) respectively, the payoff matrices of the row player (column player) in the potential and harmonic components of the game.  Using our decomposition result (Theorem
\ref{theo:decompTheoSpace}), it follows that
\begin{equation} \label{eq:harmGameDec2P}
(A_P,B_P) = (S+ \Gamma, S- \Gamma),  \qquad \qquad         
(A_H,B_H) = (D- \Gamma,-D+ \Gamma),
\end{equation}
where $S=\frac{1}{2}(A+B)$, $D=\frac{1}{2}(A-B)$, $\Gamma =\frac{1}{2h}(A {\bf 1}
  {\bf 1}^T- {\bf 1} {\bf 1}^T B)$. Interestingly, the potential
component of the game relates to the average of the payoffs in the
original game and the harmonic component relates to the difference in
payoffs of players. The $\Gamma$ term ensures that the potential and
harmonic components do not contain nonstrategic information.  We use
the above characterization in the next example for obtaining explicit
payoff matrices for each of the game components.

\begin{example}[Generalized Rock-Paper-Scissors] \label{ex:rps}
The payoff matrix of the generalized Rock-Paper-Scissors (RPS) game
is given in Table \ref{tab:rpsGen}.  Tables
\ref{tab:rpsNon}, \ref{tab:rpsProj} and \ref{tab:rpsRes} include the
nonstrategic, potential and the harmonic components of the game.  The
special case where $x=y=z=\frac{1}{3}$ corresponds to the celebrated
RPS game. Note that in this case, the potential component of the game
is equal to zero.

\begin{table}[ht]
\centering
{\small
\subfloat[Generalized RPS Game]{ \label{tab:rpsGen}
\begin{tabular}{ | c | c | c | c |}
\hline
  & R & P & S \\ \hline
R & $0, 0$ & $-3x, 3x$ & $3y, -3y$  \\ \hline
P &  $3x, -3x$ & 0, 0 & $-3z, 3z$ \\ \hline
S &  $-3y, 3y$ & $3z, -3z $& $0, 0$ \\ \hline
\end{tabular}
}
\quad\quad
\subfloat[Nonstrategic Component]{ \label{tab:rpsNon}
\begin{tabular}{ | c | c | c | c |}
\hline
  & R & P & S \\ \hline
R & $(x-y), (x-y)$  & $(z-x), (x-y)$ & $(y-z), (x-y)$  \\ \hline
P & $(x-y), (z-x)$  & $(z-x), (z-x)$ & $(y-z), (z-x)$ \\ \hline
S & $(x-y), (y-z)$  & $(z-x), (y-z)$ & $(y-z), (y-z)$ \\ \hline
\end{tabular}
}

\subfloat[Potential Component]{ \label{tab:rpsProj}
\begin{tabular}{ | c | c | c | c |}
\hline
  & R & P & S \\ \hline
R & $(y-x), (y-x)$ & $(y-x), (x-z)$ & $(y-x), (z-y)$  \\ \hline
P & $(x-z), (y-x)$ & $(x-z), (x-z)$ & $(x-z), (z-y)$ \\ \hline
S & $(z-y), (y-x)$ & $(z-y), (x-z)$ & $(z-y), (z-y)$ \\ \hline
\end{tabular}
}

\subfloat[Harmonic Component]{ \label{tab:rpsRes}
\begin{tabular}{ | c | c | c | c |}
\hline
   & R & P & S \\ \hline
R & 0, 0 & $-(x+y+z), (x+y+z)$ & $(x+y+z), -(x+y+z)$  \\ \hline
P &  $(x+y+z), -(x+y+z)$ & 0, 0 & $-(x+y+z), (x+y+z)$ \\ \hline
S &  $-(x+y+z), (x+y+z)$ & $(x+y+z), -(x+y+z)$ & 0, 0 \\ \hline
\end{tabular}
}
}
\caption{Generalized RPS game and its components.}
\label{tab:rps}
\end{table}
\end{example}

\section{Properties of the Components} 
\label{se:decProp}

In this section we study the classes of games that are naturally
motivated by our decomposition.  In particular, we focus on two
classes of games: (i) Games with no harmonic component, (ii) Games
with no potential component.  We show that the first class is
equivalent to the well-known class of \emph{potential games}. We refer
to the games in the second class as \emph{harmonic
games}. Pictorially, we have
\[
\mathcal{P} \quad \: \oplus \: \quad \overbrace{ 
\mathcal{N} \makebox[0pt][r]{$\underbrace{\phantom{\mathcal{P} \quad \: \oplus \: \quad \mathcal{N}}}_{\text{Potential games}}$} 
\quad \: \oplus \: \quad \mathcal{H}}^{\text{Harmonic games}}.
\]
In Sections~\ref{se:dynamicsNpotential} and~\ref{se:HarmGames}, we
explain these facts, and develop and discuss several properties of
these classes of games, with particular emphasis on their
equilibria. Since potential games have been extensively studied in the   literature, our 
main focus is on harmonic games.
In Section~\ref{se:nonStrComp}, we elaborate on the effect
of the nonstrategic component.  Potential and harmonic games are
related to other well-known classes of games, such as the zero-sum
games and identical interest games.  In Section \ref{se:ZS}, we
discuss this relation, in the context of bimatrix games.  As a
preview, in Table \ref{tab:HarmNew}, we summarize some of the
properties of potential and harmonic games that we obtain in the
subsequent sections.

\begin{table}
\centering
\begin{tabular}{ | m{2.5cm} | m{3.5cm} | m{9cm} | }
\hline
 				 & Potential Games & Harmonic Games \\ \hline
{Subspaces} 	& ${\cal P} \oplus {\cal N}$ & ${\cal H} \oplus{\cal N}$ \\ \hline
{Flows}	& Globally consistent  & Locally consistent but globally inconsistent  \\ \hline
Pure NE 		&  Always Exists  &  Generically does not exist   \\  \hline
Mixed NE 		&  Always Exists  & -Uniformly mixed strategy  is always  a mixed NE\\  
	&							& -Players do not strictly prefer their equilibrium strategies. \\ \hline
Special Cases	&			 { \center{\qquad \qquad \quad --} }				& -(two players) Set of mixed Nash equilibria coincides with the set of correlated equilibria \\
	&							& -(two players \& equal number of strategies) Uniformly mixed strategy  is the unique mixed NE \\ \hline
\end{tabular}
 \caption{Properties of potential and harmonic games.} 
 \label{tab:HarmNew}
\end{table}

\subsection{Potential Games} 
\label{se:dynamicsNpotential}
Since the seminal paper of Monderer and Shapley \cite{monderer1996pg},
potential games have been an active research topic.  The desirable
equilibrium properties and structure of these games played a key role
in this. In this section we explain the relation of the potential
games to the decomposition in Section~\ref{se:canonicalRep} and
briefly discuss their properties.

Recall from Definition~\ref{def:ExactPot} that a game is a
potential game if and only if there exists some $\phi \in C_0$ such
that $D u= \delta_0 \phi $. This condition implies that a game is
potential if and only if the associated flow is globally
consistent.
Thus, it can be seen from the definition of the subspaces and Theorem~\ref{theo:decompTheoSpace} that the set of potential games is actually equivalent    to ${\cal P}\oplus{\cal N}$. 
For future reference, we summarize this result in the following theorem.
 
 \begin{theorem} 
\label{theo:setOfPot}
The set of potential games is equal to the subspace ${\cal P} \oplus {\cal N}$.
\end{theorem}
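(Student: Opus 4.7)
The plan is to show the two inclusions using the uniqueness of the decomposition from Theorem \ref{theo:decompTheoSpace} together with the Helmholtz orthogonality. Observe first that, by Definition \ref{def:ExactPot} and the identity $Du=\sum_{m} D_m u^m$, a game $u=\{u^m\}$ is an exact potential game if and only if there exists $\phi\in C_0$ with $Du=\delta_0\phi$, i.e., if and only if $Du\in\im(\delta_0)$. This reformulation is the only nonstandard ingredient, and it follows directly from \eqref{eq:condForExact} and \eqref{eq:gameFlow}.

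For the inclusion $\mathcal{P}\oplus\mathcal{N}\subseteq\{\text{potential games}\}$, take $u=u_P+u_N$ with $u_P\in\mathcal{P}$ and $u_N\in\mathcal{N}$. Since $u_N\in\ker D$ we have $Du=Du_P$, and since $u_P\in\mathcal{P}$ we have $Du_P\in\im(\delta_0)$ by Definition \ref{def:subspaces}. Hence $Du\in\im(\delta_0)$ and $u$ is a potential game.

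For the reverse inclusion, let $u$ be a potential game, so $Du\in\im(\delta_0)$. Decompose $u=u_P+u_H+u_N$ as in Theorem \ref{theo:decompTheoSpace}; the goal is to show $u_H=0$. Applying $D$, and using $Du_N=0$ and $Du_P\in\im(\delta_0)$, gives $Du_H=Du-Du_P\in\im(\delta_0)$. By the defining property of $\mathcal{H}$ in \eqref{eq:subspaceDef}, we also have $Du_H\in\ker(\delta_0^*)$. The Helmholtz orthogonality $\im(\delta_0)\perp\ker(\delta_0^*)$ then forces $Du_H=0$, so $u_H\in\ker D$. On the other hand, since $u_H\in\mathcal{H}$ satisfies $u_H=\Pi u_H$, Lemma \ref{lem:normalize} gives $u_H\in(\ker D)^\perp$. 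Therefore $u_H\in\ker D\cap(\ker D)^\perp=\{0\}$, so $u=u_P+u_N\in\mathcal{P}\oplus\mathcal{N}$.

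I do not expect any real obstacle: the statement is essentially a tautological reformulation once the decomposition in Theorem \ref{theo:decompTheoSpace} and the characterization of $\mathcal{P}$ and $\mathcal{H}$ via the flows $\im(\delta_0)$ and $\ker(\delta_0^*)$ are in place. The only point that requires care is justifying that a game with $u_H\in\mathcal{H}$ and $Du_H=0$ must itself vanish; this is exactly where the normalization condition $u_H=\Pi u_H$ (built into the definition of $\mathcal{H}$) is used, via Lemma \ref{lem:normalize}, to exclude nonstrategic contributions from $u_H$.
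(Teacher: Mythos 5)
Your proof is correct and follows essentially the same route as the paper, which argues (very tersely, in the paragraph preceding the theorem) that a game is potential iff $Du\in\im(\delta_0)$ and then appeals to the definitions of the subspaces and Theorem \ref{theo:decompTheoSpace}. You have simply supplied the details the paper leaves implicit — in particular the step that $Du_H\in\im(\delta_0)\cap\ker(\delta_0^*)=\{0\}$ together with the normalization $u_H=\Pi u_H$ forces $u_H=0$ — and these details are accurate.
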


 Theorem~\ref{theo:setOfPot} implies that potential games are games
which only have potential and nonstrategic components. Since this set
is a subspace, one can consider \emph{projections} onto the set of
potential games, i.e., it is possible to find the closest potential
game to a given game. We pursue the idea of projection in
Section~\ref{se:projection}.  Using the previous theorem we next find
the dimension of the subspace of potential games.
\begin{corollary} \label{cor:potDimFin}
The subspace of potential games, ${\cal P} \oplus {\cal N}$, has
dimension $\prod_{ m\in {\cal M}} h_m + \sum_{m\in{\cal M}}
\prod_{k\neq m} h_k -1$.
\end{corollary}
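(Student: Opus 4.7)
The plan is to combine Theorem~\ref{theo:setOfPot}, which identifies the set of potential games with the subspace $\mathcal{P} \oplus \mathcal{N}$, with the dimension formulas for $\mathcal{P}$ and $\mathcal{N}$ already established in Proposition~\ref{theo:dimension}. Since the decomposition $\mathcal{G}_{\mathcal{M},E} = \mathcal{P} \oplus \mathcal{H} \oplus \mathcal{N}$ is a direct sum (Theorem~\ref{theo:decompTheoSpace}), the subspaces $\mathcal{P}$ and $\mathcal{N}$ have trivial intersection, and hence the dimension of their direct sum equals the sum of their dimensions.

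Concretely, I would first invoke Theorem~\ref{theo:setOfPot} to rewrite the quantity of interest as $\dim(\mathcal{P}) + \dim(\mathcal{N})$. Then, substituting the expressions from Proposition~\ref{theo:dimension}, namely $\dim(\mathcal{P}) = \prod_{m \in \mathcal{M}} h_m - 1$ and $\dim(\mathcal{N}) = \sum_{m \in \mathcal{M}} \prod_{k \neq m} h_k$, yields
\[
\dim(\mathcal{P} \oplus \mathcal{N}) = \prod_{m \in \mathcal{M}} h_m \: - \: 1 \: + \: \sum_{m \in \mathcal{M}} \prod_{k \neq m} h_k,
\]
which matches the statement of the corollary.

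There is no real obstacle here: the corollary is essentially a one-line arithmetic consequence of two results already proven. The only thing worth double-checking is that the direct sum structure genuinely guarantees the additivity of dimensions, which follows immediately from the fact that $\mathcal{P} \cap \mathcal{N} = \{0\}$ (a direct consequence of Theorem~\ref{theo:decompTheoSpace}, since each element of $\mathcal{G}_{\mathcal{M},E}$ admits a unique decomposition into potential, harmonic, and nonstrategic components).
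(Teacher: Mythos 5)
Your proposal is correct and follows exactly the paper's own argument: the paper also proves this corollary by combining Theorem~\ref{theo:setOfPot} with Proposition~\ref{theo:dimension}, relying on the direct sum structure from Theorem~\ref{theo:decompTheoSpace} for the additivity of dimensions. Your additional remark verifying that $\mathcal{P} \cap \mathcal{N} = \{0\}$ is a harmless elaboration of what the paper leaves implicit.
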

\begin{proof}
The result immediately follows from Theorem~\ref{theo:setOfPot} and
Proposition~\ref{theo:dimension}.
\end{proof}

We next provide a brief discussion of the equilibrium properties of potential games.

\begin{theorem}[\cite{monderer1996pg}]
    Let ${\cal G}=\langle {\cal M},\{E^m\},\{u^m \} \rangle$ be a potential game and $\phi$ be a corresponding potential function.
    \begin{enumerate}
        \item The equilibrium set of $\cal G$ coincides with the
        equilibrium set of $\mathcal{G}_\phi \triangleq \langle {\cal
        M},\{E^m\},\{\phi \} \rangle$.
        \item $\cal G$ has a pure Nash equilibrium.
    \end{enumerate}
\end{theorem}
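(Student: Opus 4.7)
The plan is to prove both parts by directly invoking Definition~\ref{def:ExactPot}, which provides the only link needed between $\cal G$ and $\mathcal{G}_\phi$. The key conceptual observation, already emphasized in Section~\ref{subsec:gamesNflows}, is that the Nash equilibrium condition \eqref{eq:nash_basic} depends only on the pairwise comparisons $u^m({\bf p}^m,{\bf p}^{-m}) - u^m({\bf q}^m,{\bf p}^{-m})$, not on the absolute payoff values. Because these pairwise comparisons are exactly what $\phi$ is designed to replicate, the two games must share the same equilibria.

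For part~1, I would fix a strategy profile ${\bf p}\in E$, a player $m\in\mathcal{M}$, and an alternative ${\bf q}^m\in E^m$, and use \eqref{eq:condForExact} to rewrite
\[
u^m({\bf p}^m,{\bf p}^{-m}) - u^m({\bf q}^m,{\bf p}^{-m}) = \phi({\bf p}^m,{\bf p}^{-m}) - \phi({\bf q}^m,{\bf p}^{-m}).
\]
Consequently, the inequality in \eqref{eq:nash_basic} holds for the utility $u^m$ of $\cal G$ if and only if the analogous inequality holds with $\phi$ in place of $u^m$. Since this equivalence applies uniformly over all $m$ and all deviations ${\bf q}^m$, the profile ${\bf p}$ is a Nash equilibrium of $\cal G$ if and only if it is a Nash equilibrium of $\mathcal{G}_\phi$, which establishes the coincidence of the two equilibrium sets.

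For part~2, I would exploit the finiteness of $E = \prod_{m\in\mathcal{M}} E^m$. Since $\phi:E\to\mathbb{R}$ is a function on a finite set, it attains its maximum at some profile ${\bf p}^\star\in E$. At a maximizer one automatically has
\[
\phi({\bf p}^{\star m},{\bf p}^{\star,-m}) \geq \phi({\bf q}^m,{\bf p}^{\star,-m})
\]
for every player $m$ and every ${\bf q}^m\in E^m$, so ${\bf p}^\star$ is a Nash equilibrium of $\mathcal{G}_\phi$. Invoking part~1, ${\bf p}^\star$ is then also a pure Nash equilibrium of $\cal G$.

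Neither step presents a substantial obstacle. If anything, the ``hard part'' is purely conceptual: recognizing that the definition of a potential function is tailored precisely to preserve pairwise comparisons and therefore, by the flow-based reasoning of Section~\ref{subsec:gamesNflows}, the entire strategic content of the game. A modest strengthening worth mentioning is that the argument in part~2 shows not only existence of a pure Nash equilibrium, but also that every local maximizer of $\phi$ along the edges of the game graph $G(\mathcal{G})$ is a pure Nash equilibrium of $\cal G$, and conversely.
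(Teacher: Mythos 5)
Your proof is correct and follows essentially the same route as the paper, which justifies part~1 by the strategic equivalence of $\cal G$ and $\mathcal{G}_\phi$ (identical pairwise comparisons via \eqref{eq:condForExact}) and part~2 by taking a maximizer of $\phi$ over the finite set $E$. Your closing remark that the pure Nash equilibria of $\cal G$ are exactly the local maximizers of $\phi$ along edges of the game graph is a correct and standard refinement.
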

The first result follows from the fact that the games $\mathcal{G}$
and $\mathcal{G}_\phi$ are strategically equivalent. Alternatively,
the preferences in $\mathcal{G}$ are aligned with the global objective
denoted by the potential function $\phi$. The second result is implied
by the first one since in finite games the potential function $\phi$
necessarily has a maximum, and the maximum is a Nash equilibrium of
${\cal G}_\phi$. These results indicate that potential games can be
analyzed by an equivalent game where each player has the same utility
function $\phi$. The second game is easy to analyze since when agents
have the same objective,  the game is similar to an
optimization problem with objective function $\phi$.

Another desirable property of potential games relates to their
dynamical properties.  An important question in game theory is how a
game reaches an equilibrium. This question is usually answered by
theoretical models of player dynamics.  For general games, ``natural''
player dynamics do not necessarily converge to an equilibrium and
various counterexamples are provided in the literature
\cite{fudenberg1998tlg, Jordan:1993p1604}. However, it is known that
some of the well-known dynamics such as fictitious play, best-response
dynamics (and their variants) converges in potential games
\cite{monderer1996pg, marden2005jsf, young2004sla,Candogan2009Pricing,
marden2008rll,fudenberg1998tlg, hofbauer2002gcs, Shamma:2004p1450}.
The results for convergence in potential games can be extended to
``near-potential'' games using our decomposition framework and these
results are discussed in \cite{Candogan2009}.

\subsection{Harmonic Games} 
\label{se:HarmGames}
 
In this section, we focus on games in which the potential component is
zero, hence the strategic interactions are governed only by the
harmonic component. We refer to such games as \emph{harmonic games},
i.e., a game $\cal G$ is a harmonic game if ${\cal G}\in {\cal H}
\oplus {\cal N}$.

This section studies the properties of equilibria of harmonic games.
We first characterize the Nash equilibria of such games, and show that
generically they do not have a pure Nash equilibrium. We further
consider mixed Nash and correlated equilibria, and show how the
properties of harmonic games restrict the 
possible set of equilibria.

\subsubsection{Pure Equilibria}
In this section, we focus on pure Nash equilibria in harmonic
games. Additionally, we characterize the dimension of the space of harmonic
games, ${\cal H} \oplus {\cal N}$.

We first show that at a pure Nash equilibrium of a harmonic game, all
players are indifferent between \emph{all} of their strategies.

\begin{lemma} \label{theo:pureNashHarm}
Let $\mathcal{G} = \langle {\cal  M},\{E^m\},\{u^m\} \rangle$ be a harmonic game and
 $\bf p$ be a pure Nash equilibrium. Then,
\begin{equation} \label{eq:indiffEq}
    u^m({\bf p}^m, {\bf p}^{-m})=   u^m({\bf q}^m, {\bf p}^{-m}) \quad {\mbox{for all $m\in {\cal M}$ and ${\bf q}^m \in E^m.$}}
\end{equation}
\end{lemma}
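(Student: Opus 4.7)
The plan is to use the divergence-free property of the pairwise comparison flow in harmonic games, combined with the sign constraint imposed by the Nash condition. Specifically, I would first argue that for any harmonic game $\mathcal{G} \in \mathcal{H} \oplus \mathcal{N}$, the pairwise comparison flow $X = Du$ satisfies $\delta_0^* X = 0$. This is because if $u = u_H + u_N$ with $u_H \in \mathcal{H}$ and $u_N \in \mathcal{N}$, then $D u_N = 0$ (by definition of $\mathcal{N}$) and $D u_H \in \ker \delta_0^*$ (by definition of $\mathcal{H}$), so $Du = Du_H \in \ker \delta_0^*$.

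Next, I would invoke the explicit formula \eqref{eq:delta0Explicit}, which says
\begin{equation*}
(\delta_0^* X)(\mathbf{p}) = -\sum_{\mathbf{q}\,:\,(\mathbf{p},\mathbf{q})\in A} X(\mathbf{p}, \mathbf{q}).
\end{equation*}
Evaluating this at the equilibrium profile $\mathbf{p}$ gives $\sum_{\mathbf{q}\,:\,(\mathbf{p},\mathbf{q})\in A} X(\mathbf{p},\mathbf{q}) = 0$. Each neighbor $\mathbf{q}$ in the game graph has the form $\mathbf{q} = (\mathbf{q}^m, \mathbf{p}^{-m})$ for some $m \in \mathcal{M}$ and some $\mathbf{q}^m \in E^m$ with $\mathbf{q}^m \neq \mathbf{p}^m$, and by \eqref{eq:pairwiseCompsGame}, $X(\mathbf{p}, \mathbf{q}) = u^m(\mathbf{q}^m, \mathbf{p}^{-m}) - u^m(\mathbf{p}^m, \mathbf{p}^{-m})$.

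Then I would use the Nash condition: since $\mathbf{p}$ is a pure Nash equilibrium, no unilateral deviation improves any player's payoff, hence $X(\mathbf{p},\mathbf{q}) \leq 0$ for every neighbor $\mathbf{q}$ of $\mathbf{p}$. A sum of nonpositive numbers that equals zero must have every term equal to zero, so $X(\mathbf{p}, (\mathbf{q}^m, \mathbf{p}^{-m})) = 0$ for all $m$ and all $\mathbf{q}^m \neq \mathbf{p}^m$. Together with the trivial equality when $\mathbf{q}^m = \mathbf{p}^m$, this yields \eqref{eq:indiffEq}.

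I do not expect a serious obstacle here — the result is essentially a one-line consequence of divergence-freeness plus a sign argument. The only thing to be careful about is that harmonic games in the sense used in the lemma include a nonstrategic summand, but since $\mathcal{N} = \ker D$ contributes nothing to the flow, the divergence-free condition inherited from $\mathcal{H}$ applies to the full game, which is the key observation that makes the argument go through.
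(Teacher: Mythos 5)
Your proposal is correct and follows essentially the same route as the paper: evaluate the divergence-free condition $\delta_0^* D u = 0$ at the equilibrium profile via \eqref{eq:delta0Explicit}, observe that the Nash condition makes every term in the resulting sum have the same sign, and conclude that each term vanishes. Your explicit remark that the nonstrategic summand contributes nothing to the flow is a point the paper leaves implicit, but it is the same argument.
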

\begin{proof}
By definition, in harmonic games the utility functions $u=\{u^m\}$ satisfy the condition $\delta_0^* D u =0$.
By \eqref{eq:delta0Explicit}
and \req{eq:defDm}, $\delta_0^* D u$ evaluated at $\bf p$ can be expressed as,
\begin{equation} \label{eq:strictNE}
\sum_{m\in{\cal M}}\sum_{{\bf q} | ({\bf p}, {\bf q})\in A^m}  \left(u^m({\bf p})- u^m ({ \bf q})\right) = 0.
\end{equation}
Since $\bf p $ is a Nash equilibrium it follows that $u^m({\bf p})-
u^m({\bf q}) \geq 0$ for all $ ({\bf p}, {\bf q} ) \in A^m$ and
$m\in{\cal M}$. Combining this with \eqref{eq:strictNE} it follows
that $u^m({\bf p})- u^m({\bf q}) = 0$ for all $({\bf p},{\bf q}) \in
A^m$ and $m\in{\cal M}$.  Observing that $({\bf p}, {\bf q})\in A^m$
if and only if ${\bf q}=({\bf q}^m, {\bf p}^{-m})$ , the result
follows.
\end{proof}

Using this result we next prove that harmonic games generically do not
have pure Nash equilibria. By ``generically'', we mean that it is true for
almost all harmonic games, except possibly for a set of measure zero
(for instance, the trivial game where all utilities are zero is
harmonic, and clearly has pure Nash equilibria).

\begin{proposition}
Harmonic games generically do not have pure Nash equilibria.
\end{proposition}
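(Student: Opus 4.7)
The plan is to leverage Lemma \ref{theo:pureNashHarm}, which imposes a very restrictive condition on pure Nash equilibria of harmonic games: every player must be \emph{exactly} indifferent among all of its strategies, not merely weakly best-responding. This indifference condition is a finite system of linear equalities in the utility vector $u \in C_0^M$, and my strategy is to show that the set of harmonic games satisfying this condition at \emph{any} strategy profile is a finite union of proper linear subspaces of $\mathcal{H} \oplus \mathcal{N}$, hence has Lebesgue measure zero.

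More concretely, for each fixed profile ${\bf p} \in E$, define
\[
V_{\bf p} \triangleq \bigl\{ u \in C_0^M : u^m({\bf q}^m, {\bf p}^{-m}) = u^m({\bf p}^m, {\bf p}^{-m}) \text{ for all } m \in \mathcal{M} \text{ and } {\bf q}^m \in E^m \bigr\}.
\]
This is a linear subspace of $C_0^M$ cut out by $\sum_{m} (h_m - 1)$ linear equations. By Lemma \ref{theo:pureNashHarm}, if a harmonic game $u \in \mathcal{H} \oplus \mathcal{N}$ has ${\bf p}$ as a pure Nash equilibrium, then $u \in V_{\bf p}$. Consequently, the set of harmonic games admitting \emph{some} pure Nash equilibrium is contained in the finite union $\bigcup_{{\bf p} \in E} (V_{\bf p} \cap (\mathcal{H} \oplus \mathcal{N}))$.

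The crucial step is to verify that for every ${\bf p} \in E$, the intersection $V_{\bf p} \cap (\mathcal{H} \oplus \mathcal{N})$ is a \emph{proper} linear subspace of $\mathcal{H} \oplus \mathcal{N}$, i.e., that there exists at least one harmonic game in which some player strictly prefers a deviation at ${\bf p}$. I would argue this by producing an explicit witness: for any fixed edge $({\bf p}, {\bf q}) \in A^m$ of the game graph, one can construct a harmonic flow $X \in \ker(\delta_0^*) \cap \ker(\delta_1)$ with $X({\bf p}, {\bf q}) \neq 0$ (since the harmonic subspace of $C_1$ separates edges generically), and then pull this flow back via $D^\dagger$ to obtain a harmonic game with nonzero pairwise comparison on that edge. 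This violates the indifference condition at ${\bf p}$ and therefore shows $\mathcal{H} \oplus \mathcal{N} \not\subset V_{\bf p}$.

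The main obstacle is precisely this existence argument in the third step: one must ensure that the harmonic subspace is rich enough to realize nonzero pairwise comparisons at arbitrary edges, and this is not obvious purely from the definition $\delta_0^* D u = 0$. A clean way to finish, avoiding an edge-by-edge construction, is to observe that if $V_{\bf p} \supset \mathcal{H} \oplus \mathcal{N}$ held for some ${\bf p}$, then the linear functionals $u \mapsto u^m({\bf q}^m,{\bf p}^{-m}) - u^m({\bf p}^m,{\bf p}^{-m})$ would vanish identically on $\mathcal{H} \oplus \mathcal{N}$. Since these functionals equal, up to sign, the evaluation of $Du$ on edges incident to ${\bf p}$, this would force all harmonic flows to vanish on edges incident to ${\bf p}$, contradicting the dimension count in Proposition \ref{theo:dimension} (which shows $\mathcal{H}$ is nontrivial whenever the game graph has nonzero first homology, i.e., in essentially all nondegenerate cases). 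Once step three is secured, the conclusion follows: $\bigcup_{{\bf p} \in E} (V_{\bf p} \cap (\mathcal{H} \oplus \mathcal{N}))$ is a finite union of proper linear subspaces and hence measure-zero in $\mathcal{H} \oplus \mathcal{N}$, establishing the generic nonexistence claim.
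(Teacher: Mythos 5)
Your proposal follows essentially the same route as the paper: both invoke Lemma~\ref{theo:pureNashHarm} to identify the harmonic games having a pure equilibrium at a fixed ${\bf p}$ as a linear subspace ($V_{\bf p}\cap(\mathcal{H}\oplus\mathcal{N})$ in your notation, ${\cal G}_{\bf p}$ in the paper's), cover the games with some pure equilibrium by the finite union over ${\bf p}\in E$, and establish properness by contradiction using $\dim\mathcal{H}>0$ from Proposition~\ref{theo:dimension}. The one caveat --- shared with the paper's own write-up --- is that passing from ``all harmonic flows vanish on edges incident to this particular ${\bf p}$'' to ``$\mathcal{H}$ is trivial'' still needs the (easy, implicit) observation that relabeling strategies preserves $\mathcal{H}$ and acts transitively on strategy profiles, so vanishing at one node forces vanishing everywhere and hence $\mathcal{H}\subset\mathcal{N}$, the desired contradiction.
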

\begin{proof}
Define ${\cal G}_{\bf p} \subset {\cal H} \oplus {\cal N}$ as the set
of harmonic games for which $\bf p$ is a pure Nash
equilibrium. Observe that $\cup_{{\bf p}\in E} {\cal G}_{\bf p}$ is
the set of all harmonic games which have a pure Nash equilibria. We
show that ${\cal G}_{\bf p}$ is a lower dimensional subspace of the
space of harmonic games for each ${\bf p}\in E$. Since the set of
harmonic games with pure Nash equilibrium is a finite union of lower
dimensional subspaces it follows that generically harmonic games do
not have pure Nash equilibria.

By Lemma \ref{theo:pureNashHarm} it follows that $${\cal G}_{\bf p}
=({\cal H} \oplus {\cal N}) \cap \{ \{u^m\}_{m\in{\cal M}} | u^m({\bf
p})= u^m({\bf q}), \mbox{ for all $\bf q$ such that $({\bf p}, {\bf
q})\in A^m$ and $m\in{\cal M}$ } \}. $$ Hence ${\cal G}_{\bf p}$ is a
subspace contained in ${\cal H} \oplus {\cal N}$.  It immediately
follows that ${\cal G}_{\bf p}$ is a lower dimensional subspace if we
can show that there exists harmonic games which are not in ${\cal
G}_{\bf p }$, i.e., in which $\bf p$ is not a pure Nash equilibrium.

Assume that $\bf p$ is a pure Nash equilibrium in all harmonic
games. Since $\bf p$ is arbitrary this holds only if all strategy
profiles are pure Nash equilibria in harmonic games. If all strategy
profiles are Nash equilibria, by Lemma \ref{theo:pureNashHarm} it
follows that the pairwise ranking function is equal to zero in
harmonic games, hence ${\cal H} \oplus {\cal N} \subset {\cal N}$. We
reach a contradiction since dimension of $\cal H$ is larger than zero.

Therefore, ${\cal G}_{\bf p}$ is a strict subspace of the space of
harmonic games, and thus harmonic games generically do not have pure
Nash equilibria.
\end{proof}

We conclude this section by
 a dimension result that is analogous to the result obtained for potential games.
\begin{theorem}
The set of harmonic games,  ${\cal H} \oplus {\cal N}$, has dimension $(M-1)\prod_{m\in{\cal M}} h_m +1$.
\end{theorem}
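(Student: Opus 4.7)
The statement follows almost immediately from results already established, so the plan is essentially a one-line bookkeeping argument rather than a new construction. My approach would be to invoke the direct sum decomposition from Theorem~\ref{theo:decompTheoSpace} together with the dimension counts from Proposition~\ref{theo:dimension}.

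First, I would recall that Theorem~\ref{theo:decompTheoSpace} asserts that $\mathcal{G}_{\mathcal{M},E} = \mathcal{P} \oplus \mathcal{H} \oplus \mathcal{N}$ is a \emph{direct} sum, so in particular $\mathcal{H} \cap \mathcal{N} = \{0\}$. This means that for the sum of the harmonic and nonstrategic subspaces, dimensions simply add:
\[
\dim(\mathcal{H} \oplus \mathcal{N}) = \dim(\mathcal{H}) + \dim(\mathcal{N}).
\]

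Next, I would plug in the two dimensions already computed in Proposition~\ref{theo:dimension}, namely $\dim(\mathcal{H}) = (M-1)\prod_{m \in \mathcal{M}} h_m - \sum_{m \in \mathcal{M}} \prod_{k \neq m} h_k + 1$ and $\dim(\mathcal{N}) = \sum_{m \in \mathcal{M}} \prod_{k \neq m} h_k$. The cross-terms $\sum_{m \in \mathcal{M}} \prod_{k \neq m} h_k$ cancel, leaving $(M-1)\prod_{m \in \mathcal{M}} h_m + 1$, which is exactly the claimed value.

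There is no real obstacle here: the technical content was already absorbed in proving Proposition~\ref{theo:dimension} (where the subtle part was computing $\dim(\mathcal{P}) = |E| - 1$ via $\dim(\ker \delta_0) = 1$ from Lemma~\ref{lem:SpectraOfDelta0}) and in establishing the direct sum property in Theorem~\ref{theo:decompTheoSpace}. One could alternatively derive the result by subtraction, using $\dim(\mathcal{G}_{\mathcal{M},E}) = M \prod_{m} h_m$ and subtracting $\dim(\mathcal{P}) = \prod_{m} h_m - 1$; this gives the same answer and provides a nice sanity check that is worth mentioning in the written proof.
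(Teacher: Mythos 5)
Your proposal is correct and follows exactly the paper's route: the paper's proof is the one-liner ``the result immediately follows from Theorem~\ref{theo:decompTheoSpace} and Proposition~\ref{theo:dimension},'' which is precisely the direct-sum-plus-dimension-count argument you spell out. Your added sanity check via $\dim(\mathcal{G}_{\mathcal{M},E}) - \dim(\mathcal{P})$ is consistent and harmless, but not needed.
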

\begin{proof}
The result immediately follows from Theorem~\ref{theo:decompTheoSpace}
and Proposition~\ref{theo:dimension}.
\end{proof}

\subsubsection{Mixed Nash and Correlated Equilibria in Harmonic Games}

In the previous section we showed that harmonic games generically do
not have pure Nash equilibria.  In this section, we study their mixed
Nash and correlated equilibria. 
In particular, we show that in harmonic games, the mixed strategy profile, in which players uniformly randomize over their strategies is always a mixed Nash equilibrium. Additionally, in the case of two-player harmonic games mixed Nash and correlated equilibria coincide, and if players have equal number of strategies the uniformly mixed strategy profile is the unique correlated equilibrium of the game.
Before we discuss the details of these results, we next provide some preliminaries
and notation.

We denote the set of probability distributions on $E$ by $\Delta
E$. Given $x\in \Delta E$, $x({\bf p})$ denotes the probability
assigned to ${\bf p}\in  E$. Observe that for all $x\in \Delta
{E}$, $\sum_{{\bf p}\in E} x({\bf p})=1$, and $x({\bf p}) \geq 0$.
Similarly for each player $m\in{\cal M}$, $\Delta E^m$ denotes the set
of probability distributions on $E^m$ and for $x^m \in \Delta E^m$,
$x^m({{\bf p}^m})$ is the probability assigned to strategy ${\bf p}^m
\in E^m$. As before all $x^m \in \Delta E^m$ satisfies $\sum_{{\bf
p}^m\in{E^m}}x^m({\bf p}^m)=1$ and $x^m({{\bf p}^m}) \geq 0$.  We
refer to the distribution $x^m\in \Delta E^m$ as a mixed strategy of
player $m\in {\cal M}$ and the collection $x=\{x^m\}_m$ as a mixed
strategy profile. Note that $\{x^m\}_m \in \prod_m \Delta E^m \subset
\Delta E$. Mixed strategies of all players but the $m$th one is
denoted by $x^{-m}$.

With some  abuse of the notation, we define the mixed extensions of the utility functions
$u^m:  \prod_m \Delta E^m \rightarrow \mathbb{R}$ such that for any  $x\in \prod_m \Delta E^m$,
\begin{equation}
u^m(x)=\sum_{{\bf p}\in E} u^m({\bf p}) \prod_{k\in{\cal M}} x^k({{\bf p}^k}).
\end{equation}
Similarly, if player $m$ uses pure strategy ${\bf q}^m$ and the other
players use the mixed strategies $x^{-m}$ we denote the payoff of
player $m$ by,
\begin{equation}
u^m({\bf q}^m,x^{-m})=\sum_{{\bf p}^{-m}\in E^{-m}} u^m({\bf q}^m, {\bf p}^{-m}) \prod_{k\in{\cal M}, k\neq m} x^k({{\bf p}^k}).
\end{equation}
Using this notation we can define the solution concepts.
\begin{definition}[Mixed Nash / Correlated Equilibrium]  \label{def:mixed}
    Consider the game  $\langle {\cal M}, \{E^m\}, \{u^m\} \rangle $.
    \begin{enumerate}

        \item A mixed strategy profile $x=\{x^m\}_m \in \prod_m \Delta
        E^m$ is a \emph{mixed Nash equilibrium} if for all $m\in{ \cal M}$
        and ${\bf p}^m \in E^m$, $u^m({x}^m, {x}^{-m})\geq u^m({\bf
        p}^m, x^{-m})$.

        \item A probability distribution $x\in \Delta E$ is a
\emph{correlated equilibrium} if for all $m\in{ \cal M}$ and ${\bf p}^m,
{\bf q}^m \in E^m$, $ \sum_{{\bf p}^{-m}} \left( u^m({\bf p}^m,{\bf
p}^{-m})-u^m({\bf q}^m,{\bf p}^{-m})\right) x({\bf p}^m, {\bf p}^{-m})
\geq 0. $

    \end{enumerate}
\end{definition}

From these definitions it can be seen that every mixed Nash
equilibrium is a correlated equilibrium where the corresponding
distribution $x\in \prod_m \Delta E^m \subset \Delta E$ is a product
distribution, i.e., it satisfies $x(\textbf{p}) = \prod_m
x^m(\textbf{p}^m)$

These definitions also imply that  similar to  Nash equilibrium, the conditions for mixed Nash and correlated equilibria can be expressed only in terms of  pairwise comparisons. Therefore,  these equilibrium sets are   independent of the nonstrategic components of games.

We next obtain an alternative characterization of correlated equilibria in normalized harmonic games. This characterization will be  more convenient when studying the equilibrium properties of harmonic games, as it is expressed in terms of equalities, instead of  inequalities.

\begin{proposition} \label{prop:corEqCharNew}
Consider a normalized harmonic game, ${\cal G}=\langle {\cal M}, \{u^m\}, {E^m} \rangle $ and a probability distribution $x\in  \Delta E$. The following are equivalent:
\begin{itemize}
\item[(i)] $x$ is a correlated equilibrium.
\item[(ii)]  For all ${\bf p}^m$, ${\bf q}^m$ and $m \in {\cal M}$,
      \begin{equation} \label{eq:harmCorr0}
            \sum_{{\bf p}^{-m}} \left( u^m({\bf p}^m,{\bf p}^{-m})-u^m({\bf q}^m,{\bf p}^{-m}) \right) x({\bf p}^m, {\bf p}^{-m})  = 0.
        \end{equation}
       
 \item[(iii)]  For all ${\bf p}^m$, ${\bf q}^m$ and $m \in {\cal M}$,
 \begin{equation} \label{eq:harmCorr1}
            \sum_{{\bf p}^{-m}} u^m({\bf q}^m,{\bf p}^{-m})  x({\bf p}^m, {\bf p}^{-m})  = 0.
        \end{equation}
        \end{itemize}
\end{proposition}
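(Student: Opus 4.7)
My plan is to prove the chain $(i) \Rightarrow (ii) \Rightarrow (iii) \Rightarrow (ii) \Rightarrow (i)$. The direction $(ii) \Rightarrow (i)$ is immediate, since an equality implies the corresponding $\geq$ inequality in Definition~\ref{def:mixed}, so the real work lies in $(i) \Rightarrow (ii)$ and in the algebraic loop $(ii) \Leftrightarrow (iii)$.

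The key preliminary fact I would establish is the pointwise identity
\begin{equation*}
\sum_{m \in \mathcal{M}} h_m\, u^m(\mathbf{r}) = 0 \qquad \text{for every } \mathbf{r} \in E,
\end{equation*}
valid in any normalized harmonic game. To derive it, I expand $(\delta_0^* D u)(\mathbf{r})$ using \eqref{eq:delta0Explicit} and \eqref{eq:gameFlow}: the divergence at $\mathbf{r}$ equals $\sum_m \sum_{\mathbf{q}^m \in E^m} \bigl(u^m(\mathbf{r}) - u^m(\mathbf{q}^m, \mathbf{r}^{-m})\bigr)$. The harmonic condition $\delta_0^* D u = 0$ together with the normalization $\sum_{\mathbf{q}^m} u^m(\mathbf{q}^m, \mathbf{r}^{-m}) = 0$ (Lemma~\ref{lem:normalize}) collapses this to the stated identity.

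For $(i) \Rightarrow (ii)$, let $A_{m, \mathbf{p}^m, \mathbf{q}^m}$ denote the nonnegative slack in the correlated equilibrium inequality indexed by $(m, \mathbf{p}^m, \mathbf{q}^m)$. Summing over all such triples and using normalization to annihilate the $u^m(\mathbf{q}^m, \mathbf{p}^{-m})$ contributions (after first summing over $\mathbf{p}^m$, the inner sum $\sum_{\mathbf{q}^m} u^m(\mathbf{q}^m, \mathbf{p}^{-m})$ vanishes for each $\mathbf{p}^{-m}$), the total reduces to $\sum_{\mathbf{p}} x(\mathbf{p}) \sum_m h_m\, u^m(\mathbf{p})$, which vanishes by the preliminary identity. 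Because a sum of nonnegative numbers equals zero only when every summand does, each $A_{m, \mathbf{p}^m, \mathbf{q}^m}$ is zero, which is exactly (ii).

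For $(ii) \Rightarrow (iii)$, I would fix $m$ and $\mathbf{p}^m$ and sum (ii) over $\mathbf{q}^m \in E^m$; normalization eliminates the second term, yielding $h_m \sum_{\mathbf{p}^{-m}} u^m(\mathbf{p}^m, \mathbf{p}^{-m}) x(\mathbf{p}^m, \mathbf{p}^{-m}) = 0$. Substituting this back into (ii) isolates the $\mathbf{q}^m$-indexed sum and gives (iii). The reverse $(iii) \Rightarrow (ii)$ follows by specializing (iii) to $\mathbf{q}^m = \mathbf{p}^m$ and subtracting from the general equation. The main obstacle is recognizing $\sum_m h_m u^m \equiv 0$ as the correct pointwise reformulation of the harmonic condition in the normalized subspace; once this identity is in hand, the sum-of-nonnegatives-equals-zero trick makes the main implication $(i) \Rightarrow (ii)$ drop out cleanly.
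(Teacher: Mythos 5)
Your proof is correct and follows essentially the same route as the paper: both arguments sum the nonnegative correlated-equilibrium slacks over all triples $(m,\mathbf{p}^m,\mathbf{q}^m)$, use the harmonic condition to show the total vanishes so that each slack must be zero, and then pass between (ii) and (iii) by summing over $\mathbf{q}^m$ and invoking normalization. The only cosmetic difference is that you route the cancellation through the pointwise identity $\sum_m h_m u^m \equiv 0$ (which requires normalization), whereas the paper applies $\delta_0^* D u = 0$ directly in that step and is thereby able to remark afterwards that the equivalence (i)$\Leftrightarrow$(ii) holds for all harmonic games, normalized or not.
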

\begin{proof} 
We prove the claim, by  first showing  (i) and (ii) are equivalent and then establishing the equivalence (ii) and (iii). 

    By the definition of correlated equilibrium,  \eqref{eq:harmCorr0} implies that $x$ is a correlated equilibrium.
    To see that  any correlated equilibrium of $\cal G$ satisfies \eqref{eq:harmCorr0}, assume 
 $x\in \Delta E$ is a correlated equilibrium.
Since the game is a harmonic game, by definition, the utility functions $u=\{u^m\}$ satisfy the condition $\delta_0^* D u =0$.
Using \eqref{eq:delta0Explicit} and \req{eq:defDm}, this condition can equivalently be expressed as
\begin{equation}
\sum_{m\in{\cal M}} \sum_{{\bf q}^m \in E^m} u^m({\bf q}^m, {\bf p}^{-m})- u^m({\bf p}^m,  {\bf p}^{-m})=0  \quad \mbox{for all ${\bf p}\in E$}.
\end{equation}
Thus, it follows that 
\begin{equation} \label{eq:harmCorr2}
\begin{aligned}
0&=\sum_{{\bf p}\in E} x({\bf p}) \sum_{m\in{\cal M}} \sum_{{\bf q}^m \in E^m} u^m({\bf q}^m, {\bf p}^{-m})- u^m({\bf p}^m,  {\bf p}^{-m}) \\
&=      \sum_{m\in {\cal M}} \sum_{{\bf q}^m \in E^m} \sum_{{\bf p}^m\in E^m} \sum_{{\bf p}^{-m}\in E^{-m}} x({\bf p}^m, {\bf p}^{-m}) \left( u^m({\bf q}^m, {\bf p}^{-m})- u^m({\bf p}^m,  {\bf p}^{-m})\right).
\end{aligned}
\end{equation} 
Since $x$ is a correlated equilibrium, $\sum_{{\bf p}^{-m}\in E^{-m}} x({\bf p}^m, {\bf p}^{-m}) \left( u^m({\bf q}^m, {\bf p}^{-m})- u^m({\bf p}^m,  {\bf p}^{-m}) \right) \leq 0$ for all ${\bf p}^m$, ${\bf q}^m$ and $m \in {\cal M}$. Hence, \eqref{eq:harmCorr2} implies that
$$
 \sum_{{\bf p}^{-m}\in E^{-m}} x({\bf p}^m, {\bf p}^{-m}) \left( u^m({\bf q}^m, {\bf p}^{-m})- u^m({\bf p}^m,  {\bf p}^{-m})\right)=0
$$
for all ${\bf p}^m$, ${\bf q}^m$ and $m \in {\cal M}$. Thus, we conclude (i) and (ii) are equivalent.

To see the equivalence of (ii) and (iii), observe that  (iii)  immediately implies (ii). Assume (ii) holds, then writing  \eqref{eq:harmCorr0} 
for two strategies  $\mathbf{r}^m, \mathbf{q}^m \in E^m$, and subtracting these equations from each other, it follows that
      \begin{equation} \label{eq:harmCorr3}
            \sum_{{\bf p}^{-m}} \left( u^m({\bf r}^m,{\bf p}^{-m})-u^m({\bf q}^m,{\bf p}^{-m}) \right) x({\bf p}^m, {\bf p}^{-m})  = 0.
        \end{equation}
        Since $\mathbf{r}^m$ and  $\mathbf{q}^m$ are arbitrary it follows that for all $\mathbf{q}^m\in E^m$
  \begin{equation} \label{eq:harmCorr4}
            \sum_{{\bf p}^{-m}}  u^m({\bf q}^m,{\bf p}^{-m})  x({\bf p}^m, {\bf p}^{-m})  = c_{\mathbf{p}^m},
        \end{equation}
        for some $ c_{\mathbf{p}^m} \in \mathbb{R}$. Since the game is normalized, we have $\sum_{{\bf q}^{m}}  u^m({\bf q}^m,{\bf p}^{-m})  =0$.  Thus summing \eqref{eq:harmCorr4} over $\mathbf{q}^m$ it follows that $ c_{\mathbf{p}^m}=0$, and hence 
(ii)  implies (iii). 

Therefore we conclude that (i), (ii) and (iii) are equivalent for normalized harmonic games.
\end{proof}

Note that in the above proof, we used the assumption that the game is normalized, only when establishing the equivalence of (ii) and (iii). Therefore, it can be seen that (i) and (ii) are equivalent for all harmonic games.

The above proposition implies that the correlated equilibria of
harmonic games correspond to the intersection of the probability
simplex with a subspace defined by the utilities in the game. Using
this result, we obtain the following characterization of mixed Nash
equilibria of harmonic games.

\begin{corollary}\label{cor:harmGameEqCond}
Let   ${\cal G}=\langle {\cal M}, \{u^m\}, \{E^m\} \rangle $ be a harmonic game. The mixed strategy profile  $x\in \prod_m \Delta E^m$ is a mixed Nash equilibrium if and only if,
\begin{equation} \label{eq:mixedEqCond}
u^m(x^m, x^{-m})=u^m({\bf p}^m, x^{-m}) \quad \quad \mbox{for all  ${\bf p}^m \in E^m$ and $m\in \cal{M}$.}
\end{equation}
\end{corollary}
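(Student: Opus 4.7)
The plan is to derive \eqref{eq:mixedEqCond} by applying Proposition~\ref{prop:corEqCharNew} to the product distribution naturally associated with $x$. The forward implication ($\Leftarrow$) is immediate: if $u^m(x^m,x^{-m}) = u^m(\mathbf{p}^m,x^{-m})$ for every $\mathbf{p}^m \in E^m$, then the weak inequality in the definition of a mixed Nash equilibrium is trivially satisfied with equality.

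For the converse, I would first observe that every mixed strategy profile $x = \{x^m\}_m \in \prod_m \Delta E^m$ induces a product distribution $\bar{x} \in \Delta E$ via $\bar{x}(\mathbf{p}) = \prod_m x^m(\mathbf{p}^m)$, and that $x$ being a mixed Nash equilibrium is equivalent to $\bar{x}$ being a correlated equilibrium — a standard fact already noted in the text preceding Corollary~\ref{cor:harmGameEqCond}. I would then apply the equivalence (i) $\Leftrightarrow$ (ii) of Proposition~\ref{prop:corEqCharNew} to $\bar{x}$. Here it is important to use the remark following the proof of that proposition: the equivalence (i) $\Leftrightarrow$ (ii) is established without invoking normalization, so it applies to the possibly non-normalized harmonic game $\mathcal{G}$ considered in the corollary.

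Substituting $\bar{x}$ into \eqref{eq:harmCorr0} and factoring $x^m(\mathbf{p}^m)$ out of $\prod_k x^k(\mathbf{p}^k)$ produces the identity
\[
x^m(\mathbf{p}^m)\, \bigl( u^m(\mathbf{p}^m, x^{-m}) - u^m(\mathbf{q}^m, x^{-m}) \bigr) = 0,
\]
valid for every $\mathbf{p}^m,\mathbf{q}^m \in E^m$ and $m \in \mathcal{M}$. Since $x^m$ is a probability distribution on $E^m$, there exists at least one $\mathbf{p}^m$ with $x^m(\mathbf{p}^m) > 0$; for such a $\mathbf{p}^m$ we conclude $u^m(\mathbf{p}^m, x^{-m}) = u^m(\mathbf{q}^m, x^{-m})$ for every $\mathbf{q}^m$, so the map $\mathbf{q}^m \mapsto u^m(\mathbf{q}^m, x^{-m})$ is constant on $E^m$. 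Averaging this constant against $x^m$ then yields $u^m(x^m, x^{-m}) = u^m(\mathbf{q}^m, x^{-m})$ for all $\mathbf{q}^m$, which is exactly \eqref{eq:mixedEqCond}.

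The only nonroutine observation is recognizing that Proposition~\ref{prop:corEqCharNew} can be invoked in its (i) $\Leftrightarrow$ (ii) form without the normalization hypothesis; once this point is made, the remainder of the argument amounts to an elementary factoring together with a standard ``pick a point in the support'' step, and requires no additional machinery.
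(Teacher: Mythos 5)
Your proposal is correct and follows essentially the same route as the paper's own proof: both directions are handled identically, the converse proceeds by viewing the mixed Nash equilibrium as a (product-form) correlated equilibrium, invoking the equivalence (i) $\Leftrightarrow$ (ii) of Proposition~\ref{prop:corEqCharNew} (valid without normalization, as the paper's remark notes), factoring out $x^m(\mathbf{p}^m)$, and picking a strategy in the support. No gaps.
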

\begin{proof}
Assume that \eqref{eq:mixedEqCond} holds, then clearly all players are
indifferent between all their mixed strategies, hence it follows that
$x$ is a mixed Nash equilibrium of the game.

Let $x$ be a mixed Nash equilibrium. Since each mixed Nash equilibrium
is also a correlated equilibrium, from equivalence of (i) and (ii) of
Proposition~\ref{prop:corEqCharNew}  for all harmonic games, it follows that for all ${\bf p}^m$,
${\bf q}^m$ and $m \in {\cal M}$,
    \begin{equation}
    \begin{aligned}
        0 &=\sum_{{\bf p}^{-m}} \left( u^m({\bf p}^m,{\bf p}^{-m})-u^m({\bf q}^m,{\bf p}^{-m}) \right) x({\bf p}^m, {\bf p}^{-m})   \\
        &=x^m({\bf p}^m) \sum_{{\bf p}^{-m}} \left( u^m({\bf p}^m,{\bf p}^{-m})-u^m({\bf q}^m,{\bf p}^{-m}) \right)  \prod_{k\neq m} x^k({\bf p}^{k})  \\
        &=x^m({\bf p}^m)   \left( u^m({\bf p}^m,{ x}^{-m})-u^m({\bf q}^m,{ x}^{-m}) \right).
    \end{aligned}
    \end{equation}
    Since by definition of probability distributions, there exists $\textbf{p}^m$ such that $x^m({\bf p}^m)>0$ it follows that
 $u^m({\bf p}^m,{ x}^{-m})=u^m({\bf q}^m,{ x}^{-m})$ for all   ${\bf q}^m \in E^m$.  Thus,   $u^m({x}^m,{ x}^{-m})=u^m({\bf q}^m,{ x}^{-m})$ for all ${\bf q}^m \in E^m$. Since $m$ is arbitrary, the claim follows.
\end{proof}

It is well-known that in mixed Nash equilibria of games, players are
indifferent between all the pure strategies in the support of their
mixed strategy (see \cite{fudenberg1991gt}), i.e., if $x\in \prod_m
\Delta E^m$ is a mixed Nash equilibrium then
 \begin{equation}
  u^m(x^m, x^{-m}) \left\{
  \begin{aligned}
  &=u^m({\bf p}^m, x^{-m}) \quad\quad \mbox{ for all ${\bf p}^m$ such that  $x^m({\textbf{p}^m}) \geq 0$} \\
  &\geq u^m({\bf p}^m, x^{-m}) \quad\quad \mbox{ for all ${\bf p}^m$ such that  $x^m({\textbf{p}^m}) = 0$}.
  \end{aligned}\right.
  \end{equation}
The above corollary implies that at a mixed equilibrium of a harmonic
game, each player is indifferent between all its pure strategies,
including those which are not in the support of its mixed strategy.

We next define a particular mixed strategy profile, and show that it is an equilibrium in all harmonic games. 
\begin{definition} [Uniformly Mixed Strategy Profile] \label{def:unifMixed}
The \emph{uniformly mixed strategy} of  player $m$ is a mixed strategy where player $m$ uses $x_{{\bf
q}^m}=\frac{1}{h_m}$ for all ${\bf q}^m \in E^m$. 
Respectively, we
define the \emph{uniformly mixed strategy profile} as the one in which all
players use uniformly mixed strategies.
\end{definition}
Recall that rock-paper-scissors and matching pennies are examples of harmonic games, in which the uniformly mixed strategy profile is a mixed Nash equilibrium.
The next theorem shows that this is a general property of harmonic games
and the uniformly mixed strategy profile is always a
Nash equilibrium.

\begin{theorem}\label{theo:harmUniformlyMixedNE}
In harmonic games, the   uniformly mixed strategy profile is always a Nash equilibrium.
\end{theorem}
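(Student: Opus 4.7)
The plan is to verify the equilibrium condition in Corollary~\ref{cor:harmGameEqCond}: with $x$ denoting the uniformly mixed profile, I need to show that for every player $m$ the pure-strategy payoff $u^m(\mathbf{p}^m, x^{-m})$ does not depend on $\mathbf{p}^m$, in which case the common value is automatically $u^m(x^m, x^{-m})$. Since the equilibrium set is determined by the pairwise comparisons $Du$, and since adding a game in $\mathcal{N}$ shifts each $u^m(\cdot, x^{-m})$ by a quantity independent of $\mathbf{p}^m$, I may reduce to the normalized harmonic projection and assume $u \in \mathcal{H}$. By Lemma~\ref{lem:normalize} this gives $\sum_{\mathbf{q}^m \in E^m} u^m(\mathbf{q}^m, \mathbf{p}^{-m}) = 0$ for every $m$ and $\mathbf{p}^{-m}$, and by definition $\delta_0^{*} D u = 0$.

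Next I would unfold $\delta_0^{*} D u = 0$ pointwise, using \eqref{eq:delta0Explicit}, $Du = \sum_m D_m u^m$, and \eqref{eq:defDm}, to obtain
\begin{equation*}
\sum_{m \in \mathcal{M}} \sum_{\mathbf{q}^m \in E^m}\bigl[u^m(\mathbf{q}^m, \mathbf{p}^{-m}) - u^m(\mathbf{p}^m, \mathbf{p}^{-m})\bigr] = 0 \qquad \text{for every } \mathbf{p}\in E.
\end{equation*}
Normalization makes the first inner sum vanish, collapsing this to the weighted multi-player zero-sum identity
\begin{equation*}
\sum_{m \in \mathcal{M}} h_m\, u^m(\mathbf{p}) = 0 \qquad \text{for every } \mathbf{p}\in E.
\end{equation*}

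The decisive step is to sum this identity over $\mathbf{p}^{-m} \in E^{-m}$ with $\mathbf{p}^m$ held fixed. For each $k\neq m$, re-index $\mathbf{p}^{-m}$ as $(\mathbf{p}^k, \mathbf{p}^{-m,-k})$; the contribution $h_k \sum_{\mathbf{p}^{-m}} u^k(\mathbf{p}^m, \mathbf{p}^{-m})$ then equals $h_k \sum_{\mathbf{p}^{-m,-k}} \bigl(\sum_{\mathbf{p}^k} u^k(\mathbf{p}^k, \mathbf{p}^{-k})\bigr)$, and the bracketed sum vanishes by normalization applied to player $k$ at the fixed profile $\mathbf{p}^{-k} = (\mathbf{p}^m, \mathbf{p}^{-m,-k})$. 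Only the $k=m$ term survives, so $h_m \sum_{\mathbf{p}^{-m}} u^m(\mathbf{p}^m, \mathbf{p}^{-m}) = 0$; dividing by $\prod_{k\neq m} h_k$ gives $u^m(\mathbf{p}^m, x^{-m}) = 0$ for every pure $\mathbf{p}^m$. This value is constant (indeed identically zero) in $\mathbf{p}^m$, so $u^m(x^m, x^{-m}) = u^m(\mathbf{p}^m, x^{-m})$ for all $\mathbf{p}^m$ and all $m$, and Corollary~\ref{cor:harmGameEqCond} identifies $x$ as a Nash equilibrium.

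The only delicate point I expect is the marginal-summation step: correctly re-indexing $\mathbf{p}^{-m} = (\mathbf{p}^k, \mathbf{p}^{-m,-k})$ so that the normalization of $u^k$ (which operates along the $\mathbf{p}^k$ axis with $\mathbf{p}^{-k}$ fixed) can be invoked to eliminate all cross terms. The rest is a routine unfolding of the definitions of $\delta_0^{*}$, $D_m$, and the uniform mixed extension.
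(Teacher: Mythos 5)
Your proof is correct, but it reaches the key marginal identity by a different route than the paper. The paper proves Lemma~\ref{lem:harmGameFlux} directly for any harmonic game (normalized or not) via a divergence/flux argument: it sums $(\delta_0^*Du)(\mathbf{p})=0$ over the slice $S=\{(\mathbf{r}^m,\mathbf{p}^{-m}):\mathbf{p}^{-m}\in E^{-m}\}$ of the game graph and uses the fact that the total divergence over a node set equals the flow across its boundary, which yields $\sum_{\mathbf{p}^{-m}}\bigl[u^m(\mathbf{r}^m,\mathbf{p}^{-m})-u^m(\mathbf{q}^m,\mathbf{p}^{-m})\bigr]=0$ without ever invoking normalization. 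You instead first reduce to the normalized component (a legitimate step, since the nonstrategic part lies in $\ker D_m$ and hence depends only on $\mathbf{p}^{-m}$, shifting each $u^m(\cdot,x^{-m})$ by a constant), then derive the weighted zero-sum identity $\sum_m h_m u^m(\mathbf{p})=0$ (which is the paper's Theorem~\ref{theo:zeroSumTheoNew}, proved there by operator identities rather than pointwise unfolding), and finally marginalize over $\mathbf{p}^{-m}$, using normalization of each $u^k$ to kill the cross terms. Your re-indexing of $\mathbf{p}^{-m}$ as $(\mathbf{p}^k,\mathbf{p}^{-m,-k})$ is exactly right and the cross terms do vanish. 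What your route buys is a slightly stronger conclusion for games in $\mathcal{H}$ — every pure strategy earns expected payoff exactly zero against the uniform profile, not merely a constant — at the cost of the preliminary normalization step; the paper's flux argument is more intrinsic to the graph structure and applies verbatim to all of $\mathcal{H}\oplus\mathcal{N}$.
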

\begin{proof}
Let   ${\cal G}=\langle {\cal M}, \{u^m\}, \{E^m\} \rangle $ be a harmonic game, and $x$ be the uniformly mixed strategy profile. 
In order to prove the claim, we first state the following useful identity (see Appendix for a proof),  on the utility functions of harmonic games.
\begin{lemma} \label{lem:harmGameFlux}
Let   ${\cal G}=\langle {\cal M}, \{u^m\}, \{E^m\} \rangle $ be a harmonic game.
 Then {for all ${\bf q}^{m}, {\bf r}^{m}\in E^m$, $m\in {\cal M},$}
$
\sum_{{\bf p}^{-m}\in E^{-m}}{u}^m({\bf r}^{m},{\bf p}^{-m})-{u}^m({\bf q}^{m},{\bf p}^{-m})=0$.
\end{lemma}

Using this lemma, and the definition of the uniformly mixed strategy, it follows that
\begin{equation}\label{eq:harmGameUniformlyMixed}
\begin{aligned}
u^m({\bf q}^m, x^{-m})-u^m({\bf p}^m, x^{-m}) &= \sum_{{\bf p}^{-m}} c_m \left( u^m({\bf q}^m, {\bf p}^{-m})-u^m({\bf p}^m, {\bf p}^{-m}) \right)\\
&=c_m \sum_{{\bf p}^{-m}}  \left( u^m({\bf q}^m, {\bf p}^{-m})-u^m({\bf p}^m, {\bf p}^{-m}) \right)  \\
&= 0,
\end{aligned}
\end{equation}
where $c_m=\prod_{k\neq m} x^k({{\bf p}^k})=\prod_{k\neq m} \frac{1}{h_k}$. 
Since $\mathbf{p}^m$ and $\mathbf{q}^m$ are arbitrary,  \eqref{eq:harmGameUniformlyMixed} implies that
\begin{equation}
u^m({ x}^m, x^{-m})=u^m({\bf p}^m, x^{-m})
\end{equation}
for all ${\bf p}^m \in E^m$, and by
Corollary~\ref{cor:harmGameEqCond}, $x$ is a mixed strategy Nash
equilibrium.
\end{proof}
In the sequel, we identify a basis for two-player normalized harmonic games, and through a simple dimension argument,  show that this Nash equilibrium is not
unique, for general harmonic games.  
In order to simplify the derivation of the basis result,  we first  provide a simple characterization of normalized harmonic games,  in terms of the  utility functions in the game.
\begin{theorem} \label{theo:zeroSumTheoNew}
The game ${\cal G}$ with utilities $u=\{u^m\}_{m\in{\cal M}}$ is a normalized harmonic game, i.e., it belongs to ${\cal H} $ if and only if 
$\sum_{m\in{\cal M}} h_m u^m=0$  and $\Pi_m u^m=u^m$ for all ${\bf m}\in {\cal M}$, where $h_m=|E^m|$.
\end{theorem}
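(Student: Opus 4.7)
The plan is to unpack the definition of $\mathcal{H}$ from \eqref{eq:subspaceDef} and reduce the condition $Du \in \ker \delta_0^*$ to the explicit condition $\sum_m h_m u^m = 0$ by direct computation, using the identities and lemmas already established in Section~\ref{se:prelim_operators_game}.

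First, I would observe that by definition $u \in \mathcal{H}$ exactly when $u = \Pi u$ and $Du \in \ker \delta_0^*$. By Lemma~\ref{lem:normalize}, the condition $u = \Pi u$ is equivalent to $\Pi_m u^m = u^m$ for every $m \in \mathcal{M}$, which handles one of the two conditions in the statement. So the remaining work is to show that, \emph{given} $\Pi_m u^m = u^m$ for all $m$, the condition $Du \in \ker \delta_0^*$ is equivalent to $\sum_{m \in \mathcal{M}} h_m u^m = 0$ (as a function in $C_0$).

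The key step is to compute $\delta_0^* D u$ in closed form. Using \eqref{eq:gameFlow} to write $Du = \sum_m D_m u^m$ and \eqref{eq:delta0Adj_DmAdj} to write $\delta_0^* = \sum_k D_k^*$, expanding yields
\begin{equation*}
\delta_0^* D u \;=\; \sum_{k,m \in \mathcal{M}} D_k^* D_m u^m.
\end{equation*}
The orthogonality relation $D_k^* D_m = 0$ for $k \neq m$ (established right before \eqref{eq:LaplacianVsLm}) collapses the double sum to $\sum_m D_m^* D_m u^m = \sum_m \Delta_{0,m} u^m$. Invoking Lemma~\ref{theo:projDiff} to substitute $\Delta_{0,m} = h_m \Pi_m$ gives
\begin{equation*}
\delta_0^* D u \;=\; \sum_{m \in \mathcal{M}} h_m \Pi_m u^m.
\end{equation*}

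Finally, under the normalization hypothesis $\Pi_m u^m = u^m$ for all $m$, the right-hand side simplifies to $\sum_{m \in \mathcal{M}} h_m u^m$. Therefore $Du \in \ker \delta_0^*$ iff $\sum_m h_m u^m = 0$, which combined with the normalization condition yields the claimed biconditional. I do not anticipate any real obstacle here: every ingredient (the formula for $\delta_0^*$, the orthogonality $D_k^* D_m = 0$, and the identification $\Delta_{0,m} = h_m \Pi_m$) is already in hand, so the proof is essentially a two-line computation once the definition is unfolded.
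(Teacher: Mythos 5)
Your proposal is correct and follows essentially the same route as the paper's own proof: unfold the definition of $\mathcal{H}$, use the orthogonality $D_k^* D_m = 0$ to reduce $\delta_0^* D u$ to $\sum_m \Delta_{0,m} u^m$, apply Lemma~\ref{theo:projDiff} to get $\sum_m h_m \Pi_m u^m$, and substitute $\Pi_m u^m = u^m$. No gaps; the computation is exactly the one the authors perform.
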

\begin{proof} 
    By Definition \ref{def:subspaces},  ${\cal G}\in {\cal H}$ if and only if $\Pi u =u$ and $\delta_0^* D u =0$. Using the definitions of the operators, these conditions can alternatively be expressed as  $\Pi_m u^m=u^m$ and $\delta_0^* \sum_{m\in{\cal M}} D_m u^m=0$.
    By \eqref{eq:delta0Adj_DmAdj} and  the orthogonality of image spaces of operators $D_m$, the latter equality implies that $ \sum_{m\in{\cal M}} D_m^*D_m u^m=\sum_{m\in{\cal M}} \Delta_{0,m} u^m=0$. Using Lemma  \ref{theo:projDiff},  $ \Delta_{0,m}=h_m \Pi_m$, 
   and hence it follows that ${\cal G} \in {\cal H}$, if and only if 
    \begin{equation}
     \sum_{m\in{\cal M}} h_m \Pi_m u^m=0 \qquad \mbox{ and, } \qquad \Pi_m u^m=u^m \mbox{ for all $m$}.
    \end{equation}
The claim follows by   replacing $ \Pi_m u^m$ in the summation with $u^m$.
\end{proof}
The above theorem implies that normalized harmonic games, where players have equal number of strategies, are zero-sum games, i.e., in such games the  payoffs of players add up to zero at all strategy profiles. We explore the further relations between zero-sum games and harmonic games in Section \ref{se:ZS}.

In the following theorem, we present a basis for two-player  normalized harmonic games. 
The idea behind our construction is to obtain a collection of games, in which both players have ``effectively'' two strategies (the payoffs are equal to zero, if other strategies are played), and ensure that they are linearly independent normalized harmonic games.
\begin{theorem} \label{theo:harmBasis}
Consider the set of two-player   games where the first player has $h_1$ strategies and the second player has $h_2$ strategies. 
For any $i\in \{1,\dots, h_1-1\}$ and $j\in \{1,\dots, h_2-1\}$, define bimatrix games ${\cal G}^{ij}$, with payoff matrices $(h_2 A^{ij}, - h_1 A^{ij} )$, where $A^{ij}\in \mathbb{R}^{h_1 \times h_2}$ is such that
\begin{equation} \label{eq:defineAUtil}
A^{ij}_{kl}= \left\{
\begin{aligned}
1 & \qquad \mbox{if $(k,l)=(i,j)$ or $(k,l)=(i+1,j+1)$,}\\
-1 & \qquad \mbox{if $(k,l)=(i+1,j)=(k,l)$ or $(k,l)=(i,j+1)$,}\\
0 & \qquad \mbox{otherwise.}\\
\end{aligned}
\right.
\end{equation}
The collection  $\{{\cal G}^{ij}\}$ provides a basis of $\cal H$.
\end{theorem}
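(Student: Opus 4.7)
The plan is to verify that the proposed collection has the right cardinality to be a basis, then check that each of its elements lies in $\mathcal{H}$, and finally establish linear independence by exploiting the sparsity pattern of the $A^{ij}$.

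First I would do the dimension count. Specializing Proposition~\ref{theo:dimension} to $M=2$ gives $\dim(\mathcal{H}) = h_1 h_2 - h_1 - h_2 + 1 = (h_1-1)(h_2-1)$, and the proposed family $\{\mathcal{G}^{ij}\}_{i,j}$ has exactly this cardinality. Hence it suffices to prove (a) $\mathcal{G}^{ij} \in \mathcal{H}$ for each $(i,j)$, and (b) the collection is linearly independent.

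For (a), I would invoke Theorem~\ref{theo:zeroSumTheoNew}: a bimatrix game with payoffs $(u^1,u^2)$ lies in $\mathcal{H}$ iff $h_1 u^1 + h_2 u^2 = 0$ and each $u^m$ is normalized. For $\mathcal{G}^{ij}$ with payoffs $(h_2 A^{ij}, -h_1 A^{ij})$, the first identity is immediate: $h_1(h_2 A^{ij}) + h_2(-h_1 A^{ij}) = 0$. For normalization, by Lemma~\ref{lem:normalize} this amounts to checking that every column of $A^{ij}$ sums to zero (normalization of $u^1$) and every row sums to zero (normalization of $u^2$). Reading off \eqref{eq:defineAUtil}, the only nonzero entries of $A^{ij}$ live in the $2\times 2$ submatrix indexed by rows $\{i,i+1\}$ and columns $\{j,j+1\}$, with pattern $\bigl(\begin{smallmatrix}+1 & -1 \\ -1 & +1\end{smallmatrix}\bigr)$, so both row and column sums vanish.

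For (b), suppose $\sum_{i,j} c_{ij}\, A^{ij} = 0$. The key observation is that at matrix position $(k,l)$ the only $A^{ij}$ that can contribute are those with $(i,j) \in \{(k,l),(k-1,l),(k,l-1),(k-1,l-1)\}$, and inspecting \eqref{eq:defineAUtil} the entry equation becomes
\[
c_{k,l} \;-\; c_{k-1,l} \;-\; c_{k,l-1} \;+\; c_{k-1,l-1} \;=\; 0,
\]
with the convention that $c_{\cdot,\cdot}=0$ when an index is out of range $\{1,\ldots,h_1-1\}\times\{1,\ldots,h_2-1\}$. A straightforward double induction on $(k,l)$ in lexicographic order, starting from $(k,l)=(1,1)$ (which yields $c_{1,1}=0$) and using that the three ``earlier'' coefficients already vanish, forces $c_{k,l}=0$ for all $(k,l)$ in the valid range. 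Combined with (a) and the dimension count, this completes the proof.

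The most delicate step is (b); though elementary, one must be careful to (i) confirm that no $A^{ij}$ outside the four listed ones hits a given position $(k,l)$, and (ii) handle the boundary of the index set cleanly so that the induction produces exactly $(h_1-1)(h_2-1)$ equations. Once the entrywise recursion is written down, however, linear independence is essentially immediate.
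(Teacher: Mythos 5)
Your proposal is correct and follows essentially the same route as the paper's proof: a dimension count via Proposition~\ref{theo:dimension}, membership in $\mathcal{H}$ via Theorem~\ref{theo:zeroSumTheoNew} together with the vanishing row and column sums of $A^{ij}$, and linear independence from the sparsity pattern by sweeping through the index set starting at the corner. Your entrywise recursion $c_{k,l}-c_{k-1,l}-c_{k,l-1}+c_{k-1,l-1}=0$ is just a more explicit rendering of the paper's iterative argument that begins with $\alpha_{11}=0$.
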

\begin{proof}
It can be seen that each ${\cal G}^{ij}$ is normalized, since row and column sums of $A^{ij}$ is equal to zero. 
By Theorem~\ref{theo:zeroSumTheoNew} and \eqref{eq:defineAUtil}, it also follows that ${\cal G}^{ij}$ belongs to $\cal H$.
 It can be seen from Proposition~\ref{theo:dimension}  that $\dim {\cal H}=(h_1- 1)(  h_2 -1)$, is equal to the cardinality of the collection $\{{\cal G}^{ij}\}$. Thus, in order to prove the claim, it is sufficient to prove that
\begin{equation} \label{eq:linDepA}
\sum_{i\in \{1,\dots, h_1-1\} } \sum_{j\in \{1,\dots, h_2-1\}} \alpha_{ij} A^{ij}=0,
\end{equation} 
only if $\alpha_{ij}=0$ for all $i,j$. 

Note that $A^{11}$ is the only matrix which has a nonzero entry in the first column and the first row. Thus, \eqref{eq:linDepA} implies that $\alpha_{11}=0$. Similarly it can be seen that $A^{11}$ and $A^{12}$ are the only matrices which have nonzero entries in the first row and the second column, thus $\alpha_{12}=0$. Proceeding iteratively it follows that  if \eqref{eq:linDepA}  holds, then  $\alpha_{ij}=0$ for all $i,j$ and the claim follows.
\end{proof}

The next example uses the basis introduced above, to show that in harmonic games, the uniformly mixed strategy profile is not necessarily the unique mixed Nash equilibrium.

\begin{example}
In this example we consider two-player harmonic games, where $E^1=
\{x,y\}$ and $E^2= \{a,b, c\}$.  Using Theorem \ref{theo:harmBasis},  a basis for normalized two-player harmonic games is   given in
Tables~\ref{tab:HarmBase1} and \ref{tab:HarmBase2}. Thus, any harmonic
game with these strategy sets, can be expressed as in Table
\ref{tab:HarmBase3}.  Consider some fixed $\alpha$ and $\beta$. As can be seen from Definition \ref{def:mixed}, the
mixed equilibria for this game are given by
\[
(\textstyle{\frac{1}{2}}, \textstyle{\frac{1}{2}}) \times (\theta_1, \theta_2, \theta_3)
\]
where $\theta_1$, $\theta_2$ and $\theta_3$ are scalars that satisfy
$\theta_1+\theta_2+\theta_3=1 $, $\theta_1, \theta_2, \theta_3 \geq 0$
and $\theta_1 (6\alpha ) +\theta_2 (-6 \alpha + 6\beta ) +\theta_3
(-6\beta )=0$. Note that since there are two linear equations in three
variables, this system has a continuum of solutions. Moreover, since
$(\theta_1, \theta_2, \theta_3)=(\frac{1}{3},\frac{1}{3},\frac{1}{3})$
is a solution, it follows that there is a continuum of solutions for
which $\theta_1, \theta_2, \theta_3 \geq 0$.

Since this is true for any $\alpha$, $\beta$,
we  conclude that all games in $\cal H$
have uncountably many mixed equilibria. Additionally, since the nonstrategic component does not affect the equilibrium properties of a game it follows that all harmonic games on $E^1 \times E^2$ (all games in ${\cal H}\oplus {\cal N}$) have uncountably many mixed Nash equilibria.
        \begin{table}[ht]
\centering
\subfloat[Basis element $1$]{ 
\label{tab:HarmBase1}
\begin{tabular}{ | c | c | c | c |}
\hline
 & $a$ & $b$ & $c$  \\ \hline
$x$ & 3,  -2 &   -3, 2  &  0, 0    \\ \hline
$y$ &  -3,  2 &  3,  -2 &  0, 0   \\ \hline
\end{tabular}
}
\quad\quad\quad
\subfloat[Basis element $2$]{ \label{tab:HarmBase2}
    \begin{tabular}{ | c | c | c | c |}
\hline
 & $a$ & $b$ & $c$  \\ \hline
$x$ & 0, 0 &3,  -2 &   -3, 2       \\ \hline
$y$ &  0, 0& -3,  2 &  3,  -2      \\ \hline
\end{tabular}
}

\subfloat[A game in $\cal H$]{ \label{tab:HarmBase3}
\begin{tabular}{ | c | c | c | c |}
\hline
 & $a$ & $b$ & $c$  \\ \hline
$x$ &  $3\alpha $ , $-2\alpha $ & $-3\alpha+3\beta$, $2\alpha-2\beta$ &  $ -3\beta$, $ 2\beta$   \\ \hline
$y$ & $-3\alpha  $, $2\alpha $ & $3\alpha-3\beta$, $-2\alpha+2\beta$  & $ 3\beta$, $-2\beta$  \\ \hline
\end{tabular}
}
\caption{Basis of $\cal H$}
\label{tab:HarmBase}
\end{table}
\end{example}

Using this basis,  we characterize in the following theorem, the
correlated equilibria in two-player harmonic games. Interestingly, our
results suggest that  in two-player harmonic games, the set
of mixed Nash equilibria and correlated equilibria generically coincide.

\begin{theorem} \label{theo:genHarmEq}
Consider the set of two-player harmonic games where the first player
has $h_1$ strategies and the second player has $h_2$ strategies.
Without loss of generality assume that $h_1 \geq h_2$. Generically,
    \begin{itemize}
        \item[(i)] Every correlated equilibrium is a mixed Nash equilibrium, where the player with minimum number of strategies uses the uniformly mixed strategy.
        \item[(ii)] The dimension of the set of correlated equilibria is $h_1 - h_2$
    \end{itemize}
\end{theorem}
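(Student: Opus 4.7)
My plan is to reduce to normalized harmonic games (adding nonstrategic components does not change the equilibria), encode the correlated equilibrium conditions from Proposition~\ref{prop:corEqCharNew}(iii) as matrix equations, and then run a generic rank argument on the payoff matrix.

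Write the two-player normalized harmonic game as $(A,B)$ with $A\in\mathbb{R}^{h_1\times h_2}$ the row-player's payoff matrix. By Theorem~\ref{theo:zeroSumTheoNew}, $h_1 A + h_2 B = 0$ (so $B = -\tfrac{h_1}{h_2}A$) and both the row sums and column sums of $A$ vanish, so $A$ ranges over the $(h_1{-}1)(h_2{-}1)$-dimensional space of doubly centered matrices. Representing a joint distribution as $X\in\mathbb{R}_{\ge 0}^{h_1\times h_2}$ with $\sum_{i,j}X_{ij}=1$, Proposition~\ref{prop:corEqCharNew}(iii) applied to players $1$ and $2$ becomes, respectively,
\begin{equation*}
A\,(X_{i,\cdot})^{T}=0 \quad\forall i, \qquad A^{T}\,X_{\cdot,j}=0 \quad\forall j,
\end{equation*}
i.e., every row of $X$ lies in $\ker A$ and every column of $X$ lies in $\ker A^{T}$ (using $B^T = -(h_1/h_2)A^T$, which has the same kernel as $A^T$).

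Next I would invoke the generic rank statement: since $A$ has zero row and column sums, $\operatorname{rank}(A)\le \min(h_1-1,h_2-1)=h_2-1$, and equality holds on an open dense subset of $\mathcal{H}$ (the locus where all $(h_2{-}1)\times(h_2{-}1)$ minors are simultaneously zero is a proper algebraic subvariety; explicit examples of full rank can be built from Theorem~\ref{theo:harmBasis}). Generically then $\dim\ker A = 1$, and since $A\,\mathbf{1}_{h_2}=0$, $\ker A = \operatorname{span}(\mathbf{1}_{h_2})$. Therefore every row of $X$ is constant in $j$: $X_{ij}=\alpha_i$. This immediately forces $X$ to be a product distribution whose player-$2$ marginal is the uniform distribution $1/h_2$, so every correlated equilibrium is a mixed Nash equilibrium where the player with fewer strategies randomizes uniformly; this yields (i).

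For (ii), the columns of $X$ are all equal to the vector $\alpha=(\alpha_1,\dots,\alpha_{h_1})^{T}$, which must lie in $\ker A^{T}$. By rank-nullity, generically $\dim\ker A^{T}=h_1-(h_2-1)=h_1-h_2+1$. The scaling $\sum_i h_2\alpha_i=1$ is a non-trivial affine constraint on $\ker A^{T}$ (since $\mathbf{1}_{h_1}\in\ker A^{T}$ and $\mathbf{1}^{T}\mathbf{1}\ne 0$), reducing the dimension by one to $h_1-h_2$. The non-negativity constraints $\alpha_i\ge 0$ then cut out a convex polytope whose relative interior is non-empty (it contains the uniform distribution), so its dimension equals $h_1-h_2$, proving (ii). The main technical point to watch is the genericity claim for $\operatorname{rank}(A)=h_2-1$; everything else is linear algebra once that is in place.
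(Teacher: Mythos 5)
Your proposal is correct and follows essentially the same route as the paper's proof: reduce to normalized harmonic games, use Theorem~\ref{theo:zeroSumTheoNew} to get the doubly-centered matrix $A$ with $B=-\tfrac{h_1}{h_2}A$, encode Proposition~\ref{prop:corEqCharNew}(iii) as the kernel conditions on the rows and columns of the joint distribution, and invoke the generic rank $h_2-1$ (the paper's Lemma~\ref{lem:genericMat}) to force $\ker A=\operatorname{span}(\mathbf{1}_{h_2})$ and then count dimensions for $\alpha\in\ker A^{T}$. The only cosmetic difference is that you state both players' matrix conditions upfront rather than deriving the column condition after establishing the product structure; the genericity justification via vanishing minors is at the same level of detail as the paper's.
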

\begin{proof}
As discussed earlier,  nonstrategic components of games do not affect the equilibrium sets. Thus, to prove that (i) and (ii) are generically true for harmonic games, it is sufficient to prove that they generically hold for normalized harmonic games.

Consider a two-player normalized harmonic game with payoff matrices
$(A,B)$, where $A,B \in \mathbb{R}^{h_1 \times h_2}$.  By Theorem
\ref{theo:zeroSumTheoNew}, it follows that $A=- \frac{h_2}{h_1}B$.
Denote by $e_1$ (similarly $e_2$), the $h_1$ (similarly $h_2$)
dimensional vector, all entries of which are identically equal to
$1$. Since the game is normalized, it follows that $e_1^T A =0 $ and
$B e_2 = - \frac{h_1}{h_2} A e_2 =0$.

Let $x$ be a correlated equilibrium of this game.
For each $\mathbf{p}^1 \in E^1$, denote by
$x({\textbf{p}^1}, ~{\cdot}) \in \mathbb{R}^{h_2}$   the vector of probabilities $[x({\textbf{p}^1}, \textbf{p}^2)]_{{\bf p}^2} $.
    By Proposition \ref{prop:corEqCharNew} (iii), it follows that these vectors satisfy the condition
    \begin{equation} \label{eq:corrMatrixDef}
     A x(\mathbf{p}^1, \cdot) =0.
    \end{equation}
Note that we need to characterize the kernel of the payoff matrix $A$, to identify the correlated equilibria. For that reason, we state the following technical lemma:
\begin{lemma} \label{lem:genericMat}
Consider the set of normalized harmonic games  in Theorem \ref{theo:genHarmEq}.
Generically, the payoff matrices of players have their row and column ranks equal to $h_2-1$.
\end{lemma}
\begin{proof}
The payoff matrices of the players satisfy $A=- \frac{h_2}{h_1}B$, so they have the same row and column rank.
It follows from Theorem \ref{theo:harmBasis} that the collection of matrices $\{A^{ij} \}$ span the payoff matrices of harmonic games. It can be seen that the matrices in the span of this collection generically have row and column rank equal to  $h_2-1$, and the claim follows.
\end{proof}
Using this lemma, it follows that generically the kernel of $A$ is 1 dimensional. As shown earlier, $e_2$ is in the kernel of $A$, thus, \eqref{eq:corrMatrixDef}, implies that generically $x({\textbf{p}^1}, ~{\cdot})$ has the form
 $ x({\textbf{p}^1}, ~{\cdot})=c_{\mathbf{p}^1}  e_2$, for some $c_{\mathbf{p}^1} \in \mathbb{R}$. 
Since $x$ is a probability distribution, 
 the definition of $x({\textbf{p}^1}, ~{\cdot})$ implies that $x({\textbf{p}^1}, \textbf{p}^2)= c_{\mathbf{p}^1}
 \geq 0$, and  $\sum_{\mathbf{p}^1, \mathbf{p}^2} x({\textbf{p}^1}, \textbf{p}^2)=h_2 \sum_{\mathbf{p}^1} c_{\mathbf{p}^1} =1$.  Thus, it follows that $x({\textbf{p}^1}, \textbf{p}^2)= c_{\mathbf{p}^1}=\frac{\alpha_{\mathbf{p}^1}}{h_2}$, for some $\alpha_{\mathbf{p}^1}\geq0$ such that $\sum_{\mathbf{p}^1} \alpha_{\mathbf{p}^1} =1$. It can be seen from this description that generically, the correlated equilibria are mixed equilibria where the first player uses the probability distribution $x^1=\alpha \triangleq [\alpha_{\mathbf{p}^1}]_{\mathbf{p}^1} \in \Delta E^1 $ and the second player uses the distribution 
 $x^2=\left[ \frac{1}{h_2} \right]_{\mathbf{p}^2}$. 
 
Since the correlated equilibria have this form, it can be seen using Proposition \ref{prop:corEqCharNew} (iii) for the second player that 
\begin{equation} \label{eq:CorrEqCondNew}
            \sum_{{\bf p}^{1}} u^2({\bf p}^{1},{\bf q}^2)  x({\bf p}^1, {\bf p}^{2})  =           \frac{1}{h_2} \sum_{{\bf p}^{1}} u^2({\bf p}^{1},{\bf q}^2)  \alpha_{\mathbf{p}_1} =0,
\end{equation}
where $\alpha \in \Delta E^1$.
The above  condition  can be restated using the payoff matrices as follows:
\begin{equation} \label{eq:condOnA}
\alpha^T B = -  \frac{h_1}{h_2} \alpha^T A=0,
\end{equation}
where $\alpha \in \Delta E^1$.
 Since, the row rank of $A$ is $h_2-1$, the dimension of $\alpha$ that satisfies  \eqref{eq:condOnA}  is $h_1-h_2+1$. Note that since $\alpha$ is a probability distribution, it also satisfies the condition $\alpha^T e_1=1 $. Note that since $e_1^T A=0$, this condition is orthogonal to the ones in \eqref{eq:condOnA}. Hence, it follows that the dimension of $\alpha$ which satisfies the correlated equilibrium conditions in \eqref{eq:condOnA} (other than the positivity) is $h_1 -h_2$.
 On the other hand, $\alpha =\frac{1}{h_1} e_1$ gives a correlated equilibrium (by Theorem \ref{theo:harmUniformlyMixedNE}), thus the positivity condition does not change the dimension of the set of correlated equilibria, and   the dimension  is generically $h_1 -h_2$.
\end{proof}

An immediate implication of this theorem is the following:
\begin{corollary}
 In two-player harmonic games where players have equal number of
 strategies, the uniformly mixed strategy is generically the unique correlated
 equilibrium.
\end{corollary}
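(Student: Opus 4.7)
The plan is to derive this as a direct corollary of Theorem~\ref{theo:genHarmEq} combined with Theorem~\ref{theo:harmUniformlyMixedNE}. Specializing Theorem~\ref{theo:genHarmEq} to the case $h_1 = h_2 \triangleq h$, part (ii) tells us that generically the dimension of the set of correlated equilibria is $h_1 - h_2 = 0$, so the set is (at most) a single point. Meanwhile, Theorem~\ref{theo:harmUniformlyMixedNE} guarantees that the uniformly mixed strategy profile is a mixed Nash equilibrium, hence a correlated equilibrium, of every harmonic game. Thus that zero-dimensional set is nonempty and must consist of exactly the uniformly mixed profile.

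To spell this out in terms of the argument used to prove Theorem~\ref{theo:genHarmEq}: any correlated equilibrium $x$ of a generic two-player normalized harmonic game with $h_1 = h_2 = h$ satisfies $A\,x(\mathbf{p}^1,\cdot) = 0$ for every $\mathbf{p}^1$, and generically $\ker A$ is spanned by $e_2$, forcing $x(\mathbf{p}^1,\mathbf{p}^2) = \alpha_{\mathbf{p}^1}/h$ for some probability vector $\alpha \in \Delta E^1$. The column-player condition from Proposition~\ref{prop:corEqCharNew}(iii) then yields $\alpha^T A = 0$, and since $A$ generically has row rank $h-1$ with $e_1$ spanning its left kernel (as $e_1^T A = 0$ by normalization), we obtain $\alpha = e_1/h$. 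Hence $x = \tfrac{1}{h^2}\, e_1 e_2^T$, which is exactly the product distribution induced by the uniformly mixed strategy profile.

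The only real subtlety is the genericity clause: the conclusion fails precisely on the measure-zero subset where Lemma~\ref{lem:genericMat} breaks down, i.e., where the payoff matrix $A$ drops below row/column rank $h-1$. On the full-rank stratum the kernel computation above is unambiguous and the uniqueness argument is routine linear algebra, so no serious obstacle is expected beyond the bookkeeping already carried out in the proof of Theorem~\ref{theo:genHarmEq}.
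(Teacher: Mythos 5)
Your proposal is correct and follows exactly the route the paper intends: the corollary is stated as an immediate implication of Theorem~\ref{theo:genHarmEq} (with $h_1=h_2$ giving a zero-dimensional, hence single-point, set of correlated equilibria) together with Theorem~\ref{theo:harmUniformlyMixedNE} guaranteeing that the uniformly mixed profile lies in that set. Your explicit retracing of the kernel computation to identify $\alpha = e_1/h$ is a faithful specialization of the paper's own argument for Theorem~\ref{theo:genHarmEq} and adds nothing beyond what the paper already establishes.
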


Note that Theorem~\ref{theo:genHarmEq} implies that in two-player
harmonic games, generically there are no correlated equilibria that
are not mixed equilibria. This statement fails, when the number of
players is more than two, as shown in the following theorem.
\begin{theorem}
Consider a $M$-player harmonic game, where $M > 2$, and in which each
player has $h$ strategies such that $h^M>M(h^2-1)+1$. The set of
correlated equilibria is strictly larger than the set of mixed Nash
equilibria: The set of correlated equilibria has dimension at least
$h^M-1- Mh(h-1) $, and the set of mixed equilibria has dimension at
most $M (h -1)$.
\end{theorem}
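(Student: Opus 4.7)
Since the nonstrategic component does not affect the equilibrium sets, I may assume without loss of generality that the harmonic game is normalized, i.e., belongs to $\mathcal{H}$. The plan is to upper-bound the dimension of the set of mixed Nash equilibria and lower-bound the dimension of the set of correlated equilibria, then invoke the hypothesis $h^M>M(h^2-1)+1$ to separate the two.

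For the mixed Nash bound, every mixed Nash equilibrium lies in $\prod_{m\in\mathcal{M}}\Delta E^m$, which has dimension $\sum_m (h-1)=M(h-1)$. Hence the set of mixed Nash equilibria has dimension at most $M(h-1)$.

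For the correlated equilibrium bound, I apply the linear characterization in Proposition \ref{prop:corEqCharNew}~(iii): in a normalized harmonic game, a probability distribution $x\in\Delta E$ is a correlated equilibrium if and only if, for every player $m$ and every pair of strategies $\mathbf{p}^m,\mathbf{q}^m\in E^m$,
\[
\sum_{\mathbf{p}^{-m}}u^m(\mathbf{q}^m,\mathbf{p}^{-m})\,x(\mathbf{p}^m,\mathbf{p}^{-m})=0.
\]
Naively this gives $Mh^2$ linear equations, but for each fixed $m$ and $\mathbf{p}^m$, summing the equation over $\mathbf{q}^m\in E^m$ yields $\sum_{\mathbf{p}^{-m}}x(\mathbf{p}^m,\mathbf{p}^{-m})\sum_{\mathbf{q}^m}u^m(\mathbf{q}^m,\mathbf{p}^{-m})=0$, which is automatic since the game is normalized. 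This produces $Mh$ redundant equations, leaving at most $M(h^2-h)=Mh(h-1)$ independent equality constraints beyond the simplex constraint. The affine subspace of $\mathbb{R}^{E}$ cut out by the simplex hyperplane $\sum_{\mathbf{p}}x(\mathbf{p})=1$ together with these constraints therefore has dimension at least $h^M-1-Mh(h-1)$. Finally, to pass from the affine subspace to the convex set of correlated equilibria, I use the fact that (by Theorem~\ref{theo:harmUniformlyMixedNE}) the uniformly mixed strategy profile is a Nash, hence correlated, equilibrium; since it lies in the relative interior of $\Delta E$, positivity does not lower the local dimension, and the set of correlated equilibria has dimension at least $h^M-1-Mh(h-1)$.

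To conclude the strict inclusion, I compare the two bounds:
\[
h^M-1-Mh(h-1)>M(h-1)\iff h^M>M(h-1)(h+1)+1=M(h^2-1)+1,
\]
which holds by hypothesis. Thus the set of correlated equilibria is strictly larger than the set of mixed Nash equilibria. The main delicate point will be justifying that $Mh$ (and no more) redundancies arise from normalization and that positivity preserves the local dimension; both, however, follow cleanly from the characterization in Proposition~\ref{prop:corEqCharNew} and from Theorem~\ref{theo:harmUniformlyMixedNE} providing an interior correlated equilibrium.
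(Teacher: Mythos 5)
Your proof is correct and follows essentially the same dimension-counting argument as the paper: bound the set of mixed Nash equilibria above by $M(h-1)$ and the set of correlated equilibria below by $h^M-1-Mh(h-1)$ via the linear characterization in Proposition~\ref{prop:corEqCharNew}, then compare using the hypothesis. Your extra care---making the $Mh$ normalization redundancies explicit and invoking the uniformly mixed profile as a relative-interior point so that positivity does not drop the dimension---tightens details the paper's proof glosses over, but the route is the same.
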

\begin{proof}
Since each player has $h$ strategies, the set of mixed strategies has
dimension $M (h -1)$, and this is a trivial upper bound on the
dimension of the set of mixed equilibria. The set of correlated
equilibria, on the other hand, is defined by the equalities in
Proposition  \ref{prop:corEqCharNew}. Note that there are
$Mh(h-1)$ such equalities and the dimension of $\Delta E$ is $h^M-1$,
hence the dimension of the correlated equilibria is at least $h^M-1-
Mh(h-1) $ (by ignoring possible dependence of the equalities).

The difference in the dimensions   implies that the set of correlated equilibria is strictly larger than the set of mixed equilibria.
\end{proof}
Note that this theorem can be easily generalized to the case when
players have different number of strategies. An interesting problem is
to find the exact dimensions of the set of mixed Nash and correlated
equilibria when there are more than two players. However,
due to complicated dependence relations of the correlated equilibrium
conditions in Proposition~\ref{prop:corEqCharNew}, we do not pursue this
question in this paper, and leave it as a future problem.

\subsection{Nonstrategic Component and Efficiency in Games} 
\label{se:nonStrComp}

We first consider games for which the potential and harmonic components are equal to zero. In such games all pairwise comparisons are equal to zero, hence each player is indifferent between any of his strategies given any strategies  of other players. It is thus immediate that all strategy profiles are Nash equilibria in such games.

More generally, from the definition of the nonstrategic component it can be seen that in any game, the pairwise comparisons   are functions of only the potential and harmonic components of the game.
Thus, the nonstrategic component has \emph{no effect whatsoever} on the equilibrium properties of games.
However, the nonstrategic component is of interest mainly through its effect on the efficiency properties of games, as  discussed in the rest of this section. The efficiency measure we focus on  is Pareto optimality.

\begin{definition}[Pareto Optimality] \label{def:PoAPoS}
    A strategy profile $\bf p$ is Pareto optimal if and only if there does not exist another strategy profile $\bf q$ such that all players weakly increase their payoffs and one player strictly increases its payoff, i.e,
    \begin{equation}
    \begin{aligned}
        u ^m ({\bf q})& \geq u ^m({\bf p}), \qquad \qquad \mbox{for all $m\in {\cal M}$}  \\
        u ^k ({\bf q})& > u ^k({\bf p}), \qquad \qquad \mbox{for some $k\in {\cal M}$}.
    \end{aligned}
    \end{equation}

\end{definition}

We first state a preliminary lemma, which will be useful in the subsequent analysis.
\begin{lemma} \label{lem:NSHarm0}
    Let  $\cal G$ be a game with utilities $\{u^m\} $. There exists a  game $\hat{\cal G}$ with utilities $\{\hat{u}^m\} $  such that (i) the potential and harmonic components of $\hat{\cal G}$ are identical to  these of $\cal G$ and (ii) in $\hat{\cal G}$ all players get zero payoff at all strategy profiles that are pure Nash equilibria of ${\cal G}$.
\end{lemma}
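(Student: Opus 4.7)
The plan is to modify only the nonstrategic component of $\mathcal{G}$, since the nonstrategic subspace $\mathcal{N}$ is exactly the set of utility collections $\{v^m\}$ with $v^m \in \ker D_m$, and by Lemma~\ref{theo:projDiff2} these are precisely the utilities that depend only on $\mathbf{p}^{-m}$. Thus for any choice of functions $c^m : E^{-m} \to \mathbb{R}$, the game $\hat{\mathcal{G}}$ with utilities $\hat{u}^m(\mathbf{p}) = u^m(\mathbf{p}) + c^m(\mathbf{p}^{-m})$ differs from $\mathcal{G}$ only in its nonstrategic component; in particular its potential and harmonic components coincide with those of $\mathcal{G}$, establishing (i). Moreover, since pairwise comparisons are unchanged, $\mathcal{G}$ and $\hat{\mathcal{G}}$ share the same pure Nash equilibria.

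To achieve (ii), I would define $c^m(\mathbf{p}^{-m})$ as follows: if there exists $\mathbf{p}^m \in E^m$ such that $(\mathbf{p}^m, \mathbf{p}^{-m})$ is a pure Nash equilibrium of $\mathcal{G}$, set $c^m(\mathbf{p}^{-m}) = -u^m(\mathbf{p}^m, \mathbf{p}^{-m})$; otherwise set $c^m(\mathbf{p}^{-m}) = 0$. Then for any pure Nash equilibrium $\mathbf{p}$ of $\mathcal{G}$ and any player $m$, we have $\hat{u}^m(\mathbf{p}) = u^m(\mathbf{p}) + c^m(\mathbf{p}^{-m}) = u^m(\mathbf{p}) - u^m(\mathbf{p}) = 0$, which is exactly what (ii) requires.

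The one subtle point, which is the main thing to check, is that $c^m$ is \emph{well-defined}: for a given $\mathbf{p}^{-m}$, there may be several $\mathbf{p}^m \in E^m$ such that $(\mathbf{p}^m, \mathbf{p}^{-m})$ is a pure Nash equilibrium, and we must check that all of them produce the same value of $-u^m(\mathbf{p}^m, \mathbf{p}^{-m})$. This follows from the Nash equilibrium condition itself: if both $(\mathbf{p}^m, \mathbf{p}^{-m})$ and $(\mathbf{q}^m, \mathbf{p}^{-m})$ are pure Nash equilibria, then applying \eqref{eq:nash_basic} to player $m$ at each equilibrium yields $u^m(\mathbf{p}^m, \mathbf{p}^{-m}) \geq u^m(\mathbf{q}^m, \mathbf{p}^{-m})$ and $u^m(\mathbf{q}^m, \mathbf{p}^{-m}) \geq u^m(\mathbf{p}^m, \mathbf{p}^{-m})$, so equality holds and the definition of $c^m$ is consistent. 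With this verified, the two properties follow immediately from the construction, and the proof is complete.
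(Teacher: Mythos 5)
Your proof is correct and follows essentially the same route as the paper's: both constructions shift $u^m$ by a function of $\mathbf{p}^{-m}$ equal to the (common) equilibrium payoff of player $m$, and both hinge on the same key observation that $m$-comparable pure Nash equilibria give player $m$ identical payoffs, which you justify exactly as the paper does via the two equilibrium inequalities. Your packaging of the shift as an explicit $c^m(\mathbf{p}^{-m}) \in \ker D_m$ makes the ``unchanged potential and harmonic components'' claim immediate, whereas the paper verifies the invariance of pairwise comparisons by a short case analysis, but this is a presentational difference only.
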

\begin{proof}
    Let ${N}_{\cal G}$ be the set of pure  Nash equilibria of $\cal G$.
    First observe that if there are $m$-comparable equilibria in $\cal G$ player $m$ receives the same payoff in these equilibria, i.e., if ${\bf p},\textbf{q}\in {N}_{\cal G}$  and ${\bf p}=(\textbf{p}^m, \textbf{p}^{-m})$, ${\bf q}=(\textbf{q}^m, \textbf{p}^{-m})$ for some $m$, then $u^m(\textbf{p}^m, \textbf{p}^{-m})=u^m(\textbf{q}^m, \textbf{p}^{-m})$.
        This equality holds since otherwise, player $m$ would have incentive to improve its payoff at $\bf p$ or  $\bf q$ by switching to a strategy profile with better payoff, and this contradicts with $\textbf{p}$ and $\textbf{q}$ being Nash equilibria of ${\cal G}$.

         Define the  game $\hat{\cal G}$ with utilities $\{\hat{u}^m\}_{m\in{\cal M}}$ such that
        \begin{equation} \label{eq:defUhat_pareto1}
            \hat{u}^m(\textbf{p})= \left\{
                \begin{aligned}
                    0 \qquad \qquad &\mbox{ ${\bf p}\in{N}_{\cal G}$} \\
                    u^m({\bf p})-u^m({\bf q}) \qquad \qquad &\mbox{if there exists a ${\bf q}\in{N}_{\cal G}$ which is $m$-comparable with $\bf p$} \\
                    u^m({\bf p}) \qquad \qquad &\mbox{otherwise.}
                \end{aligned}
                \right.
        \end{equation}
        for all $m\in{\cal M}$, $\textbf{p}\in E$. Note that  $\hat{u}^m$ is well defined since in $\cal G$  player $m$ gets the same payoff in all $\textbf{p}\in{N}_{\cal G}$ that are $m$-comparable. Note that in $\hat{\cal G}$ all players receive zero payoff at  all strategy profiles $\textbf{p}\in {N}_{\cal G}$.
To prove the claim, it suffices to  show that    $\cal G$ and $\hat{\cal G}$   have the same potential and harmonic components,  or equivalently  the game with utilities $\{u^m-\hat{u}^m\}_{m\in {\cal M}}$ is nonstrategic, i.e., belongs to $\cal N$.

        In order to prove that the difference is nonstrategic, we first show that the pairwise comparisons of games with utilities $\{u^m\}_{m\in {\cal M}}$ and $\{\hat{u}^m\}_{m\in {\cal M}}$ are the same. Note that by \eqref{eq:defUhat_pareto1} given $m$-comparable $\bf p$ and $\bf q$, $u^m(\textbf{p})-u^m(\textbf{q})=\hat{u}^m(\textbf{p})-\hat{u}^m(\textbf{q})$, if there is no $\textbf{r} \in {N}_{\cal G}$ that is $m$-comparable with $\bf p$ or $\bf q$.         
        If there exists $\textbf{r}\in{N}_{\cal G}$   that is $m$-comparable with $\bf p$, then it is also $m$-comparable with $\bf q$, hence
    it follows by \eqref{eq:defUhat_pareto1} that  $\hat{u}^m(\textbf{p})-\hat{u}^m(\textbf{q})=u^m(\textbf{p})-u^m(\textbf{r})-u^m(\textbf{q})+u^m(\textbf{r})=u^m(\textbf{p})-u^m(\textbf{q})$. Note that these equalities hold even if $\bf p$ or $\bf q$ is in ${N}_{\cal G}$.

        Thus, for any $m$-comparable  $\bf p$ and $\bf q$ it follows that
        \begin{equation*}
            \left( u^m(\textbf{p})-\hat{u}^m(\textbf{p}) \right) - \left( u^m(\textbf{q})-\hat{u}^m(\textbf{q}) \right)=0,
        \end{equation*}
         hence  the game with utilities $\{u^m-\hat{u}^m\}_{m\in {\cal M}}$ is nonstrategic and the claim follows.
 \end{proof}

Note that if two games differ only in their nonstrategic components, the pairwise comparisons, and hence the equilibria  of these games are identical.
Therefore, an immediate implication of the above lemma is that for a given game there exists another game with same potential and harmonic components such that the payoffs at all Nash equilibria are equal to zero. We use this to prove the following Pareto optimality result.

\begin{theorem}\label{theo:allParetoNE}
Let  $\cal G$ be a game with utilities $\{u^m\}$. There exists a  game $\bar{\cal G}$ with utilities $\{\bar{u}^m\}$  such that (i) the potential and harmonic components of $\bar{\cal G}$ are identical to  these of $\cal G$ and (ii) in $\bar{\cal G}$ the set of pure NE coincides with the set of Pareto optimal strategy profiles.
\end{theorem}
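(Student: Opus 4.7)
The plan is to construct $\bar{\cal G}$ in two stages, modifying only the nonstrategic component of $\cal G$ so that the equilibrium set is preserved throughout. In the first stage I would invoke Lemma \ref{lem:NSHarm0} to obtain an intermediate game $\hat{\cal G}$ with utilities $\{\hat{u}^m\}$ that shares the potential and harmonic components of $\cal G$ and in which every player receives payoff $0$ at every pure Nash equilibrium of $\cal G$. Because $\hat{\cal G}$ differs from $\cal G$ only in its nonstrategic component, pairwise comparisons, and hence the set of pure NE, are preserved.

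In the second stage I would define $\bar{\cal G}$ by
\begin{equation*}
\bar{u}^m(\mathbf{p}) \;\triangleq\; \hat{u}^m(\mathbf{p}) \;-\; \max_{\mathbf{r}^m \in E^m} \hat{u}^m(\mathbf{r}^m, \mathbf{p}^{-m}), \qquad m \in \mathcal{M}.
\end{equation*}
The subtracted term depends only on $\mathbf{p}^{-m}$, so it is constant on $m$-comparable profiles; hence the game $\{\bar{u}^m - \hat{u}^m\}$ lies in $\cal N$. This immediately yields claim (i): $\bar{\cal G}$ has the same potential and harmonic components as $\hat{\cal G}$, and therefore as $\cal G$. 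For the same reason the pure NE set of $\bar{\cal G}$ coincides with that of $\cal G$.

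To verify claim (ii), I would observe that by construction $\bar{u}^m(\mathbf{p}) \le 0$ for every $m$ and every $\mathbf{p}$, with equality exactly when $\mathbf{p}^m$ is a best response to $\mathbf{p}^{-m}$ in $\hat{\cal G}$ (equivalently in $\cal G$). Thus $\mathbf{p}$ is a pure NE if and only if $\bar{u}^m(\mathbf{p}) = 0$ for every $m$, in which case the payoff vector at $\mathbf{p}$ is the zero vector and no profile can Pareto dominate it, since all payoffs in $\bar{\cal G}$ are nonpositive. Conversely, if $\mathbf{p}$ is not a pure NE then some player $k$ fails to best-respond, so $\bar{u}^k(\mathbf{p}) < 0$; any pure NE $\mathbf{p}^*$ then satisfies $\bar{u}^m(\mathbf{p}^*) = 0 \ge \bar{u}^m(\mathbf{p})$ for all $m$ with strict inequality at $k$, so $\mathbf{p}^*$ Pareto dominates $\mathbf{p}$ and $\mathbf{p}$ is not Pareto optimal.

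The main obstacle is identifying the correct nonstrategic shift in the second stage. Lemma \ref{lem:NSHarm0} only forces the payoffs to vanish at the NE of $\cal G$, so one still needs to exploit the remaining freedom along each fiber $\{(\mathbf{r}^m, \mathbf{p}^{-m}) : \mathbf{r}^m \in E^m\}$ to make best responses the unique zero-valued configurations and to push everything else to a strictly negative value. Subtracting the fiber-wise maximum of $\hat{u}^m$ accomplishes exactly this, converting the pointwise NE condition ``best-response in every coordinate'' into the global condition ``payoff vector equals $\mathbf{0}$'' that characterizes Pareto optimality in $\bar{\cal G}$. Once this choice is in hand the verifications above reduce to routine case analysis.
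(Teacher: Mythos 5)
Your proposal is correct, and the key step differs from the paper's in an instructive way. Both arguments begin identically, invoking Lemma \ref{lem:NSHarm0} and then applying a second purely nonstrategic shift, but the shifts are different. The paper subtracts a single large constant $\alpha = 1+\max_{m,\mathbf{p}}u^m(\mathbf{p})$ from $u^m(\mathbf{p})$ on exactly those fibers $\{(\mathbf{r}^m,\mathbf{p}^{-m}):\mathbf{r}^m\in E^m\}$ containing no Nash equilibrium; this forces a case analysis (profiles that are $m$-comparable to an equilibrium versus those that are not) plus a short contradiction argument to establish that every non-equilibrium profile has a strictly negative coordinate. You instead subtract the fiber-wise best-response value $\max_{\mathbf{r}^m}\hat{u}^m(\mathbf{r}^m,\mathbf{p}^{-m})$, which is likewise a function of $\mathbf{p}^{-m}$ alone and hence nonstrategic, but which immediately yields $\bar{u}^m\le 0$ with equality precisely at best responses; the equivalence ``pure NE $\Leftrightarrow$ payoff vector is $\mathbf{0}$'' then falls out with no case analysis. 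Your construction is also self-contained: the fiber-max subtraction applied directly to $u^m$ already produces the desired game, so the first stage via Lemma \ref{lem:NSHarm0} is redundant in your route (whereas the paper's choice of $\alpha$ and its sign bookkeeping genuinely rely on that normalization). One shared caveat, not a defect of your proposal relative to the paper: both proofs establish that a non-equilibrium profile is Pareto dominated by exhibiting a pure Nash equilibrium that dominates it, and so both implicitly assume the pure NE set is nonempty; when it is empty the Pareto optimal set of a finite game cannot be empty, so the stated coincidence cannot hold. Since the paper's own argument makes the same tacit assumption, your proof is on equal footing.
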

\begin{proof}
Games that differ only in nonstrategic components have identical pairwise comparisons, hence the set of Nash equilibria (NE) is the same  for such games.     Let ${N}_{\cal G}$ denote the set of pure NE of $\cal G$, or equivalently the set of pure NE of a game which differs from $\cal G$ only by its nonstrategic component.

    By Lemma \ref{lem:NSHarm0}, it follows that for any game $\cal G$ there exists a game such that the two games differ only in their nonstrategic components and all players receive zero payoffs at all pure NE (strategy profiles in ${N}_{\cal G}$).
     Therefore, without loss of generality, we let $\cal G$ be a game  in which all players receive zero payoffs at all NE. Given such a game, let $\alpha =1+ \max_{m,\textbf{p}}u^m(\textbf{p})$.   Consider the game $\bar{\cal G}$ with utilities $\{\bar{u}^m\}_{m\in{\cal M}}$ such that
        \begin{equation*}
            \bar{u}^m(\textbf{p})= \left\{
                \begin{aligned}
                u^m({\bf p})  \qquad &\mbox{if $\textbf{p}\in {N}_{\cal G}$ or if there exists a ${\bf q}\in{N}_{\cal G}$ which is $m$-comparable with $\bf p$} \\
                u^m({\bf p})-\alpha  \qquad &\mbox{otherwise.}
                \end{aligned}
                \right.
    \end{equation*}
        for all $m\in{\cal M}$, $\textbf{p}\in E$.

        Consider $m$-comparable strategy profiles  $\bf p$ and $\bf q$. Observe that if there exists a strategy profile $\textbf{r}$ that is $m$-comparable with $\textbf{p}$, it is also ${m}$-comparable with $\textbf{q}$ since by definition of $m$-comparable strategy profiles $\textbf{p}^{-m}= \textbf{r}^{-m}=\textbf{q}^{-m}$.

         Assume that
         there is a NE that is $m$-comparable with $\bf p$ or $\bf q$, then by definition of $\bar{u}^m$ it follows that $u^m(\textbf{p})-u^m(\textbf{q})=\bar{u}^m(\textbf{p})-\bar{u}^m(\textbf{q})$. On the contrary if there is no NE that is $m$-comparable with $\textbf{p}$ or $\textbf{q}$ then $\bar{u}^m(\textbf{p})-\bar{u}^m(\textbf{q})={u}^m(\textbf{p})-\alpha-{u}^m(\textbf{q})+\alpha={u}^m(\textbf{p})-{u}^m(\textbf{q})$.
          Hence  $\cal G$ and $\bar{\cal G}$ have identical pairwise comparisons, and thus
         the game with utilities $\{u^m-\bar{u}^m\}_{m\in {\cal M}}$ is nonstrategic.

         We prove the claim, by showing that  at all strategy profiles that are not an equilibrium in $\bar{\cal G}$ (equivalently in $\cal G$), the players receive nonpositive payoffs and at least one player receives negative   payoff and at all NE all players receive zero payoff. This immediately implies that strategy profiles, that are not NE cannot be Pareto optimal, as deviation to a NE increases the payoff of at least one player and the payoff of other players do not decrease by such a deviation. Additionally, it implies that all NE are Pareto optimal, since at all NE all players receive the same payoff, and deviation to a strategy profile that is not a NE strictly decreases the payoff of at least a single player.

         By construction it follows that at all NE all players receive zero payoff. Let $\textbf{p}$ be a strategy profile that is not a NE. If there is some $m$ for which $\textbf{p}$ is not $m$-comparable to a NE, then it follows that $\bar{u}^m(\textbf{p})={u}^m(\textbf{p})-\alpha \leq -1$. If on the other hand, $\textbf{p}$ is $m$-comparable to a NE, then $\bar{u}^m(\textbf{p})\leq 0$, since payoffs are equal to zero at NE. Thus, at any strategy profile, $\bf p$, that is not a NE players receive nonpositive payoffs, and additionally if for some player $m$, $\bf p$ is not  $m$-comparable to a NE, player $m$ receives strictly negative payoff.

         To finish the proof we need to show that if $\bf p$ is $m$-comparable to a NE for all $m\in{\cal M}$, then it still follows that  $\bar{u}^m(\textbf{p}) < 0$ for some $m\in{\cal M}$. Assume that this is not true and $\bar{u}^m(\textbf{p}) = 0$ for all $m\in{\cal M}$. Since $\bf p$ is not a NE, there is at least  one player, say $m$, who can get strictly positive payoff by deviating to a different strategy profile. Therefore this player has strictly positive payoff after its deviation. However, as argued earlier payoffs are nonpositive at strategy profiles that are not NE, and zero at NE.
         Thus. we reach a contradiction and  $\bar{u}^m(\textbf{p}) < 0$ for some $m\in{\cal M}$.

         Therefore, it follows that all players have zero payoffs at all NE, and  at any other strategy profile all players have nonpositive payoffs and at least one player has strictly negative payoff.
\end{proof}

Note that it is possible to obtain similar results for other efficiency measures using similar arguments to those given in this section. This direction will not be pursued in this paper.
The above theorem suggests that the difference in the nonstrategic component of games that are otherwise identical may cause the efficiency properties of these game to be very different. In particular, in one of the games all equilibria may be Pareto optimal when this is not the case for the other game. Therefore, although the nonstrategic component does not change the pairwise comparisons and equilibrium properties in a game it plays a key role in Pareto optimality of equilibria.

\subsection{Zero-Sum Games and Identical Interest Games} \label{se:ZS}

In this section we present a different decomposition of the space of games, and discuss its relation to our decomposition. To simplify the presentation, we focus on bimatrix games, where each player has $h$ strategies. 
Before introducing the   decomposition, we define zero-sum games and identical interest games.
\begin{definition}[Zero-sum and Identical Interest Games]
Let $\cal G$ denote the bimatrix game with payoff matrices $(A,B)$.  $\cal G$ is  a \emph{zero-sum game}, if $A+B=0$, and  $\cal G$ is an \emph{identical interest game}, if $A=B$.
\end{definition}

We denote the set of zero-sum games by $\cal Z$, and the set of identical interest games by $\cal I$. 
Since these sets are defined by equality constraints on the payoff matrices, it follows that  they are subspaces. 

The idea of decomposing a game to an identical interest game and a
zero-sum game was previously mentioned in the literature for
two-player games, \cite{Basar:1974p1629}.  The following lemma implies
that $\cal Z$ and $\cal I$ decomposition of the set of games, has the
direct sum property.
\begin{lemma}
The space of two-player games ${\cal G}_{{\cal M},E}$ is a direct sum
of subspaces of zero-sum and identical interest games, i.e., ${\cal
G}_{{\cal M},E}={\cal Z} \oplus {\cal I}$.
\end{lemma}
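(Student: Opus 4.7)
The plan is to prove the direct sum decomposition by an explicit symmetric/antisymmetric construction on the pair of payoff matrices, in direct analogy with the standard decomposition of a square matrix into its symmetric and antisymmetric parts. Specifically, given an arbitrary bimatrix game $\mathcal{G}$ with payoffs $(A,B)$, I would define
\[
S \triangleq \tfrac{1}{2}(A+B), \qquad D \triangleq \tfrac{1}{2}(A-B),
\]
and propose $(A_I,B_I) \triangleq (S,S) \in \mathcal{I}$ as the identical-interest component and $(A_Z,B_Z) \triangleq (D,-D) \in \mathcal{Z}$ as the zero-sum component. Verifying that these objects actually lie in the claimed subspaces is immediate from the defining equalities $A_I=B_I$ and $A_Z+B_Z=0$.

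Next, to establish existence of the decomposition, I would simply check that $(A_I,B_I)+(A_Z,B_Z) = (S+D,\, S-D) = (A,B)$, so that every element of $\mathcal{G}_{\mathcal{M},E}$ is expressible as a sum of an element of $\mathcal{I}$ and an element of $\mathcal{Z}$. This yields $\mathcal{G}_{\mathcal{M},E} = \mathcal{I} + \mathcal{Z}$.

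To upgrade this to a direct sum, I would verify that $\mathcal{I} \cap \mathcal{Z} = \{0\}$. Indeed, if $(A,B)$ lies in both subspaces, then $A=B$ from membership in $\mathcal{I}$ and $A+B=0$ from membership in $\mathcal{Z}$; combining gives $2A=0$, hence $A=B=0$, so the intersection is trivial. Together with the existence part, this gives the direct sum $\mathcal{G}_{\mathcal{M},E} = \mathcal{Z} \oplus \mathcal{I}$, and uniqueness of the decomposition follows automatically.

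I do not foresee any genuine obstacle here: both $\mathcal{Z}$ and $\mathcal{I}$ are cut out by linear equations on the pair $(A,B)$, so they are linear subspaces, and the argument reduces to a one-line linear-algebraic computation analogous to decomposing a matrix into its symmetric and antisymmetric parts. The only minor subtlety worth flagging is that this decomposition is different in spirit from the potential/harmonic/nonstrategic decomposition of Theorem \ref{theo:decompTheoSpace}, since normalization plays no role here and the components need not be orthogonal under the inner product introduced earlier; this distinction is useful to keep in mind when comparing the two decompositions in the subsequent discussion.
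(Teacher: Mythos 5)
Your proposal is correct and follows essentially the same route as the paper: the same explicit components $(\tfrac{1}{2}(A+B),\tfrac{1}{2}(A+B))$ and $(\tfrac{1}{2}(A-B),-\tfrac{1}{2}(A-B))$, with directness verified via the trivial intersection $\mathcal{Z}\cap\mathcal{I}=\{0\}$, which is equivalent to the paper's check that the only decomposition of the zero game is trivial.
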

\begin{proof}
Consider a bimatrix game with utilities $(u^1,u^2)$. 
Observe that this game can be decomposed to the games with payoff functions $(\frac{u^1-u^2}{2}, \frac{u^2-u^1}{2})$ and $(\frac{u^1+u^2}{2},\frac{u^1+u^2}{2})$. Clearly the former game is a zero-sum game, where the latter is an identical interest game. Since the initial game was arbitrary, it follows that any game can be decomposed to a zero-sum game and an identical interest game. The direct sum property follows, since  for two-player zero-sum and identical interest  games, with  utility functions  $(u,-u)$ and $(v,v)$ respectively,  if 
$(u+v, u-v)=(0,0)$, then $u=v=0$. 
\end{proof}

Note that Theorem~\ref{theo:zeroSumTheoNew} suggests that two-player
normalized harmonic games, where players have equal number of
strategies are zero-sum.\footnote{In addition, if the definition of
zero-sum is generalized to include multiplayer games where payoffs of
all players add up to zero, then it can be seen that normalized
harmonic games where players have equal number of strategies are still zero-sum games.}
Also, it immediately follows by checking the definitions that  identical interest games are potential games.
This intuitively suggests that  the  zero-sum and identical interest game decomposition closely relates to  our decomposition.
 In the following theorem, we establish this relation by characterizing the dimensions of   the intersections of the subspaces $\cal Z$ and $\cal I$, with the sets of potential and harmonic games. We provide a proof in the Appendix.
\begin{theorem} \label{theo:ZSPH_dim}
Consider two-player  games, in which each player has $h$ strategies. The dimensions of intersections of the subspaces  of zero-sum and identical interest games ($\cal Z$ and  $\cal I$) with the subspaces of potential  and harmonic games (${\cal P}\oplus{\cal N}$ and ${\cal H}\oplus{\cal N}$) are  as in the following table.
\begin{table}[h]
\centering
\begin{tabular}{|c|c|c|c|}
\hline
	     	  & ${\cal Z}$ & $\cal I$ & ${\cal Z} \oplus {\cal I}$ \\ \hline
${\cal P}\oplus{\cal N}$ & $2h-1$		& $h^2$ & $h^2+2h-1$ \\ \hline
${\cal H}\oplus{\cal N}$ & $h^2-2h+2$ &1 & $h^2+1$	\\ \hline
${\cal P}\oplus{\cal H} \oplus{\cal N}$ & $h^2$ & $h^2$ & $2h^2$\\ 
\hline
\end{tabular} 
\label{tab:dimensions}
\caption{Dimensions of subspaces of games and their intersections}
\end{table}
\end{theorem}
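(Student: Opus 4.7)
The plan is to derive each entry of the table by leveraging three structural observations and then reducing the computation to elementary intersections. The observations are: (i) ${\cal Z} \oplus {\cal I}$ is the whole game space (so any subspace intersected with it is absorbed); (ii) every identical-interest game $(A,A)$ is a potential game with potential $\phi=A$, so ${\cal I} \subseteq {\cal P} \oplus {\cal N}$; and (iii) by Theorem~\ref{theo:zeroSumTheoNew} specialized to $h_1=h_2=h$, every normalized harmonic game satisfies $u^1+u^2=0$, i.e.\ ${\cal H} \subseteq {\cal Z}$. Combined with the component dimensions from Proposition~\ref{theo:dimension} ($\dim {\cal P} = h^2-1$, $\dim {\cal H} = (h-1)^2$, $\dim {\cal N} = 2h$) and the direct-sum consequences $({\cal P}\oplus{\cal N}) \cap ({\cal H}\oplus{\cal N}) = {\cal N}$ and ${\cal H} \cap {\cal N} = \{0\}$, these observations will pin down all nine entries of the table.

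The bottom row and the last column are immediate, since both ${\cal P} \oplus {\cal H} \oplus {\cal N}$ and ${\cal Z} \oplus {\cal I}$ equal the full game space. Observation (ii) gives $\dim({\cal I} \cap ({\cal P} \oplus {\cal N})) = \dim {\cal I} = h^2$, and combining (ii) with $({\cal P}\oplus{\cal N}) \cap ({\cal H}\oplus{\cal N}) = {\cal N}$ reduces ${\cal I} \cap ({\cal H} \oplus {\cal N})$ to ${\cal I} \cap {\cal N}$; an identical-interest nonstrategic game $(A,A)$ lies in $\ker D_1 \cap \ker D_2$, so by Lemma~\ref{theo:projDiff2} $A$ must be independent of both arguments, forcing $A = c\,\mathbf{1}\mathbf{1}^T$ and dimension $1$. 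For $\dim({\cal Z} \cap ({\cal P} \oplus {\cal N}))$ I will characterize zero-sum potential games directly: existence of a potential $\phi$ for $(A,-A)$ imposes $\phi(i,j)-\phi(i',j) = A_{ij}-A_{i'j}$ and $\phi(i,j)-\phi(i,j') = -A_{ij}+A_{ij'}$, and eliminating $\phi$ forces $A_{ij} = a_i + b_j$; this is a $2h$-parameter family modulo the one-dimensional redundancy $(a,b) \mapsto (a+c, b-c)$, giving dimension $2h-1$.

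The main obstacle is the entry $\dim({\cal Z} \cap ({\cal H} \oplus {\cal N}))$, which is genuinely larger than $\dim({\cal Z} \cap {\cal H}) + \dim({\cal Z} \cap {\cal N})$ would suggest (intersections do not distribute over sums in general). Observation (iii) is exactly what makes the computation tractable: writing $u = u_H + u_N$ with $u_H \in {\cal H}$ and $u_N \in {\cal N}$, the zero-sum condition $u^1 + u^2 = 0$ splits as $(u_H^1+u_H^2) + (u_N^1+u_N^2) = 0$, and the first bracket vanishes automatically by (iii), leaving only $u_N \in {\cal N} \cap {\cal Z}$. Since ${\cal H} \cap {\cal N} = \{0\}$, this yields the direct-sum identity ${\cal Z} \cap ({\cal H} \oplus {\cal N}) = {\cal H} \oplus ({\cal N} \cap {\cal Z})$. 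A short check (nonstrategic games have $A = \mathbf{1} f_1^T$ and $B = f_2 \mathbf{1}^T$, and $A+B=0$ forces $f_1 = c\,\mathbf{1}$, $f_2 = -c\,\mathbf{1}$) shows $\dim({\cal N} \cap {\cal Z}) = 1$, so the target dimension is $(h-1)^2 + 1 = h^2 - 2h + 2$, completing the table.
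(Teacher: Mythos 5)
Your proposal is correct and follows essentially the same route as the paper: both reduce the table to the inclusions $\mathcal{I}\subseteq\mathcal{P}\oplus\mathcal{N}$ and (via Theorem~\ref{theo:zeroSumTheoNew}) $\mathcal{H}\subseteq\mathcal{Z}$, compute $\mathcal{N}\cap\mathcal{I}$ and $\mathcal{N}\cap\mathcal{Z}$ explicitly from the form $(\mathbf{1}a^T, b\mathbf{1}^T)$ of nonstrategic games, and characterize zero-sum potential games by the separable form $A_{ij}=a_i+b_j$ (your elimination of $\phi$ is just a rephrasing of the paper's strategic-equivalence argument). The only cosmetic difference is that you state the identity $\mathcal{Z}\cap(\mathcal{H}\oplus\mathcal{N})=\mathcal{H}\oplus(\mathcal{N}\cap\mathcal{Z})$ explicitly, which the paper argues implicitly.
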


The above theorem suggests that the dimensions  of harmonic games and zero-sum games  (and similarly identical interest games and potential games) are close to the dimension of their intersections. Thus, zero-sum games are in general closely related to harmonic games, and identical interest games are related to potential games. On the other hand, it is possible to find instances of zero-sum games that are potential games, and  not harmonic games (see Table \ref{tab:zSExample}).

\begin{table}[ht] 
\centering
\subfloat[Payoffs ]{
\begin{tabular}{ | c | c | c | }
\hline
 & $a$ & $b$  \\ \hline
$x$ &  0, 0 & 1,-1  \\ \hline
$y$ & -1, 1 & 0, 0   \\ \hline
\end{tabular}
}
\qquad
\subfloat[Potential function ]{
\begin{tabular}{ | c | c | c | }
\hline
 & $a$ & $b$  \\ \hline
$x$ &  ~~2~~ & ~~1~~  \\ \hline
$y$ & ~~1~~ & ~~0~~   \\ \hline
\end{tabular}
}
\caption{A zero-sum potential game}
\label{tab:zSExample}
\end{table}

In general, the identical interest component is a potential game, and
it can be used to approximate a given game with a potential game.
However, as illustrated in Table~\ref{tab:zSExample2}, this approximation need not yield the closest potential game to a given game. In this example, despite the fact that the original game is a potential game, the zero-sum and identical interest game  decomposition may lead to a potential game
which is much farther than the closest potential game

\begin{table}[ht] 
\centering
\subfloat[Payoffs in  $\cal G$ ]{
\begin{tabular}{ | c | c | c | }
\hline
 & $a$ & $b$  \\ \hline
$x$ &  1, 1 & 1,-1  \\ \hline
$y$ &  -1, 1 & -1, -1   \\ \hline
\end{tabular}
}
\qquad
\subfloat[Potential function of $\cal G$ ]{
\begin{tabular}{ | c | c | c | }
\hline
 & $a$ & $b$  \\ \hline
$x$ &  ~~4~~ & ~~2~~  \\ \hline
$y$ & ~~2~~ & ~~0~~   \\ \hline
\end{tabular}
}
\qquad
\subfloat[Payoffs in  ${\cal G}_Z$ ]{
\begin{tabular}{ | c | c | c | }
\hline
 & $a$ & $b$  \\ \hline
$x$ &  0, 0 & 1,-1  \\ \hline
$y$ &  -1,  1 & 0, 0   \\ \hline
\end{tabular}
}
\qquad
\subfloat[Payoffs in ${\cal G}_I$ ]{
\begin{tabular}{ | c | c | c | }
\hline
 & $a$ & $b$  \\ \hline
$x$ &  1, 1 & 0,0  \\ \hline
$y$ &  0,  0 & -1, -1   \\ \hline
\end{tabular}
}
\caption{A potential game $\cal G$ and its zero-sum (${\cal G}_{Z}$) and identical interest components (${\cal G}_I$).}
\label{tab:zSExample2}
\end{table}
We believe that the decomposition presented in
Section~\ref{se:canonicalRep} is more natural than the zero-sum
identical interest game decomposition, as it clearly separates the
strategic (${\cal P} \oplus {\cal H}$) and nonstrategic (${\cal N}$)
components of games and further identifies components, such as
potential and harmonic components, with distinct strategic properties.
In addition, it is invariant under trivial manipulations that do not
change the strategic interactions, i.e., changes in the nonstrategic
component.

\section{Projections onto Potential and Harmonic Games} 
\label{se:projection}

In this section, we discuss projections of games onto the subspaces of
potential and harmonic games.
In Section~\ref{se:orthDecomp}, we defined the subspaces $\mathcal{P},
\mathcal{H}, \mathcal{N}$ of potential, harmonic, and nonstrategic
components, respectively. We also proved that they provide a direct
sum decomposition of the space of all games. In this section, we show
that under an appropriately defined inner product in
$\mathcal{G}_{\mathcal{M},E}$, the harmonic, potential and
nonstrategic subspaces become \emph{orthogonal}. We use our
decomposition result together with this inner product to obtain
projections of games to these subspaces, i.e., for an arbitrary game,
we present closed-form expressions for the ``closest'' potential and
harmonic games with respect to this inner product.

Let ${\cal G}, \hat{\cal G}  $ be two 
games in ${\cal G}_{{\cal M},E}$. We define the inner product on ${\cal G}_{{\cal M},E}$ as
 \begin{equation} \label{eq:innerProdGames}
    \langle {\cal G}, \hat{\cal G} \rangle_{{\cal M},E}  \triangleq \sum_{m\in{\cal M}} h_m \langle u^m , \hat{u}^m \rangle,
 \end{equation}
where the inner product in the right hand side is the inner product of
$C_0$ as defined in \eqref{eq:innerProdonC1}, i.e., it is the inner
product of the space of functions defined on $E$. Note that it can be
easily checked that \eqref{eq:innerProdGames} is an inner product, by
observing that it is a weighted version of the standard inner product
in $C_0^M$. The given inner product also induces a norm which will
help us quantify the distance between games. We define the norm on
${\cal G}_{{\cal M},E}$ as follows:
\begin{equation} \label{eq:normGames}
 || {\cal G} ||_{{\cal M},E}^2  = \langle {\cal G}, {\cal G} \rangle_{{\cal M},E}.
 \end{equation}
Note that this norm also corresponds to a weighted $l_2$ norm defined on the space  $C_0^M$.

 Next we prove that the potential, harmonic and nonstrategic subspaces are orthogonal under this inner product.
 \begin{theorem} \label{theo:orthOfDecomposition}
Under the inner product introduced in \eqref{eq:innerProdGames}, we
    have ${\cal P} \perp {\cal H} \perp {\cal N}$, i.e., the
    potential, harmonic and nonstrategic subspaces are orthogonal.
 \end{theorem}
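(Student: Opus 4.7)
The plan is to establish the three pairwise orthogonality relations separately, treating the two statements involving $\mathcal{N}$ together and reserving the main work for $\mathcal{P} \perp \mathcal{H}$.

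First I would dispose of $\mathcal{P} \perp \mathcal{N}$ and $\mathcal{H} \perp \mathcal{N}$ by componentwise reasoning. For any $u \in \mathcal{P} \cup \mathcal{H}$ the condition $\Pi u = u$ is equivalent, by definition of $\Pi = \mathrm{diag}(\Pi_1,\ldots,\Pi_M)$, to $u^m \in (\ker D_m)^\perp$ for every $m$, since each $\Pi_m$ is the orthogonal projection onto $(\ker D_m)^\perp$. On the other hand, any $v \in \mathcal{N} = \ker D$ satisfies $v^m \in \ker D_m$ for every $m$, because orthogonality of the image spaces of the $D_m$'s implies $\ker D = \prod_m \ker D_m$ (this is the observation used just before \eqref{eq:subspaceDef2}). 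Thus $\langle u^m, v^m \rangle = 0$ for each $m$, and $\langle u, v\rangle_{\mathcal{M},E} = \sum_m h_m \langle u^m, v^m \rangle = 0$ regardless of the weights.

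The core of the argument is $\mathcal{P} \perp \mathcal{H}$, and here the specific weighting by $h_m$ in \eqref{eq:innerProdGames} plays the essential role. Take $u \in \mathcal{P}$ and $v \in \mathcal{H}$. Both satisfy $\Pi u = u$ and $\Pi v = v$, so in particular $v^m = \Pi_m v^m$. Using Lemma~\ref{lem:identity}(i), namely $\Pi_m = D_m^\dagger D_m = \tfrac{1}{h_m} D_m^* D_m$, I can rewrite each summand in the inner product as
\[
h_m \langle u^m, v^m \rangle \;=\; h_m \bigl\langle u^m, \tfrac{1}{h_m} D_m^* D_m v^m \bigr\rangle \;=\; \langle D_m u^m, D_m v^m \rangle,
\]
so the weights $h_m$ cancel exactly. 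Summing over $m$ and invoking orthogonality of the image spaces of $\{D_m\}_{m\in\mathcal{M}}$ (which yields $\langle D_k u^k, D_m v^m\rangle = 0$ for $k\neq m$) gives
\[
\langle u, v \rangle_{\mathcal{M},E} \;=\; \sum_{m} \langle D_m u^m, D_m v^m \rangle \;=\; \Bigl\langle \sum_{k} D_k u^k,\; \sum_{m} D_m v^m \Bigr\rangle \;=\; \langle D u, D v\rangle.
\]
By definition of $\mathcal{P}$ we have $Du \in \im \delta_0$, and by definition of $\mathcal{H}$ we have $Dv \in \ker \delta_0^* = (\im \delta_0)^\perp$, so $\langle Du, Dv\rangle = 0$ and the result follows.

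The main obstacle, and really the only nontrivial point, is realizing that the weights $h_m$ in \eqref{eq:innerProdGames} are precisely what is needed to pull the Helmholtz orthogonality on $C_1$ back to $C_0^M$ through $D$. Without them, the factor $h_m$ appearing in $\Delta_{0,m} = h_m \Pi_m$ would not cancel, and $\mathcal{P}$ and $\mathcal{H}$ would not be orthogonal under the standard unweighted inner product on $C_0^M$. Everything else is bookkeeping using the identities already collected in Lemma~\ref{lem:identity} and the orthogonality of $\{\im D_m\}$.
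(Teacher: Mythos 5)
Your proposal is correct and follows essentially the same route as the paper's proof: the two orthogonality claims involving $\mathcal{N}$ are handled identically (componentwise, via $\Pi_m$ being the projection onto $(\ker D_m)^\perp$), and for $\mathcal{P} \perp \mathcal{H}$ both arguments rest on exactly the cancellation of the weight $h_m$ against the $1/h_m$ in $\Pi_m = \frac{1}{h_m} D_m^* D_m$, adjointness, and the orthogonality of the image spaces of the $D_m$. The only difference is cosmetic and lies in the last step: you reduce to $\langle Du, Dv \rangle_1 = 0$ via the identity $\langle u, v\rangle_{\mathcal{M},E} = \langle Du, Dv\rangle_1$ for normalized games and the Helmholtz fact $\im(\delta_0) \perp \ker(\delta_0^*)$, whereas the paper substitutes $D_m u_P^m = D_m\phi$ and concludes from $\delta_0^* \sum_m D_m u_H^m = 0$ --- the same fact applied one adjoint apart, though your formulation does make slightly more explicit why the weights $h_m$ are precisely the ones that pull the $C_1$ orthogonality back through $D$.
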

 \begin{proof}
    Let $\{u^m_P\}_{m\in{\cal M}}={\cal G}_P\in{\cal P}$, $\{{u}_H^m\}_{m\in{\cal M}}={\cal G}_H \in {\cal H}$ and  $\{{u}_N^m\}_{m\in{\cal M}}={\cal G}_N \in {\cal N}$ be arbitrary games in $\cal P$, $\cal H$ and $\cal N$ respectively. In order to prove the claim we will first prove ${\cal G}_N \perp {\cal G}_H$  and ${\cal G}_N \perp  {\cal G}_P$. Secondly we prove  ${\cal G}_P \perp {\cal G}_H$. Since the games are arbitrary the first part will imply that ${\cal N} \perp {\cal H}$ and ${\cal N} \perp {\cal P}$ and the second part will imply that ${\cal P} \perp {\cal H}$ proving the claim.

    Note that by definition $u^m_N \in \ker (D_m)$ for all $m\in{\cal M}$ and $u^m_P, u^m_N $ are in the orthogonal complement of $\ker (D_m)$ since $\Pi_m u^m_P=u^m_P$, $\Pi_m u^m_H=u^m_H$ and $\Pi_m$ is the projection operator to the orthogonal complement of  $\ker (D_m)$. This implies that $\langle u^m_P, u^m_N\rangle = \langle u^m_H, u^m_N\rangle= 0 $ for all $m\in{\cal M}$ and hence using the inner product introduced in \eqref{eq:innerProdGames} it follows that ${\cal G}_N \perp {\cal G}_H$  and ${\cal G}_N \perp  {\cal G}_P$.

    Next observe that for all $m\in{\cal M}$,
    \begin{equation*}
    \begin{aligned}
        \langle u^m_P, u^m_H\rangle &= \langle D_m^\dagger D_m u^m_P, u^m_H\rangle 
        &= \frac{1}{h_m} \langle D_m^* D_m \phi, u^m_H\rangle 
        &= \frac{1}{h_m} \langle \phi,  D_m^* D_m  u^m_H\rangle,
    \end{aligned}
    \end{equation*}
    where the first equality follows from $\Pi_m u^m_P= u^m_P$, and the second equality follows from Lemma \ref{theo:projDiff} and the fact that $D_m u^m_P = D_m \phi$. The third equality uses the properties of the   operators $D_m$ and $D_m^*$.
    Therefore,
    \begin{equation*}
    \begin{aligned}
        \langle {\cal G}_P ,{\cal G}_H \rangle_{{\cal M},E} &=  \sum_{m\in{\cal M}} \langle \phi,  D_m^* D_m  u^m_H\rangle 
        &= \langle \phi, \sum_{m\in{\cal M}} D_m^* D_m  u^m_H\rangle 
        &= \langle \phi, \delta_0^* \sum_{m\in{\cal M}}  D_m  u^m_H\rangle =0.
    \end{aligned}
    \end{equation*}
    Since $\delta_0^*\sum_{m\in{\cal M}}  D_m  u^m_H=0$ by the definition of $\cal H$.
    Here the last equality follows using $\delta_0^* = \sum_m D_m ^*$ and orthogonality of the image spaces of $D_m$ for $m\in {\cal M}$. Therefore,  ${\cal G}_H \perp  {\cal G}_P$ as claimed and the result follows.
\end{proof}

The next theorem provides closed form expressions for the closest potential and harmonic games with respect to the norm in \eqref{eq:normGames}.
\begin{theorem} \label{theo:closestGames}
Let ${\cal G} \in {\cal G}_{{\cal M},E}$ be a game with utilities
$\{u^m\}_{m\in{\cal M}}$, and let $\phi= \delta_0^\dag Du$. With respect to the norm in
\eqref{eq:normGames},
    \begin{enumerate}
        \item The closest potential game to ${\cal G}$ has utilities $\Pi_m \phi+ (I-\Pi_m){u}^m$ for all $m\in{\cal M},$
        \item The closest harmonic game to ${\cal G}$ has utilities $ u^m - \Pi_m \phi $ for all $m\in{\cal M}.$
    \end{enumerate}
\end{theorem}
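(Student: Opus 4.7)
The plan is to combine the canonical direct sum decomposition from Theorem~\ref{theo:decompTheoSpace} with the orthogonality of the three subspaces $\mathcal{P}, \mathcal{H}, \mathcal{N}$ established in Theorem~\ref{theo:orthOfDecomposition}. Since the set of potential games is $\mathcal{P}\oplus\mathcal{N}$ (Theorem~\ref{theo:setOfPot}) and the set of harmonic games is $\mathcal{H}\oplus\mathcal{N}$, and since these are closed subspaces of the finite-dimensional inner product space $({\cal G}_{{\cal M},E}, \langle\cdot,\cdot\rangle_{{\cal M},E})$, the closest potential game and closest harmonic game are just the orthogonal projections of $\cal G$ onto $\mathcal{P}\oplus\mathcal{N}$ and $\mathcal{H}\oplus\mathcal{N}$, respectively. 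The task then reduces to writing these projections down explicitly and matching them with the formulas in the statement.

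First I would invoke Theorem~\ref{theo:decompTheoSpace} to write the unique decomposition $u = u_P + u_H + u_N$, with $u_P \in \mathcal{P}$, $u_H \in \mathcal{H}$, $u_N \in \mathcal{N}$, together with the explicit component-wise expressions
\[
u_P^m = \Pi_m \phi, \qquad u_H^m = \Pi_m u^m - \Pi_m \phi, \qquad u_N^m = (I - \Pi_m) u^m,
\]
where $\phi = \delta_0^\dagger D u$. Next, by Theorem~\ref{theo:orthOfDecomposition}, these three components are pairwise orthogonal under $\langle\cdot,\cdot\rangle_{{\cal M},E}$. A standard Hilbert space argument then gives: for any $\hat{\cal G} \in \mathcal{P}\oplus\mathcal{N}$, the difference $u - (u_P+u_N) = u_H$ lies in $\mathcal{H}$, which is orthogonal to $\mathcal{P}\oplus\mathcal{N}$; hence by the Pythagorean theorem
\[
\|u - \hat u\|_{{\cal M},E}^2 = \|u_H\|_{{\cal M},E}^2 + \|(u_P+u_N) - \hat u\|_{{\cal M},E}^2 \geq \|u_H\|_{{\cal M},E}^2,
\]
with equality iff $\hat u = u_P + u_N$. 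Thus $u_P + u_N$ is the unique closest potential game. The identical argument with the roles of $\mathcal{P}$ and $\mathcal{H}$ swapped shows that $u_H + u_N$ is the unique closest harmonic game.

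The final step is just arithmetic: componentwise,
\[
(u_P + u_N)^m = \Pi_m \phi + (I - \Pi_m) u^m,
\]
which matches item~(1), and
\[
(u_H + u_N)^m = \Pi_m u^m - \Pi_m \phi + (I - \Pi_m) u^m = u^m - \Pi_m \phi,
\]
which matches item~(2).

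There is no real obstacle here beyond citing the right pieces: the heavy lifting was done in Theorems~\ref{theo:decompTheoSpace} and~\ref{theo:orthOfDecomposition}. The only point worth a sentence of care is that we are projecting onto \emph{sums} of subspaces ($\mathcal{P}\oplus\mathcal{N}$ and $\mathcal{H}\oplus\mathcal{N}$) rather than onto individual summands; the pairwise orthogonality from Theorem~\ref{theo:orthOfDecomposition} ensures that the projection onto such a sum is simply the sum of the projections onto the individual summands, which is exactly what allows us to read the answer off the canonical decomposition.
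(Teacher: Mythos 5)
Your proposal is correct and follows essentially the same route as the paper: both arguments note that $u - (u_P + u_N) = u_H$ is orthogonal to $\mathcal{P}\oplus\mathcal{N}$ (and symmetrically for the harmonic case) by Theorem~\ref{theo:orthOfDecomposition}, and then read off the closed-form utilities from Theorem~\ref{theo:decompTheoSpace}. The only difference is that you spell out the Pythagorean inequality and the uniqueness of the minimizer, which the paper leaves implicit.
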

\begin{proof}
    By Theorem \ref{theo:orthOfDecomposition}, the harmonic component of $\cal G$ is orthogonal to the space of potential games ${\cal P}\oplus {\cal N}$. Thus, the closest potential game to $\cal G$ has utilities $u^m-u^m_H$, where $\{u^m_H\}_{m\in{\cal M}}$ is the harmonic component of $\cal G$. Similarly, the potential component of $\cal G$ is orthogonal to the space of harmonic games ${\cal H}\oplus {\cal N}$ and thus the closest harmonic game to $\cal G$ has utilities $u^m-u^m_P$, where $\{u^m_P\}_{m\in{\cal M}}$ is the potential component of $\cal G$. Using the closed form expressions for $u^m_P$ and $u^m_H$ from Theorem \ref{theo:decompTheoSpace}, the claim follows.
\end{proof}
Note that the utilities in the closest potential game consist of two parts: the term $\Pi_m \phi$ expresses the preferences  that are captured by the potential function $\phi$, and $(I-\Pi_m) u^m$ corresponds to the nonstrategic component of the original game.  
Similarly, the closest harmonic game differs from the original game by its  potential component, and hence has the same nonstrategic and harmonic components with the original game.
This implies that the projection decomposes the
flows generated by a game to its consistent and inconsistent
components and is closely related to the decomposition of flows to the
orthogonal subspaces of the space of flows provided in the Helmholtz
decomposition.

Analyzing the projection of a game to the space of potential games may
provide useful insights for the original game; see
Section~\ref{se:conclusions} for a description of ongoing and future
work on this direction. We conclude this section by relating the
approximate equilibria of a game to the equilibria of the closest
potential game.

\begin{theorem} \label{cor:epsEq}
Let $\cal G$ be a game, and $\hat{\cal G}$ be its closest potential
game. Assume that $h_m$ denotes the number of
strategies of player $m$, and define   $\alpha\triangleq ||{\cal G}-
\hat{\cal G}||_{{\cal M},E}$.    Then, every $\epsilon_1$-equilibrium of
$\hat{\cal G}$ is an $\epsilon$-equilibrium of $\cal G$ for some
$\epsilon \leq \max_m \frac{2\alpha}{\sqrt{h_m}} +\epsilon_1$ (and viceversa).
\end{theorem}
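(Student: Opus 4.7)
The plan is to deduce the result directly from Lemma~\ref{lemma:epsEqPre} after controlling the uniform (pointwise) utility differences between $\mathcal{G}$ and $\hat{\mathcal{G}}$ in terms of the weighted norm $\alpha$. The key observation is that, by Theorem~\ref{theo:closestGames}, the utility of player~$m$ in $\hat{\mathcal{G}}$ is $\hat u^m = \Pi_m \phi + (I-\Pi_m)u^m$, so that
\begin{equation*}
u^m - \hat u^m = \Pi_m u^m - \Pi_m \phi = u_H^m,
\end{equation*}
i.e., the pointwise gap between the two games is exactly the harmonic component of $\mathcal{G}$. Thus the whole problem reduces to bounding $\|u_H^m\|_\infty$ in terms of $\alpha$.

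Next I would unpack the definition of the weighted inner product in~\eqref{eq:innerProdGames} applied to the harmonic component:
\begin{equation*}
\alpha^2 \;=\; \|\mathcal{G}-\hat{\mathcal{G}}\|_{\mathcal{M},E}^2 \;=\; \sum_{m\in \mathcal{M}} h_m \, \langle u_H^m, u_H^m \rangle \;=\; \sum_{m\in \mathcal{M}} h_m \sum_{\mathbf{p}\in E} \bigl(u_H^m(\mathbf{p})\bigr)^2.
\end{equation*}
Since every term in this double sum is non-negative, for every fixed $m$ and $\mathbf{p}$ we have $h_m (u_H^m(\mathbf{p}))^2 \le \alpha^2$, which yields the per-player pointwise estimate $|u^m(\mathbf{p})-\hat u^m(\mathbf{p})| = |u_H^m(\mathbf{p})| \le \alpha/\sqrt{h_m}$. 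Taking the max over $m$ gives a uniform bound $\epsilon_0 \triangleq \max_m \alpha/\sqrt{h_m}$ valid for all players and all strategy profiles.

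Finally, applying Lemma~\ref{lemma:epsEqPre} with this $\epsilon_0$ shows that every $\epsilon_1$-equilibrium of $\hat{\mathcal{G}}$ is an $\epsilon$-equilibrium of $\mathcal{G}$ with $\epsilon \le 2\epsilon_0 + \epsilon_1 = \max_m (2\alpha/\sqrt{h_m}) + \epsilon_1$, and symmetrically for the converse. There is no real obstacle beyond the algebraic identification $u^m-\hat u^m = u_H^m$; the only subtlety worth flagging is that the $1/\sqrt{h_m}$ factor in the bound comes precisely from the weight $h_m$ in the inner product~\eqref{eq:innerProdGames}, which is what allows the pointwise gap to be smaller than $\alpha$ itself rather than merely comparable to it.
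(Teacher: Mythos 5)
Your proof is correct and follows essentially the same route as the paper: bound the pointwise utility differences by $\alpha/\sqrt{h_m}$ using the weight $h_m$ in the inner product \eqref{eq:innerProdGames}, then invoke Lemma~\ref{lemma:epsEqPre} with $\epsilon_0 = \max_m \alpha/\sqrt{h_m}$. The explicit identification $u^m - \hat{u}^m = u_H^m$ via Theorem~\ref{theo:closestGames} is a nice touch but not needed for the argument, since the paper bounds $|u^k(\mathbf{p}) - \hat{u}^k(\mathbf{p})|$ directly from the definition of the norm of $\mathcal{G} - \hat{\mathcal{G}}$.
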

\begin{proof}
By the definition of the norm, it follows that
\[
|u^k({\bf p})-\hat{u}^k({\bf p})| \leq 
 \frac{1}{\sqrt{h_k}} ||{\cal G}- \hat{\cal G}||_{{\cal M},E}
\leq \max_m \frac{\alpha}{\sqrt{h_m}},
\]
for all $k\in{\cal M}$, ${\bf p} \in E$. Using Lemma
\ref{lemma:epsEqPre}, the result follows.
\end{proof}

This result implies that the study and characterization of the
structure of approximate equilibria in an arbitrary game can be
facilitated by making use of the connection between its
$\epsilon$-equilibrium set and the equilibria of its closest potential
game.

\section{Conclusions} 
\label{se:conclusions}

We have introduced a novel and natural direct sum decomposition of the
space of games into potential, harmonic and nonstrategic subspaces. We
studied the equilibrium properties of the subclasses of games induced
by this decomposition, and showed that the potential and harmonic
components of games have quite distinct and appealing equilibrium
properties. In particular, there is a sharp contrast between potential
games, that always have pure Nash equilibria, and harmonic games, that
generically never do. Moreover, we have shown that while the
nonstrategic component does not effect the equilibrium set of games,
it can drastically affect their efficiency properties. Using the
decomposition framework, we obtained closed-form expressions for the
projections of games to their corresponding components, enabling the
approximation of arbitrary games in terms of potential and harmonic
games.  This provides a systematic method for characterizing the set
of $\epsilon$-equilibria by relating it to the equilibria of the
closest potential game.

The framework provided in this paper opens up a number of interesting
research directions, several of which we are currently
investigating. Among them, we mention the following:

\paragraph{Decomposition and dynamics} 
One immediate and promising direction is to use the projection
techniques to analyze natural player dynamics through the convergence
properties of their potential component. It is well-known that in
potential games many natural dynamics, such as best-response and
fictitious play, converge to an equilibrium
\cite{young2004sla,marden2005jsf}. In our companion paper
\cite{Candogan2009}, we show that such dynamics converge to a
neighborhood of equilibria in near-potential games, where the size of
the neighborhood depends on the distance of the original game to its
closest potential game.

\paragraph{Dynamics in harmonic games} 
While the behavior of player dynamics in potential games is reasonably
well-understood, there seems to be a number of interesting research
questions regarding their harmonic counterpart. In
\cite{Candogan2009}, we made some partial progress in this direction,
by showing that in harmonic games, the uniformly mixed strategy
profile is the unique equilibrium of the continuous time
fictitious-play dynamics and this equilibrium point is locally
stable. Moreover, in two-player games where each player has equal
number of strategies, this equilibrium is globally stable. Global
stability of the equilibrium under more general settings and
convergence of different dynamics in harmonic games are open future
questions.

\paragraph{Game approximation}
The idea of analyzing an arbitrary game through a ``nearby'' game with
tractable equilibrium properties seems to be a useful approach. In
\cite{Candogan2009Pricing}, we have developed this methodology
(``near-potential'' games), and applied it in the context of pricing
in a networking application. 
We believe that these techniques can be
extended to other special classes of games.
An extension to generalizations of potential games (weighted and ordinal potential games) was considered in \cite{candogan2010_weighted}.
 Another interesting direction
is to study the proximity of an arbitrary game to supermodular games
\cite{topkis1998supermodularity} and analyze how properties of
supermodular games are inherited in ``near-supermodular'' games.

\paragraph{Alternative projections} 
In this work, the projections onto the spaces of potential and
harmonic games are obtained using a weighted $l_2$ norm. This norm
leads to closed-form expressions for the components, but some problems
may require or benefit from projections using different metrics. For
instance, finding the closest potential game, by perturbing each of
the pairwise comparisons in a minimal way requires a projection using
a suitably defined $\infty$-norm, and this projection leads to better
error bounds when analyzing approximate equilibria and
dynamics. Projections under different norms and their properties are
left for future research.

\paragraph{Additional restrictions} 
Another interesting extension is to study projections to subsets of
potential games with additional properties. For example, in the
current projection framework, if we require the potential function to
be concave (see \cite{Ui:2008p1753} for a definition of discrete
concavity), it may be possible to project a given game to a set of
potential games with a unique Nash equilibrium.

\vspace{1cm}
{\small
\noindent\textbf{Acknowledgements:} We thank Prof. Tamer Ba{\c{s}}ar, Prof. Sergiu Hart, Prof. Dov Monderer and Prof. Jeff
Shamma for their useful comments and suggestions.
}

\bibliographystyle{plain}
\bibliography{projectionBibtex}

\newpage

\appendix

\section{Additional Proofs} 
\label{se:proofOfLapProj}

In this section we provide proofs to some of the results from Sections \ref{se:canonicalRep} and \ref{se:decProp}.

\begin{proof}[Proof of Lemmas \ref{theo:projDiff} and \ref{theo:projDiff2}]
The proof relies on the fact that $D_m^* D_m=\Delta_{0,m}$ is a Laplacian operator defined on the graph of $m$-comparable strategy profiles. We show that
the kernels of $D_m$ and $\Delta_{0,m}$ coincide, and using the spectral properties of the Laplacian and projection matrices we obtain the desired result.

For a fixed $m$, it can be seen that strategy profile ${\bf p}=({\bf p}^m,{\bf p}^{-m})$ is comparable to strategy profiles $({\bf q}^m,{\bf p}^{-m})$ for all ${\bf q}^m \in E^m$, ${\bf q}^m\neq{\bf p}^m$ but to none of the strategy profiles $({\bf q}^m,{\bf q}^{-m})$ for ${\bf q}^{-m} \neq {\bf p}^{-m}$. This implies that the graph over which $\Delta_{0,m}$ is defined has $|E^{-m}|= \prod_{k\neq m} h_k$ components (each  ${\bf p}^{-m} \in E^{-m}$ creates a different component), each of which has $|E^m|=h_m$ elements. Note that all strategy profiles in a component are $m$-comparable, thus the underlying graph consists of $|E^{-m}|$ components, each of which is a complete graph with $|E^m|$ nodes.

The Laplacian of an unweighted complete graph with $n$ nodes has eigenvalues $0$ and ${n}$, where the multiplicity of nonzero eigenvalues is $n-1$ \cite{chung1997sgt}. Each component of  $\Delta_{0,m}$ leads to  eigenvalues $0$ and ${h_m}$ with multiplicities $1$ and $h_m-1$ respectively.  Therefore, $\Delta_{0,m}$ has eigenvalues $0$ and ${h_m}$ where the multiplicity of nonzero eigenvalues is $(h_m-1) \prod_{k\neq m} h_k= \prod_{k} h_k -\prod_{k\neq m} h_k$. This suggests that the dimension of the kernel of $\Delta_{0,m}$ is $\prod_{k\neq m} h_k$.

Observe that the kernel of $\Delta_{0,m}=D_m^*D_m$ contains the kernel of $D_m$. For every ${\bf q}^{-m}\in E^{-m}$ define $\nu_{{\bf q}^{-m}} \in C_0$ such that
\begin{equation}
\nu_{{\bf q}^{-m}}({\bf p})=\left\{
\begin{aligned}
& 1 \quad\quad \mbox{if ${\bf p}^{-m}={\bf q}^{-m}$} \\
& 0 \quad\quad \mbox{ otherwise}
\end{aligned}
\right.
\end{equation}
It is easy to see that  $\nu_{{\bf p}^{-m}} \perp \nu_{{\bf q}^{-m}}$  for ${\bf p}^{-m} \neq {\bf q}^{-m}$ and $D_m \nu_{{\bf p}^{-m}}=0$ for all ${{\bf p}^{-m}} \in E^{-m}$. Thus, for all ${\bf q}^{-m}$, $\nu_{{\bf q}^{-m}}$ belongs to the kernel of $D_m$ and by mutual orthogonality of these functions, the kernel of $D_m$ has dimension at least $|E^{-m}|=\prod_{k\neq m} h_k$. As the dimension of the kernel of $\Delta_{0,m}$ is $\prod_{k\neq m} h_k$ and it contains kernel of $D_m$, this implies that the kernels of $D_m$ and $\Delta_{0,m}$ coincide.

Thus $\Delta_{0,m}$ maps any $\nu\in C_0$  in the kernel of $D_m$ to zero and scales the $\nu$ in the orthogonal complement of the kernel by ${h_m}$. On the other hand $D_m^\dagger D_m$
is a projection operator and it has eigenvalue $0$ for all functions in the kernel of $D_m$ and $1$ for the functions in the orthogonal complement of kernel of $D_m$. This implies that
\begin{equation}
\Delta_{0,m}={h_m} D_m^\dagger D_m,
\end{equation}
and the kernels of $\Pi_m$, $D_m$ and $\Delta_{0,m}$ coincide
as the claim suggests.
\end{proof}

\begin{proof}[Proof of Lemma \ref{lem:SpectraOfDelta0}]
For a game, the  graph of comparable  strategy profiles is connected as can be seen from the definition of the comparable strategy profiles.
It is known that for a connected graph, the Laplacian operator  has multiplicity $1$ for eigenvalue $0$ \cite{chung1997sgt}.  By \eqref{eq:LapExplicit} it follows that the function $f\in C_0$ satisfying $f({\bf p})=1$ for all ${\bf p} \in E$, is an eigenfunction of $\Delta_0$ with eigenvalue $0$, implying the result.
\end{proof}

\begin{proof}[Proof of Lemma~\ref{lem:identity}]
For the proof of this lemma, we use the following property of the pseudoinverse
\begin{equation} \label{eq:pinvIdentity}
A^\dag = (A^* A)^\dag A^*,
\end{equation}
and the orthogonality properties of the $D_m$ operators: $D_m^* D_k=0$ and $D_m^\dagger D_k =0$ if $m\neq k$.

\paragraph{(i)} Using \eqref{eq:pinvIdentity}, with $A=D_m$ implies that $D_m^\dagger= (D_m^* D_m)^\dagger D_m^*$. Since for any linear operator $L$, $(L^\dagger)^*=(L^*)^\dagger$, it follows that $D_m^\dagger=(D_m (D_m^* D_m)^\dagger )^*=(D_m (\Delta_{0,m})^\dagger )^*$. Hence, using Lemma \ref{theo:projDiff} we obtain $D_m^\dagger=h_m (D_m (\Pi_m)^\dagger )^*$. Since $\Pi_m$ is a projection operator to the orthogonal complement of the kernel of $D_m$, we have $\Pi_m^\dagger=\Pi_m$,  and $D_m \Pi_m = D_m$. Hence, it follows that $D_m^\dagger=h_m (D_m \Pi_m )^*= h_m D_m ^*$ as claimed.

\paragraph{(ii)}
The identity in \eqref{eq:pinvIdentity}, implies that
\[
(\sum_i D_i)^\dagger= \left( (\sum_i D_i)^* (\sum_i D_i) \right)^\dagger (\sum_i D_i)^*.
\]
By the orthogonality of the image spaces of $D_i$, it follows that $(\sum_i D_i)^* (\sum_i D_i) = \sum_i D_i^* D_i$, and hence
\[
(\sum_i D_i)^\dagger= \left( \sum_i D_i^* D_i \right)^\dagger (\sum_i D_i)^*.
\]
Right-multiplying the above equation by $D_j$ and using the
orthogonality of the image spaces of $D_i$s it follows that
\[
(\sum_i D_i)^\dagger  D_j = \left( \sum_i D_i^* D_i \right)^\dagger (\sum_i D_i)^* D_j= 
\left( \sum_i D_i^* D_i \right)^\dagger D_j^* D_j.
\]
\paragraph{(iii)} 
From the definition of pseudoinverse, it is sufficient to show the
following $4$ properties to prove the claim: a) $(D D^\dagger )^*=D
D^\dagger$, b) $(D^\dagger D )^*= D^\dagger D$, c) $D D^\dagger D= D$,
d) $ D^\dagger D D^\dagger= D^\dagger$.

Using the identity $D_m^\dagger D_k=0$ for $k\neq m$, it follows that
$D D^\dagger=\sum_m D_m D_m^\dagger$, and $ D^\dagger D= diag\left(
D_1^\dagger D_1, \dots D_M^\dagger D_M \right)$.  The pseudoinverse of
$D_m$ satisfies the properties $D_m^\dagger D_m=(D_m^\dagger D_m)^*$
and $D_m D_m^\dagger=(D_m D_m^\dagger)^*$, and the requirements a) and
b) follow immediately using these properties.  The identity
$D_m^\dagger D_k=0$ also implies that $D D^\dagger D =[
 D_1 D_1^\dagger D_1, \dots,   D_M D_M^\dagger D_M] $,
and $D^\dagger D D^\dagger =[D_1^\dagger D_1 D_1^\dagger ; \dots ;
D_M^\dagger D_M D_M^\dagger] $. Since the pseudoinverse of $D_m$ also
satisfies $D_m^\dagger D_m D_m^\dagger=D_m^\dagger$, and $D_m
D_m^\dagger D_m=D_m$, the requirements c) and d) are satisfied and the
claim follows.

\paragraph{(iv)}  Since $\Pi= diag\left( \Pi_1,  \dots   \Pi_M   \right)$, and $\Pi_m = D_m^\dagger D_m$, it follows that
$ D^\dagger D=  diag\left( D_1^\dagger D_1,  \dots    D_M^\dagger D_M   \right)= diag\left( \Pi_1,  \dots   \Pi_M   \right)=\Pi$.
\paragraph{(v)} 
Using the identities $D_m^\dagger D_k=0$ for $k\neq m$,  $\delta_0=\sum_m D_m$, it follows that
\begin{equation*}
D D^\dagger \delta_0 =D D^\dagger \sum_{m \in \mathcal{M}}   D_m= \sum_{m \in \mathcal{M}}  D_m D_m^\dagger D_m= \sum_{m \in \mathcal{M}}  D_m =\delta_0.
\end{equation*}

\end{proof}

\begin{proof}[Proof of Lemma \ref{lem:harmGameFlux}]
Let $X=Du$ denote the pairwise comparison function of the harmonic game. By definition,   $(\delta_0^* X)({\bf p})=0$  for all ${\bf p}\in E$.
Thus, for all ${\bf r}^m\in E^m$, it follows that
\begin{equation} \label{eq:fluxLastStep}
\begin{aligned}
0=\sum_{{\bf p}^{-m}\in E^{-m}}(\delta_0^* X)({\bf r}^m,{\bf p}^{-m}) 
&=  \sum_{{\bf p}\in S}(\delta_0^* X)({\bf p}) 
\end{aligned}
\end{equation}
where  $S=\{({\bf r}^{m},{\bf p}^{-m})| {\bf p}^{-m}    \in E^{-m} \}$.
To complete the proof we require the following identity related to the pairwise comparison functions.
\begin{lemma} \label{lem:flux}
For all $\hat{X}\in C_1$ and set of strategy profiles $\hat{S}\subset E$, $\sum_{{\bf p}\in \hat{S}} (\delta_0^* \hat{X})({\bf p})=-\sum_{{\bf p}\in \hat{S}}\sum_{{\bf q}\in \hat{S}^c} \hat{X}({\bf p},{\bf q})$.
\end{lemma}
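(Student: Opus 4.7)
The plan is to prove Lemma \ref{lem:flux} by a direct computation, exploiting the antisymmetry of $\hat{X} \in C_1$ together with the symmetry of the adjacency indicator $W$. This is essentially a discrete divergence/Stokes-type identity: the net flow out of a set of nodes equals the sum of the flows crossing its boundary.

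First I would expand the left-hand side using the explicit formula for the adjoint operator given in \eqref{eq:delta0Explicit}, namely
\begin{equation*}
(\delta_0^* \hat{X})({\bf p}) = -\sum_{{\bf q} \in E} W({\bf p},{\bf q})\, \hat{X}({\bf p},{\bf q}),
\end{equation*}
so that
\begin{equation*}
\sum_{{\bf p} \in \hat{S}} (\delta_0^* \hat{X})({\bf p}) = -\sum_{{\bf p} \in \hat{S}} \sum_{{\bf q} \in E} W({\bf p},{\bf q})\, \hat{X}({\bf p},{\bf q}).
\end{equation*}

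Next, I would split the inner sum over $E$ into a sum over $\hat{S}$ and a sum over $\hat{S}^c$, obtaining two contributions. For the first contribution, over pairs $({\bf p},{\bf q}) \in \hat{S} \times \hat{S}$, I would observe that $W$ is symmetric while $\hat{X}$ is antisymmetric, so the pair $({\bf p},{\bf q})$ and $({\bf q},{\bf p})$ (both lying in $\hat{S}\times \hat{S}$) contribute $W({\bf p},{\bf q}) \hat{X}({\bf p},{\bf q}) + W({\bf q},{\bf p}) \hat{X}({\bf q},{\bf p}) = 0$. Therefore this entire double sum vanishes. For the second contribution, over pairs in $\hat{S} \times \hat{S}^c$, I would note that $W({\bf p},{\bf q})\, \hat{X}({\bf p},{\bf q}) = \hat{X}({\bf p},{\bf q})$ since $\hat{X}({\bf p},{\bf q}) = 0$ whenever $({\bf p},{\bf q}) \notin A$ (so the $W$ factor is redundant once one restricts to terms that can actually contribute). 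Combining these two observations yields exactly $-\sum_{{\bf p}\in \hat{S}}\sum_{{\bf q}\in \hat{S}^c} \hat{X}({\bf p},{\bf q})$, as claimed.

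There is essentially no obstacle here; the only point that needs a little care is the cancellation step, which should be argued by reindexing the double sum over $\hat{S}\times\hat{S}$ and grouping the two orderings of each unordered pair (and treating the diagonal ${\bf p}={\bf q}$ separately, where $\hat{X}({\bf p},{\bf p}) = 0$ by antisymmetry). Everything else is just unwinding definitions.
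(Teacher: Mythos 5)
Your proposal is correct and follows essentially the same route as the paper's proof: expand $\delta_0^*\hat{X}$ via \eqref{eq:delta0Explicit}, split the inner sum over $E$ into $\hat{S}$ and $\hat{S}^c$, and kill the $\hat{S}\times\hat{S}$ block by the antisymmetry of $\hat{X}$. The only cosmetic difference is that you carry the weight $W$ explicitly before noting it is redundant, whereas the paper drops it immediately since $\hat{X}$ vanishes off edges.
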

\begin{proof}
It follows from the definition of $\delta_0^*$ that
\begin{equation}
\begin{aligned}
\sum_{{\bf p}\in \hat{S}} (\delta_0^* \hat{X})({\bf p}) &=-\sum_{{\bf p}\in \hat{S}}\sum_{{\bf q}\in E} \hat{X}({\bf p},{\bf q})\\
&=-\sum_{{\bf p}\in \hat{S}}\sum_{{\bf q}\in \hat{S}^c} \hat{X}({\bf p},{\bf q})-\sum_{{\bf p}\in \hat{S}}\sum_{{\bf q}\in \hat{S}} \hat{X}({\bf p},{\bf q}) \\
&=-\sum_{{\bf p}\in \hat{S}}\sum_{{\bf q}\in \hat{S}^c} \hat{X}({\bf p},{\bf q}).
\end{aligned}
\end{equation}
since $\hat{X}({\bf p},{\bf q})+\hat{X}({\bf q},{\bf p})=0$ for any ${\bf p},{\bf q}$ and thus $\sum_{{\bf p}\in \hat{S}}\sum_{{\bf q}\in \hat{S}} \hat{X}({\bf p},{\bf q})=0$.
\end{proof}
Using this lemma in \eqref{eq:fluxLastStep}, we obtain
\begin{equation}
\begin{aligned}
0&=-\sum_{{\bf p}\in S}\sum_{\hat{\bf p}\in S^c} X({\bf p},\hat{\bf p}) \\
&=\sum_{{\bf p}^{-m}\in E^{-m}}\sum_{{\bf p}^m\in E^m} {u}^m({\bf r}^{m},{\bf p}^{-m})-{u}^m({\bf p}^{m},{\bf p}^{-m}).
\end{aligned}
\end{equation}
Since ${\bf r}^{m}$ is arbitrary,   it follows that
\begin{equation}
\sum_{{\bf p}^{-m}\in E^{-m}} {u}^m({\bf q}^{m},{\bf p}^{-m})-{u}^m({\bf r}^{m},{\bf p}^{-m})=0.
\end{equation}
for all ${\bf q}^{m}, {\bf r}^{m}\in E^m$.
\end{proof}

\begin{proof}[Proof of Theorem \ref{theo:ZSPH_dim}]
Since ${\cal Z} \oplus {\cal I}= {\cal G}_{{\cal M},E}$, the last column immediately follows from Proposition~\ref{cor:potDimFin} and Theorem~\ref{theo:decompTheoSpace}.
Below,  we present the dimension results  for each row of the table, and the corresponding entries in the first two columns. 

Throughout the proof we denote by $e$ the $h$ dimensional vector of ones.  Since ${\cal N}=\ker D $, the
two-player games in $\cal N$ take the form 
$(e a^T,  b e^T)$ for some
$a, b\in \mathbb{R}^h$. We shall make use of this fact in the proof.


${\cal P}\oplus {\cal H} \oplus{ \cal N}$: Since ${\cal P}\oplus {\cal H} \oplus{ \cal N}={\cal G}_{{\cal M},E}$, it follows that $\dim ({\cal P}\oplus {\cal H} \oplus{ \cal N}) \cap {\cal Z}= \dim {\cal Z} $ and   $\dim ({\cal P}\oplus {\cal H} \oplus{ \cal N}) \cap {\cal I}= \dim {\cal I} $.
Zero-sum games are  games with payoff matrices $(A, -A)$ for some $A\in {\mathbb R}^{h\times h}$.  Thus, the dimension of the zero-sum games is equivalent to the dimension of possible $A$ matrices that define zero-sum games and hence $\dim {\cal Z} = h^2$.
Similarly, identical interest games are games with payoff matrices $(A,A)$ for some $A\in {\mathbb R}^{h\times h}$, and hence $\dim {\cal I} = h^2$.

${\cal P}\oplus {\cal N}$: By Theorem \ref{theo:setOfPot}, it follows that ${\cal P}\oplus {\cal N}$ is equivalent to the set of potential games. Observe that all identical interest games are potential games, where the utility functions of players are equal to the potential function of the game. Thus, it follows that $\dim ({\cal P}\oplus {\cal N}) \cap {\cal I}= \dim {\cal I}= h^2$.

Let $\cal G$ denote a zero-sum game in ${\cal P}\oplus {\cal N}$, with payoff matrices $(A,-A)$, and denote the matrix corresponding to a potential function of $\cal G$ by $\phi$.  Thus, both the game with payoffs  $(A,-A)$  and  $(\phi,\phi)$ belong to ${\cal N}\oplus {\cal P}$, and 
 $(A,-A)$  is different from   $(\phi,\phi)$ by its nonstrategic component.
Hence,  for some $a, b\in \mathbb{R}^h$,
 $A= \phi +  e a^T$, $-A=  \phi  +  b e^T$,  for some $a,b \in \mathbb{R}^h$ and
\begin{equation*}
A-A= \phi +  e a^T+  \phi +    b e^T= 2 \phi +  e a^T +  b e^T =0,
\end{equation*}
thus  $-2 \phi_{ij}= a_j+ b_i$  and $A_{ij}= \phi_{ij} + a_j= \frac{a_j - b_i}{2}$ for all $i,j\in \{1 \dots n\}$. 
Hence, $a,b\in \mathbb{R}^h$  characterize the possible payoff matrices $A$, and it can be seen that   the set of these matrices has dimension  $2h-1$. Since these matrices uniquely characterize zero-sum games that are also potential games, it follows that
the dimension of  $({\cal P}\oplus {\cal N}) \cap {\cal Z}$ is equal to $2h-1$.

${\cal H}\oplus {\cal N}$:  The games in this set do not have potential components.
If a game in ${\cal H}\oplus {\cal N}$ is an identical interest game, then it also belongs to ${\cal P}\oplus {\cal N}$. 
Due to the direct sum property of ${\cal P}\oplus{\cal H}\oplus {\cal N}$, it follows that this game can only have nonstrategic component. Therefore, $\dim ({\cal H}\oplus {\cal N}) \cap {\cal I}= \dim  {\cal N} \cap {\cal I} $. 
Let $\cal G$ denote a game in ${\cal N} \cap {\cal I}$.  Since $\cal G$ has only nonstrategic information it follows that its payoffs are given by $( e a^T,  b e^T)$, for some vectors $a$ and $b$.
Then, being an identical interest game implies that
$e a^T=  b e^T$, which requires that all entries of payoff matrices are identical, thus $\dim {\cal N} \cap {\cal I} = 1$.

Consider a zero-sum game in ${\cal H}\oplus {\cal N}$, with payoff matrices $(A,-A)$. Since, both players have equal number of strategies, the harmonic component of this game is also zero-sum and the  payoff matrices in the harmonic component can be denoted by $(A_H,-A_H)$ for some $A_H \in \mathbb{R}^{h\times h}$. 
Because the original game is in ${\cal H}\oplus {\cal N}$, the payoff matrices satisfy
$A=A_H +  e a^T$, $-A=-A_H+ be^T$,  where $(e a^T, b e^T)$ corresponds to the nonstrategic component of the game. It follows that $ e a^T+  b e^T = 0$,  and hence  $e a^T$ and $- b e^T$ are matrices, which have all of their entries identical.
Thus, the nonstrategic component of  the games in $ ({\cal H}\oplus {\cal N}) \cap {\cal Z}$, forms a $1$ dimensional subspace.
Since  the harmonic component is arbitrary, 
it follows that 
 $\dim ({\cal H}\oplus {\cal N}) \cap {\cal Z} =\dim {\cal H}+ 1= (h-1)^2+ 1= h^2-2h+2$.
\end{proof}

\end{document}